\title{A Simple and Approximately Optimal Mechanism for an Additive Buyer}
\date{} 
\author{Moshe Babaioff\footnote{Microsoft Research, moshe@microsoft.com}, Nicole Immorlica\footnote{Microsoft Research, nicimm@microsoft.com}, Brendan Lucier\footnote{Microsoft Research, brlucier@microsoft.com}, and S. Matthew Weinberg\footnote{Princeton University, smweinberg@princeton.edu. Work done in part while the author was supported by a Microsoft Graduate Research Fellowship and NSF CCF-6923736.}}
\newtheorem{theorem}{Theorem}
\newtheorem{corollary}{Corollary}
\newtheorem{lemma}{Lemma}
\newtheorem{proposition}{Proposition}
\newtheorem{definition}{Definition}
\newtheorem{observation}{Observation}
\newtheorem{openproblem}{Open Problem}
\newcommand{\poly}{\textsc{poly}}
\newenvironment{prevproof}[2]{\noindent {\em {Proof of {#1}~\ref{#2}:}}}{$\Box$\vskip \belowdisplayskip}
\newcommand{\cout}[1]{}
\newcommand{\mattnote}[1]{\textcolor{magenta}{#1}}
\newcommand \BRev{\text{\textsc{BRev}}}
\newcommand \SRev{\ensuremath{\textsc{SRev}}}
\newcommand \PRev{\ensuremath{\textsc{PRev}}}
\newcommand \Rev{\ensuremath{\textsc{Rev}}}
\newcommand \Welfare{\ensuremath{\textsc{Val}}}
\newcommand \Var{\ensuremath{\text{var}}}
\newcommand{\er}{\textsc{ER}}
\begin{document}
\maketitle

\begin{abstract}
We consider a monopolist seller with $n$ heterogeneous items, facing a single buyer.
The buyer has a value for each item drawn independently according to (non-identical) distributions, and her value for a set of items is additive. The seller aims to maximize his revenue. 


We suggest using the {\em a-priori better} of two simple pricing methods: \emph{selling the items separately}, each at its optimal price, and \emph{bundling together}, in which the entire set of items is sold as one bundle at its optimal price.
We show that for {any} distribution, this mechanism achieves a constant-factor approximation to the optimal revenue.
 Beyond its simplicity, this is the first computationally tractable mechanism to obtain a constant-factor
approximation for this multi-parameter problem. We additionally discuss extensions to multiple buyers and
to valuations that are correlated across items.


\end{abstract}

\section{Introduction}
A monopolist seller has a collection of $n$ different items to sell and is facing a single buyer.
How should he\footnote{Throughout this paper we shall refer to the seller as `he/him' and the buyer as `she/her.'} sell the items to maximize revenue given that the buyer's valuation is private and the buyer acts strategically? When there is only a single item, and a single buyer with private value drawn from a known distribution $F$, seminal work of Myerson \cite{Myerson81} shows that the optimal sale protocol is straightforward: simply post a take-it-or-leave-it price $p$, chosen to maximize the expected revenue, $p\cdot(1-F(p))$. This simple deterministic mechanism is optimal among all protocols, including interactive and randomized ones. Myerson's elegant theory  extends to single-item auctions with multiple buyers, and further generalizes to all ``single-dimensional'' domains.\footnote{Note that results for a single buyer also hold when there are many buyers but there are no supply constraints. In particular, they hold for digital goods (like software and music) that can be duplicated essentially for free.} 

Unfortunately, this rich theory fails to extend even to the simplest multi-item settings, where it was shown that optimal auctions might necessitate randomization~\cite{Thanassoulis04} and menus of infinite size~\cite{VincentM07,DaskalakisDT17}, exhibit non-monotonicity~\cite{HartR15}, or be computationally intractable~\cite{DaskalakisDT14} (see Section~\ref{sec:examples} immediately following the introduction for examples illustrating these issues). This is troubling not only from the perspective of analyzing optimal mechanisms, but also from the point of view of their usefulness.  For an auction to be useful in practice, it should be simple to describe and transparent in its execution. 
The danger, then, is that mechanisms that obtain exact revenue optimality in theory, regardless of complexity, might share the fate of other mathematically optimal designs and be rarely used in practice~\cite{Ausubel06}. It is therefore crucial to pair the study of revenue optimization with an exploration of the (approximation) power of simple auctions.  In other words, \emph{what is the relative strength of simple mechanisms versus complex optimal ones?} 

This question was first asked in seminal work of Hartline and Roughgarden~\cite{HartlineR09} for single-item auctions, and by seminal works of Chawla, Hartline, and Kleinberg~\cite{ChawlaHK07} and Hart and Nisan~\cite{HartN17} in multi-item settings. 
Despite making no direct reference to computation, this paradigm is not unlike those underpinning the theory of approximation algorithms: \emph{optimal} algorithms for NP-hard problems often do not yield much tractable insight, yet approximation algorithms for these same problems often do. The formal notion of ``polynomial-time'' generally (but not always) separates insightful algorithms from the rest. Similarly, \emph{optimal} auctions for multi-parameter domains do not yield much tractable insight, yet approximately-optimal auctions might (indeed, the thesis of this paper is that they do). Formal notions of complexity (discussed in Section~\ref{sec:examples}) again generally separate insightful auctions from the rest, but the less formal eyeball-test for simplicity generally does the trick as well.


\subsection{A Single Additive Buyer}
As discussed above, the goal of this work is to deepen our understanding of the problem of revenue optimization in multi-parameter settings through the lens of approximation. We follow Hart and Nisan~\cite{HartN17} in focusing on the simplest such multi-parameter problem: a seller owns $n$ different items and has no value for them (nor cost for production), and a single buyer has non-negative value $v_i$ for each item $i$, and that value is unknown to the seller. The seller only knows that each $v_i$ is sampled independently from a distribution $D_i$. The buyer is interested in maximizing her quasi-linear utility (value received minus price paid), while the seller sets the allocation and payment rules with the goal of maximizing expected revenue.\footnote{Note that the seller is constrained to always offer the buyer the option to ``stay home'' --- receive no items and pay nothing.} Both seller and buyer are risk neutral (the seller aims to maximize the expected payment, and the buyer aims to maximize her expected utility).    


Hart and Nisan consider the case of an \emph{additive} buyer, meaning that her value for a set $S$ of items is $\sum_{i\in S} v_i$.  As they point out, additivity is a natural starting point for this endeavor because, at first glance, the additive multi-item problem appears to be just a product of $n$ separate single-item problems. After all, the buyer's value for item $i$ does not depend at all on which other items she receives: it is always $v_i$, no matter what, by additivity. Moreover, any information the seller possesses about the buyer's values for items other than $i$ tells him absolutely nothing about her value for item $i$: it is always sampled from $D_i$, no matter what, by independence. Therefore, there is truly no interaction between the items whatsoever from the buyer's perspective, and it tempting to think  that 
the optimal mechanism should simply treat this as $n$ separate single-item problems. Myerson's theory tells us exactly how to solve each single-item problem: simply set a price $p_i \in \arg\max\{p \cdot (1-F_i(p))\}$ on each item, and let the buyer purchase whatever subset of items she likes. Somewhat counter-intuitively, this mechanism is actually \emph{not} optimal, even for extremely simple examples.

\paragraph{Example One: Two i.i.d.~Items~\cite{HartN17}.} Consider the case of $n=2$, where the buyer's value for each item is drawn i.i.d.~from the uniform distribution over the finite set $\{1,2\}$. Then treating each item as a separate problem, Myerson's theory would say that the seller should post a price of $1$ on each item, obtaining expected revenue one per item and two in total.\footnote{Note that setting a price of $2$ on each item guarantees the same revenue. The existence of multiple optimal prices doesn't drive the example, and the same phenomenon would happen with e.g. uniform distribution over the set $\{2, 5\}$ instead.} However, there is a better mechanism that offers the buyer only two choices: receive \emph{both} items together for a price of $3$, or nothing at all (for free). With probability $3/4$ the buyer chooses to purchase the bundle and pay 3, yielding expected revenue $9/4 > 2$. So despite the initial intuition, it is indeed suboptimal to solve the problem by stitching together solutions to the separate single-item problems.

To get further intuition behind this example, consider the case of large $n$, where the buyer's value for each item is still drawn i.i.d.\ from the uniform distribution over the set $\{1,2\}$. Then optimally selling each item separately still yields expected revenue of $n$. Consider instead offering again only two options: receive \emph{all} items together for a price of $(3/2-\varepsilon)n$, or nothing at all (for free), for an appropriately chosen $\varepsilon>0$. Now, except with probability exponentially small in $n\cdot \varepsilon^2$, the buyer is indeed willing to purchase the grand bundle, yielding expected revenue approaching $3n/2$ as $n$ grows large (for a multiplicative gap approaching $3/2$). Hart and Nisan~\cite{HartN17} show how to modify this example to exhibit a gap of $\Omega(\log(n))$ by replacing the uniform distribution with an Equal-Revenue distribution.\footnote{The Equal-Revenue distribution has CDF $F(x) = 0$ for $x \leq 1$, and $F(x) = 1-1/x$ for $x \geq 1$. The expected revenue obtained by posting any price $p\geq 1$ is one, thus any two such prices obtain equal revenue.}

What drives this example? Although there is no interaction between the items from the buyer's perspective, this is not the case from the seller's perspective. Indeed, the additional items enriches the strategy space of the seller and enables him to price options for which the buyer's value has much lower variance, allowing the seller to extract more of the buyer's welfare as revenue. At this point, we conclude not only that the multi-item problem is indeed richer than a product of single-item problems, but that treating it as such may come at significant cost: {selling the items separately does not guarantee any constant-factor approximation to the optimal attainable revenue}. 

In this simple example, the seller can gain by combining all the items together into a single grand bundle, to be sold at a take-it-or-leave-it price.
On the other hand, Hart and Nisan further show that this bundling approach can also be highly suboptimal.  Indeed, it guarantees only an $\Omega(n)$ approximation to the optimal revenue in general, considering the case where each value $v_i$ is independently $2^i$ with probability $2^{-i}$ (and $0$ otherwise). In this example, bundling together achieves revenue $O(1)$, while selling separately achieves revenue $n$. We must therefore further conclude that \textbf{neither selling separately nor bundling together guarantees 
a constant fraction of the optimal revenue.} The reader should notice, however, that in the above examples, the superior mechanism witnessing that selling separately does not achieve a constant-factor approximation is bundling together, and vice versa. 


\subsubsection{Further Complexity and the Need for Approximation}
So at this point, it is clear that optimal mechanisms are richer than one might naively expect. Still, one might reasonably hope that optimal mechanisms are not too complex. 
Unfortunately this is not the case, and prior work has identified numerous complexities that the optimal mechanism may possess. We briefly highlight one aspect below (menu complexity), and elaborate on the others (computational intractability, non-monotonicity) in Section~\ref{sec:examples}. 

\paragraph{Example Two: Randomization~\cite{DaskalakisDT14}.} Consider again the case of $n=2$, where the buyer's value for each item is drawn independently, but not identically.  Value $v_1$ is uniformly drawn from $\{1,2\}$, and $v_2$ is uniformly drawn from $\{1,3\}$. Quick calculations confirm that the revenue achieved by treating the items separately is $1+1.5 = 2.5$ (by setting prices of $1$ and $3$, respectively). The revenue achieved by bundling the items together is $3\cdot 3/4 = 2.25$ (by setting a price of $3$ which sells with probability $3/4$). But there is a better mechanism that offers the buyer three choices: (a) receive both items and pay $4$; (b) receive item one with probability one and item two with probability $1/2$ and pay $2.5$; or (c) receive nothing and pay nothing. Then when the buyer has valuations $\langle 1,3\rangle$ or $\langle 2,3 \rangle$, she will choose to pay $4$. When her values are $\langle 2,1 \rangle$ she will choose to pay $2.5$. When her values are $\langle 1, 1 \rangle$ she will choose to pay $0$, for a total expected revenue of $4\cdot 1/2 + 2.5\cdot 1/4 = 2.625 > 2.5$. This randomized mechanism  is in fact the \emph{unique} optimal mechanism\footnote{Note that the mechanism could be needlessly modified to additionally offer, for instance, an option to receive item one for $100$. Such an option would of course never be purchased. By ``unique,'' we mean that every optimal mechanism results in the buyer purchasing both items for $4$ when $v_2 =3$, item one w.p.~one and item two w.p.~$1/2$ when $\langle v_1,v_2\rangle = \langle 2,1\rangle$, and purchasing/paying nothing otherwise.} for this instance, and it beats any deterministic mechanism.

Example Two shows that even for simple instances, optimal mechanisms must sometimes be randomized. Yet, that optimal mechanism uses only one non-trivial lottery, 
so perhaps the degree of randomization required is not \emph{that} bad. Consider further the following example:

\paragraph{Example Three: Uncountable Menu Complexity~\cite{DaskalakisDT17}.} Consider again the case of $n=2$, but this time the buyer's value for each item is drawn i.i.d. from the distribution supported on $[0,1]$ with density $f(x) = 2(1-x)$.\footnote{Note that this is the Beta(1,2) distribution.} Daskalakis et al. prove for this example that the unique (up to differences of measure zero) optimal mechanism has \emph{uncountable menu complexity}~\cite{DaskalakisDT17}. That is, the number of distinct options available for the buyer to purchase is uncountable. They show that the optimal mechanism contains the following four kinds of options: (a) the buyer can receive item one with probability $1$, and item two with probability $\frac{2}{(4-5x)^2}$ paying price $\frac{2-3x}{4-5x} +\frac{2x}{(4-5x)^2}$, for \emph{any} $x \in [0,\approx .0618)$; (b) the buyer can receive item two with probability $1$, and item one with probability $\frac{2}{(4-5x)^2}$ paying price $\frac{2-3x}{4-5x} +\frac{2x}{(4-5x)^2}$, for any $x \in [0,\approx .0618)$; (c) the buyer can receive both items and pay $\approx .5535$; or (d) the buyer can receive neither item and pay nothing.

One can reasonably debate the finer points on the above object's mathematical tractability --- on one hand it can at least be succinctly described, on the other hand it offers an uncountable menu --- but what is made clear by this example is that any theory of optimal multi-item mechanisms must include a study of such auctions, which are significantly more complex than their single-item counterparts. At this point we must further conclude that \textbf{the study of truly simple multi-item auctions is unlikely to develop through a theory of \emph{exactly} optimal auctions.} 

\subsection{Our Results} 

\subsubsection{Main Result}
Let us now return to selling separately and bundling together, arguably the two simplest multi-item auctions,\footnote{They are both formally a ``black-box reduction'' to the single-item case.} neither of which guarantees a constant-fraction of the optimal revenue in all instances. Our main result is that the \emph{a priori maximum} of the revenue generated by these two approaches --- selling separately or bundling together --- does indeed guarantee a constant-factor approximation to the optimal revenue.  

\bigskip

\noindent
\textbf{Main Result} (Informal).
\emph{
In any market with a single additive buyer with independent item values, either selling separately or bundling together guarantees a 6-approximation to the optimal revenue.
}
\bigskip

Prior to our work, it was unknown whether \emph{any} simple mechanism achieves a constant-factor approximation to the optimal revenue, let alone one of these especially simple mechanisms. Moreover, note that our work further implies (Appendix~\ref{app:polytime}) that even if the seller does not know the distributions exactly but rather only has some reasonable access to them, a constant-factor approximation can be found in poly-time. Prior to our work, it was unknown whether a constant-factor approximation could be obtained by \emph{any} poly-time mechanism, even without the additional restriction of simplicity. Further, as selling separately and bundling together are both deterministic, our work further shows a constant multiplicative gap between the optimal deterministic and randomized mechanisms. Prior to our work, it was also unknown whether any deterministic mechanism could provide such a guarantee, even without the additional restriction of simplicity. Additionally, both mechanisms have desirable robustness properties (overviewed further in Section~\ref{sec:related}). 



\paragraph{Brief Intuition.} To get a sense of why our main result holds, recall the discussion following Example One with $n$ i.i.d.\ distributions.  We noted that if the sum of the buyer's values for all the items tends to concentrate around its expectation, then bundling together (at the right price --- slightly below the expected value) will extract a significant fraction of this total value as revenue.  The bundling-together mechanism will therefore be approximately optimal when the total value concentrates.  On the other hand, since the total value is a sum of independent random variables, 
it \emph{will} concentrate unless the sum is dominated by rare events where one or more items have significantly higher-than-expected value.  How should we handle such cases?  Intuitively, if these ``tail'' events are indeed rare, then we are unlikely to see many of them at once and the optimal revenue will be driven by a small number of individual items.  Since selling items separately optimizes the revenue from each individual item by itself, the sell-separately mechanism is a good candidate for covering this tail case.

To make this intuition concrete, a surprisingly challenging question is to properly define ``concentration'' and ``tail events.'' For example, for $n$ i.i.d. Equal Revenue distributions, the expected value for each item is infinite, yet the expected optimal revenue (of selling all items optimally) is finite. In particular, as bundling together (but \emph{not} selling separately) is approximately optimal for $n$ i.i.d. Equal Revenue distributions, we would like our argument to claim that the sum of $n$ i.i.d. Equal Revenue distributions ``concentrates,'' despite the fact that its expected value is infinite. Challenges similarly arise when trying to properly define ``tail events.''

This led Li and Yao to develop a Core-Tail Decomposition~\cite{LiY13}, which proposes adequate definitions for tail events, and formally separates the analysis into cases where a tail event occurs (``the tail'') and the rest (``the core''). One key difference between our approach and that of Li and Yao is the definition of tail events, and we briefly highlight a top-level distinction here.  Consider starting from an arbitrary instance with $n$ items, and adding to that instance $n'$ items whose values are drawn from a point-mass at $0$. Clearly, this does not change the underlying instance. Yet, this modification \emph{does} change the definitions for tail events proposed in~\cite{LiY13}, thereby changing the analysis. Our choice of decomposition has the property that the analysis is invariant under this modification, as discussed in Section~\ref{sec:one-ind}, and our approach enables a tighter analysis (and thereby improving their approximation guarantees as well --- see Section~\ref{sec:related}). All further distinctions, including formal definitions, are deferred to the technical sections.

\subsubsection{Additional results}
Beyond the core single-buyer setting, we consider several extensions using similar techniques. First we consider the \emph{multi-buyer setting}, where each buyer's value for each item is drawn independently (not necessarily identically) from arbitrary distributions. Here, we show again that selling separately (that is, sell each item using Myerson's optimal single-item auction) achieves an $O(\log n)$-approximation. Note that this is asymptotically tight, even for the case of a single buyer (via the modification of Example One in~\cite{HartN17}). Prior to our work, no non-trivial bounds were known on the approximation guarantee of any class of mechanisms for this setting. 

We further show that in contrast to the single-buyer case, the better of selling separately and bundling together (treating the grand bundle of all items as a single item and running Myerson's optimal auction) does \emph{not} guarantee an $o(\log n)$ approximation in the multi-buyer setting.  We further extend this lower bound to the class of \emph{partition mechanisms}, which partition the items into disjoint subsets and run Myerson's optimal auction for each subset separately (generalizing both selling separately and bundling together). A preliminary version of this paper presented at FOCS 2014 left as its main open problem whether a simple (and/or deterministic, poly-time) mechanism could guarantee a constant-factor approximation in the multi-buyer case, which was resolved by Yao~\cite{Yao15} (discussed further in Section~\ref{sec:related}). 

Finally, in order to better understand this new class of partition mechanisms, we study the performance of selling separately and bundling together against the optimal partition mechanism in a variety of settings. For multiple buyers with independent items, we show that the better of selling separately and bundling together achieves a constant-factor approximation to the optimal partition mechanism when \emph{either} buyers (Theorem~\ref{thm:previid}) \emph{or} items (Theorem~\ref{thm:previiditems}) are i.i.d., but no better than an $\Omega(\log n)$-approximation in general (Proposition~\ref{prop:lb-prev-max}). We consider also the single buyer case when values for the items may be \emph{arbitrarily correlated}. While neither class of mechanisms can guarantee any non-zero fraction of the optimal revenue for even a single buyer (\cite{BriestCKW15,HartN13}), the question remains as to whether simple mechanisms can approximate more complex (though still suboptimal) mechanisms in the presence of correlation. To this end, we prove that selling items separately obtains an $O(\log n)$-approximation to the optimal obtainable revenue by a partition mechanism, and that this is tight. In fact, we show a gap of $\Omega(\log n)$ between the \emph{a priori} better of selling separately and bundling together, versus the optimal partition mechanism. We include several tables in Appendix~\ref{app:tables} displaying the relative power of the various classes of mechanisms studied in this paper, noting here that as of our work, all upper and lower bounds are (asymptotically) matching.

\section{Complexity of Optimal Multi-Item Mechanisms}\label{sec:examples}
Here, we'll describe (without proofs) several known examples from the literature that further demonstrate the complexity of optimal multi-item auctions, even with just a single buyer whose value for the items is additive. In each, we'll use $D_i$ to denote a distribution, and $F_i$ its CDF.

\paragraph{Example Four: Revenue Non-Monotonicity~\cite{HartR15}.} Consider two distributions $D = \times_{i\in [n]} D_i$, and $D^+ = \times_{i\in [n]} D_i^+$, where each $D_i^+$ \emph{stochastically dominates} $D_i$ (that is, $F_i(x) \geq F^+_i(x)$ for all $i, x$). This implies that draws $(v, v^+)$ from $(D, D^+)$ can be coupled so that $v(S) \leq v^+(S)$ for all $S$ with probability $1$. It seems natural to conjecture that the optimal achievable revenue of $D^+$ should exceed that of $D$, as this is true for $n=1$ (the single item case) and the proof is a trivial corollary of~\cite{Myerson81}.\footnote{Specifically, for any price $p$ that might be set, $p \cdot (1-F^+(p)) \geq p \cdot (1-F(p))$.} However, Hart and Reny provide explicit distributions $D, D^+$ \emph{with i.i.d. marginals for two items} such that the optimal achievable revenue for $D$  \emph{strictly exceeds} that for $D^+$~\cite{HartR15}. That is, the optimal revenue for a ``strictly better'' distribution is strictly worse. This property has been termed \emph{revenue non-monotonicity}, and provides further evidence of the complexity of optimal multi-item auctions. Indeed, one could imagine $D^+$ resulting from $D$ after an advertising campaign which increases the values of all consumers in a population for all items. Revenue non-monotonicity implies that such a campaign may not only harm the revenue of an existing auction, but also harm the optimal achievable revenue (and even in cases where $D, D^+$ are both i.i.d. over two items).

\paragraph{Example Five: Computational Intractability~\cite{DaskalakisDT14}.} Consider being given as input a discrete product distribution over valuations for $n$ items, where each value $v_i$ is drawn from $D_i$ and the support of each $D_i$ is of size \emph{two}. That is, each $v_i$ is either $a_i$ or $b_i$, and is $a_i$ with probability $p_i$, all of which are rational and of bit complexity $\poly(n)$. Then the entire input can be described by these $3n$ numbers, and has size $\poly(n)$. It is somewhat tricky to formalize exactly what it should mean to ``find'' the optimal mechanism (since it may, for instance, have exponential menu complexity), but Daskalakis et al.~\cite{DaskalakisDT14} prove that it is \#P-hard to find the optimal auction in this setting in the following strong sense: unless $\text{ZPP} \supseteq \text{P}^{\text{\#P}}$, no randomized poly-time procedure can take as input $(\vec{a},\vec{b},\vec{p})$, and a further input valuation vector $\vec{v}$, and guess (correctly with probability at least $1/2+1/\poly(n)$) whether the optimal mechanism awards the buyer with valuation $\vec{v}$ item one with probability $0$ or $1$ (if neither is true, the algorithm is allowed to behave arbitrarily). This rules out any reasonable poly-time solution, as any notion of a ``solution'' must be able to determine which items are purchased by a given realized $\vec{v}$. 

\paragraph{Example Six: Infinite Gaps with Correlation~\cite{BriestCKW15, HartN13}.} Finally, consider the case that the buyer's values for the items are \emph{correlated}. That is, there is an arbitrary $n$-dimensional distribution $D$ over $\mathbb{R}_+^n$, and the buyer's values for all $n$ items are drawn jointly from $D$. In this case, Hart and Nisan~\cite{HartN13} provide an explicit distribution $D$ over $n=2$ items such that the revenue of the optimal mechanism is \emph{infinite}, yet the revenue of any mechanism of menu complexity $C$ is at most $C$. It was further observed that this same $D$ is stochastically dominated by some $D^+$ (that is, couples $(\vec{v},\vec{v}^+)$ can be drawn from $(D, D^+)$ so that $v(S) \leq v^+(S)$ for all $S$ with probability $1$) such that the optimal revenue for $D$ is infinite, yet the optimal revenue for $D^+$ is at most one~\cite{RubinsteinW15}.\footnote{In fact, their $D^+$ simply draws $(v_1,v_2)\leftarrow D$ and sets $(v_1^+,v_2^+) = (\max\{v_1,v_2\},\max \{v_1,v_2\})$.}

Further related work is described in Section~\ref{sec:related}, but we have surveyed  
these examples 
to emphasize the following themes:
\begin{enumerate}
\item Examples One, Two, and Five highlight that the optimal mechanism may be surprisingly complex, even in extremely simple examples. This suggests that we are unlikely to make progress by simply restricting the allowable input distributions.
\item Examples Three, Four, and Five highlight the three main complexities commonly associated with optimal auctions that are viewed as impractical: unbounded menu complexity, non-monotonicity, and computational intractability. Note that in contrast, the approximately-optimal \emph{a priori} maximum of selling separately and bundling together is deterministic, has polynomial menu complexity,\footnote{Technically, selling separately has exponential menu complexity, but~\cite{BabaioffGN17} prove that one can get arbitrarily close to the revenue of selling separately with polynomial menu complexity.} is revenue-monotone,\footnote{To conclude revenue-monotonicity, observe that both selling separately and bundling together are products of single-item auctions, and therefore revenue-monotone. The maximum of revenue-monotone mechanisms is also revenue-monotone.} and implementable in poly-time.
\item In addition to the discussion preceding Example One, Example Six further motivates restricting attention to independent items, as even approximately optimal mechanisms can be arbitrarily complex without this.
\end{enumerate}

\section{Related Work}\label{sec:related}
\subsection{Prior Work}
Seminal work of Hartline and Roughgarden initiated the agenda of ``simple versus optimal'' mechanisms: the study of simple mechanisms through the lens of approximation~\cite{HartlineR09}. Their work considers single-dimensional settings, and shows that one can often approximate the revenue of Myerson's optimal auction with something even simpler. This agenda has even more bite in multi-dimensional settings, where optimal auctions are \emph{far} more complex (c.f. the above examples). On this front, seminal work of Chawla, Hartline, and Kleinberg considers a single unit-demand buyer,\footnote{A valuation function $v(\cdot)$ is unit-demand if $v(S) = \max_{i \in S}\{v(\{i\})\}$ for all $S$.} and prove that a deterministic item-pricing guarantees a constant-factor approximation to the optimal mechanism~\cite{ChawlaHK07}. Further follow-up work considers multiple buyers, but faces barriers in moving beyond unit-demand preferences~\cite{ChawlaHMS10, ChawlaMS10,ChawlaMS15, KleinbergW12}. 

Hart and Nisan first proposed studying a single additive buyer through the lens of approximation, as this is the simplest possible setting where all previously developed tools remained stuck~\cite{HartN17}. Their work provides several simple lemmas, whose composition proves surprisingly strong conclusions: they show that selling separately achieves an $O(\log^2 n)$-approximation, and further that when the items are i.i.d., bundling together achieves an $O(\log n)$-approximation. Follow-up work of Li and Yao introduces the \emph{Core-Tail Decomposition} technique, improving the guarantee of selling separately to $O(\log n)$ (which is tight) and the guarantee of bundling together when all items are i.i.d. to $O(1)$~\cite{LiY13}. Our work makes use of tools developed in both works, and improves the approximation guarantee to $6$ without any assumptions.

A related sequence of papers~\cite{CaiD11, CaiH13} use extreme value theorems to prove that when distributions satisfy the Monotone Hazard Rate (MHR) condition, nearly-optimal mechanisms can be found in poly-time and are fairly simple. 

\subsection{Subsequent Work}
An initial presentation of this work at FOCS 2014 posed three open problems, all of which have since been resolved (and then some). The first asked whether a simple mechanism could guarantee a constant-factor approximation for multiple additive buyers (as our lower bounds prove that partition mechanisms cannot achieve this guarantee), which was resolved by Yao~\cite{Yao15}. The main conceptual discovery is ``the right'' extension of bundling together for multiple buyers, which turns out to be an \emph{entry fee}. In the context of our work, bundling together can be interpreted as the mechanism which gives the buyer all items for free, as long as the buyer pays an entry fee to participate.\footnote{That is, there is an entry fee $p$. The buyer, with full knowledge of her valuation, decides whether or not to pay $p$. If she pays, she participates in the mechanism, which gives her all items for free. If not, she leaves with no items and pays nothing.}%
~\cite{Yao15} proves that the proper extension of our results to multiple buyers is to replace the ``give the buyer all items for free'' mechanism with the VCG mechanism (which sells each item separately using a second-price auction), but maintain the entry fee.\footnote{Now, the entry fee $p_j$ for buyer $j$  depends on the valuations of others (yet is still independent of the valuation of buyer $j$).} This idea persists in follow-up works (sometimes further replacing the VCG mechanism with other simple mechanisms)~\cite{ChawlaM16, CaiDW16, CaiZ17}. 


The second asked whether a simple mechanism could guarantee a constant-factor approximation for a single buyer who was neither unit-demand nor additive (e.g. $k$-demand, satisfying $v(S) = \max_{T \subseteq S, |T| \leq k} \{\sum_{i \in T} v(\{i\})\}$ for all $S$). This was resolved in~\cite{RubinsteinW15}, who show again that either selling separately or bundling together achieves a constant-factor approximation even when the buyer has \emph{subadditive} valuations over independent items.\footnote{Roughly speaking, ``independent items'' means that the random variables $v(S_1),\ldots, v(S_\ell)$ are independent whenever $S_i \cap S_j = \emptyset$ for all $i, j$. The formal definition is slightly (but strictly) more restrictive than this.} Further follow-up work extends this to models of limited complementarity~\cite{EdenFFTW17a}. 

The third asked whether our results could be extended to models of limited correlation, such as those considered in~\cite{ChawlaMS10,ChawlaMS15}. This was resolved in~\cite{BateniDHS15}, who show essentially that our approach is robust to linear combinations: if there are $k$ independent \emph{features} that a buyer might value, and the value of each item is a fixed linear combination of few features, then again selling separately or bundling together achieves a constant-factor approximation.

However, work has continued far beyond these specific open problems. Most notably, work of Chawla and Miller~\cite{ChawlaM16} and Cai and Zhao~\cite{CaiZ17} consider multiple buyers, all of whom are neither unit-demand nor additive, and prove that a posted-price mechanism with entry fee achieves a constant-factor approximation (i.e., posting prices is ``the right'' extension of selling separately, and adding an entry fee is ``the right'' extension of bundling together).~\cite{CaiZ17} is the state-of-the-art in this direction, which shows that these mechanisms guarantee a constant-factor approximation when buyers are fractionally subadditive over independent items, and an $O(\log n$)-approximation when buyers are subadditive over independent items. Cai et al.~\cite{CaiDW16} further show how to interpret our work and those following~\cite{ChawlaHK07} via the same dual solution in a duality framework. This unified presentation of both lines of work is also given in~\cite{hartlinebook}.

There are also numerous follow-up works in tangential directions, establishing that the mechanisms studied in these lines of work are quite robust: works of~\cite{MorgensternR16, BalcanSV16, BalcanSV18, CaiD17, Syrgkanis17, AzarKW19} prove that these mechanisms can be learned using only polynomially many samples from the underlying distributions, and~\cite{GoldnerK16} further shows that they can be made \emph{prior-independent} at the cost of additional constant factors. In a similar vein,~\cite{ChenLL17} proves that these results are fairly robust to the Bayesian assumption, and their guarantees hold when the buyers (and not the auctioneer) know the prior.~\cite{ChengGMW18} extend our analysis to accommodate a budget-constrained buyer.~\cite{Rubinstein16} provides a PTAS for the optimal partition mechanism (and proves that no FPTAS exists unless P = NP). Works of~\cite{Carroll17, GravinL18} target max-min guarantees (i.e. the auction which maximizes the minimum achieved revenue over all $D$ in some class $\mathcal{D}$) instead of worst-case approximation guarantees. Our tools have also found use in seemingly unrelated follow-ups studying gains from trade~\cite{BrustleCWZ17, BabaioffCGZ18}, information revelation~\cite{DaskalakisPT16, FuLLT18, ElmachtoubH17, CaiEFLWZ18}, and ``Bulow-Klemperer''-type~\cite{BulowK96} results~\cite{EdenFFTW17b, LiuP18,FeldmanFR18, BeyhaghiW19}. 

\subsection{Recent Work and Open Directions}
\paragraph{Nearly-simple and Nearly-optimal mechanisms.} Our work shows that a constant-factor approximation\footnote{We have originally proved that this constant is at most 7.5, and soon after, Rubinstein has tightened the analysis yielding an upper bound of 6. 
	The current best analysis show that this constant is at most $5.2$~\cite{MaS15}, yet at least $2$~\cite{Rubinstein16}.} can be found in poly-time, and implies that a constant-factor approximation can be achieved with polynomial menu complexity (the latter claim further requires a result of~\cite{BabaioffGN17}). Works discussed earlier prove that the optimum cannot be found in poly-time~\cite{DaskalakisDT14}, and may have uncountable menu complexity~\cite{DaskalakisDT17}, but do not rule out even an FPTAS. Works of~\cite{ChenDPSY14, ChenDOPSY15, ChenMPY18} identify similar complexity barriers for related problems. The major open question here is determining whether or not an FPTAS(/PTAS/QPTAS) exists. Extremely recent work now provides a QPTAS for a single unit-demand buyer~\cite{KothariMSSW19}, but the entire spectrum still remains open for additive buyers (along with providing/ruling out a FPTAS/PTAS for unit-demand). The equally significant question for menu complexity also remains open --- on this front, only recently was it shown that \emph{some bounded} (as a function of $n, \varepsilon$) menu complexity suffices to guarantee a $(1-\varepsilon)$-approximation~\cite{BabaioffGN17}, and the first non-trivial lower bounds were proved for the case of $n=2$~\cite{Gonczarowski18}. 

\paragraph{All the way to Subadditive.} Recent work discussed above proves that a posted-price mechanism with entry-fee guarantees a constant-factor approximation when buyers are \emph{fractionally subadditive with independent items}~\cite{CaiZ17}, and further shows that posted-price mechanisms alone guarantee an $O(\log n)$-approximation when buyers are subadditive. The key barrier to obtaining a constant-factor approximation for all subadditive valuations is the following: one portion of the analysis in~\cite{CaiZ17} makes use of an analysis in~\cite{FeldmanGL15} of posted-price mechanisms for \emph{welfare} guarantees. The analysis of~\cite{FeldmanGL15} guarantees an $O(1)$-approximation to the optimal welfare for fractionally subadditive buyers, and an $O(\log n)$-approximation when buyers are subadditive. It remains an open question whether the~\cite{FeldmanGL15} analysis can be improved to $O(1)$ for subadditive buyers (explicitly posed in~\cite{FeldmanGL15}), which would very likely extend~\cite{CaiZ17} to subadditive buyers as well.\footnote{Note that~\cite{CaiZ17} does not treat~\cite{FeldmanGL15} as a black-box, but analysis ``in the same spirit'' as~\cite{FeldmanGL15} would suffice.} Independently, it is generally an important open problem to extend~\cite{CaiZ17} all the way to subadditive valuations (explicitly posed in~\cite{CaiZ17} --- see therein for a deeper discussion).

\section{Preliminaries}\label{sec:prelim}
The setting we consider is that of a single monopolist seller with $n$ heterogeneous and indivisible items for sale to $m$ additive, risk-neutral, quasi-linear buyers.  That is, each buyer $j$ has a non-negative value $v_{ij}$ for item $i$.  While our main results are for the setting of a single buyer, we will define our setting more generally; this will be useful when discussing extensions.  If a randomized outcome awards buyer $j$ item $i$ with probability $\pi_{ij}$ and charges her a price $p_j$ in expectation, then her utility for this outcome is $\sum_i v_{ij} \pi_{ij} - p_j$. Each value $v_{ij}$ is sampled independently from a known distribution $D_{ij}$, supported on $\mathbb{R}_{\geq 0}$. We make no assumptions on $D_{ij}$ whatsoever. We refer to $D$ as the joint $(m\cdot n)$-dimensional distribution over all buyers' values for all items, $D_i$ as the $m$-dimensional distribution over all buyers' values for item $i$, and $D^j$ as the $n$-dimensional distribution over buyer $j$'s values for all items. Furthermore, we denote by $\vec{v}$ a random sample from $D$, $\vec{v}_i$ a random sample from $D_i$, and $\vec{v}^j$ a random sample from $D^j$.
We also denote the maximum value for item $i$ as $v^*_i = \max_j\{v_{ij}\}$.
We use the notation $D^{-j}$ to denote the distribution for all buyers but $j$ (and use similar notation for other vectors).

The revelation principle~\cite{Myerson81} establishes that the optimal revenue of \emph{any} (not necessarily truthful) auction at \emph{any} Bayes-Nash equilibrium is captured by a direct mechanism that is Bayesian Incentive Compatible (and we will therefore restrict attention to optimal mechanisms of this form, as it is w.l.o.g.). That is, this mechanism simply asks each buyer to report a value for each item, and it is in each buyer's interest to report their true value, assuming that all other buyers do so as well. All of the mechanisms we describe will also satisfy the stronger property of Dominant Strategy Incentive Compatibility, where it is in each buyer's interest to report their true values no matter the other buyers' behavior (observe further that these definitions coincide when there is just $m=1$ buyer). As usual, we also impose the individual rationality constraint, saying that every buyer's utility is non-negative when truthful. Formally:
\begin{itemize}
\item For a given direct mechanism, let $x_{ij}(\vec{v})$ denote the probability that item $i$ is awarded to buyer $j$ on bids $\vec{v}$, and $p_j(\vec{v})$ denote the price that buyer $j$ pays. Define also the interim variables $\pi_{ij}(\vec{v}^j) = \mathbb{E}_{\vec{v}^{-j}\leftarrow D^{-j}}[x_{ij}(\vec{v}^{-j};\vec{v}^j)]$, and $q_j(\vec{v}^j) = \mathbb{E}_{\vec{v}^{-j}\leftarrow D^{-j}}[p_j(\vec{v}^{-j};\vec{v}^j)]$. When there is just $m=1$ buyer, observe that $x = \pi$. 
\item A direct mechanism is \emph{Bayesian Incentive Compatible} (BIC) if for all $j, \vec{v}^j,\vec{w}^j$, it holds that  $\sum_i v_{ij} \cdot \pi_{ij}(\vec{v}^j) - q_j(\vec{v}^j) \geq \sum_i v_{ij}\cdot \pi_{ij}(\vec{w}^j) - q_j(\vec{w}^j)$.
\item A direct mechanism is \emph{Dominant Strategy Incentive Compatible} (DSIC) if for all $j, \vec{v}^j, \vec{w}^j, \vec{v}^{-j}$,  it holds that  $\sum_i v_{ij} \cdot x_{ij} (\vec{v}^{-j};\vec{v}^j) - p_j(\vec{v}^{-j};\vec{v}^j) \geq \sum_i v_{ij} \cdot x_{ij}(\vec{v}^{-j};\vec{w}^j) - p_j(\vec{v}^{-j};\vec{w}^j)$. When there is just $m=1$ buyer, observe that this is equivalent to BIC.
\item A direct mechanism is \emph{Ex-Post Individually Rational} (IR) if for all $j, \vec{v}^j, \vec{v}^{-j}$,  it holds that \\ $\sum_i v_{ij} \cdot x_{ij} (\vec{v}^{-j};\vec{v}^j) - p_j(\vec{v}^{-j};\vec{v}^j) \geq 0$. 
\end{itemize}



We further use the following terminology to discuss the revenue obtainable by various types of mechanisms, where the first three are taken from~\cite{HartN17}. Below, we reference Myerson's optimal single-item auction, which for $m=1$ buyer simply sets the price $p^* = \arg\max_{p} \{p \cdot \Pr[v \geq p]\}$~\cite{Myerson81}. The precise format of Myerson's auction for $m>1$ buyers is immaterial for our results, as we use no properties of this auction in our proofs other than its optimality.


\begin{itemize}
\item $\Rev(D)$: The optimal revenue (more precisely, the supremum of revenues) obtained by any (possibly randomized) BIC/IR mechanism when the buyer profile is drawn from $D$. 
\item $\SRev(D)$: The optimal revenue (more precisely, the supremum of revenues) obtained by selling items separately when the buyer profile is drawn from $D$. That is, the revenue obtained by running Myerson's optimal auction separately for each item. Recall that when there is just $m=1$ buyer, $\SRev(D)$ is achieved by setting a price $p_i$ on each item $i$, and letting the buyer pick any subset of items to purchase.
\item $\BRev(D)$: The optimal revenue (more precisely, the supremum of revenues) obtained by selling the grand bundle when the buyer profile is drawn from $D$. That is, the revenue obtained by running Myerson's optimal auction when treating the grand bundle as a single item. Recall that when there is just $m=1$ buyer, $\BRev(D)$ is achieved by setting a price $p$ on the grand bundle, and letting the buyer purchase the grand bundle for price $p$. 
\item $\PRev(D)$: The optimal revenue (more precisely, the supremum of revenues) obtained by any \emph{partition} mechanism when the buyer profile is drawn from $D$. That is, the maximal revenue obtained by first partitioning the items into disjoint bundles, and then running Myerson's optimal auction separately for each bundle, treating each bundle as a single item.
\end{itemize}

Observe that some of the terms above may not be well-defined if $\Rev(D)$ is unbounded. In this case, our proofs establish that $\SRev(D), \BRev(D), \PRev(D)$ are all unbounded as well (although we will not explicitly state these conclusions).  Also, for ease of exposition, we will only consider the case when the supremum of revenues is actually achieved by some mechanism, which we will refer to as the ``optimal mechanism.''  While we do not explicitly discuss cases where the supremum is not realized but is instead the limit of a sequence of mechanisms, we note that all of our proofs carry over by standard limiting arguments.

Observe also that selling separately and bundling together are both partition mechanisms, and that all partition mechanisms are in fact DSIC. Given a distribution $D$ over profiles, we will often consider the welfare $\sum_i v^*_i = \sum_i \max_j\{v_{ij}\}$ of a buyer profile $\vec{v}$ drawn from $D$. We will write $\Welfare(D)$ for the expected optimal welfare, so that $\Welfare(D) = \mathbb{E}_{\vec{v} \sim D}\left[\sum_i v^*_i\right]$.  We will also write $\Var(D) = \Var_{\vec{v} \sim D}(\sum_i v^*_i)$ for the variance of the welfare. Observe that, immediately from Individual Rationality, $\Rev(D) \leq \Welfare(D)$ for all $D$.

We will make use of some results from~\cite{HartN17} that provide useful bounds on $\Rev(D)$. We include proofs in Appendix~\ref{app:prior} for completeness. Lemma~\ref{lem:HN1} is stated and proved directly in~\cite{HartN17} and also in~\cite{CaiH13}. Lemma~\ref{thm:HN} is not directly stated nor proved, but is similar to an implicit result from~\cite{HartN17}.

In Lemma~\ref{lem:HN1} below, we think of $D$ and $D'$ as being distributions over values for disjoint sets of items, for the same set of $m$ buyers.  The distribution $D \times D'$ then draws values for those two sets of items, independently, from $D$ and $D'$ respectively.

\begin{lemma}~\label{lem:HN1}(\cite{HartN17,CaiH13}) $\Rev(D \times D') \leq \Welfare(D) + \Rev(D')$.
\end{lemma}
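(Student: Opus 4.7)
The plan is to take the optimal mechanism $M$ for $D \times D'$ and split the revenue it extracts into two conceptual pieces: the value it extracts from the $D$-items (which is trivially bounded by the welfare $\Welfare(D)$) and the residual payment, which I will show is itself the revenue of a legitimate mechanism defined only on $D'$.

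First I would set up notation. Let $M$ be the revenue-optimal BIC, IR mechanism for $D \times D'$; on type profile $(\vec v, \vec v')$, write $\pi_{ij}(\vec v,\vec v')$ and $\pi'_{ij}(\vec v,\vec v')$ for buyer $j$'s allocation probabilities on $D$-items and $D'$-items respectively, and $p_j(\vec v,\vec v')$ for buyer $j$'s expected payment. Define the random quantity $A(\vec v,\vec v') := \sum_j \sum_i v_{ij}\,\pi_{ij}(\vec v,\vec v')$, the total value $M$ gives out in $D$-items. Since $\sum_j \pi_{ij} \leq 1$ and $v_{ij} \leq v^*_i$, we have $A \leq \sum_i v^*_i$ pointwise, so $\mathbb{E}[A] \leq \Welfare(D)$. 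Then decompose
\[
\Rev(D\times D') \;=\; \mathbb{E}\!\left[\sum_j p_j\right] \;=\; \mathbb{E}[A] \;+\; \mathbb{E}\!\left[\sum_j p_j - A\right].
\]
The first term is at most $\Welfare(D)$, so it suffices to show $\mathbb{E}[\sum_j p_j - A] \leq \Rev(D')$.

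Next I would construct a mechanism $M^\star$ on $D'$ whose expected revenue is exactly $\mathbb{E}[\sum_j p_j - A]$. On input $\vec v'$, mechanism $M^\star$ internally samples $\vec v \sim D$, runs $M$ on $(\vec v,\vec v')$, allocates to each buyer $j$ the $D'$-items according to $\pi'_{\cdot j}(\vec v,\vec v')$, and charges buyer $j$ the quantity $p_j(\vec v,\vec v') - \sum_i v_{ij}\pi_{ij}(\vec v,\vec v')$. I would then verify the two standard requirements: individual rationality follows directly, since the utility of buyer $j$ under $M^\star$ equals their utility under $M$ on the augmented type $(\vec v_j,\vec v'_j)$, which is nonnegative by IR of $M$; Bayesian incentive compatibility follows because any deviation $\hat{\vec v}'_j$ in $M^\star$ corresponds to reporting $(\vec v_j,\hat{\vec v}'_j)$ in $M$, and averaging the BIC inequality for $M$ over $\vec v_j \sim D_j$ (which is what $M^\star$ does internally) preserves the inequality. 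Hence $M^\star$ is a feasible mechanism on $D'$, so its revenue $\mathbb{E}[\sum_j p_j - A]$ is at most $\Rev(D')$, completing the proof.

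The only mildly delicate step is checking BIC of $M^\star$: one must be careful that buyer $j$'s internal sample $\vec v_j$ in $M^\star$ really is drawn from the same marginal $D_j$ that appears in $M$'s BIC condition, and that the independence across $D$ and $D'$ is what allows this marginalization to go through. Everything else is bookkeeping; the pointwise bound $A \leq \sum_i v^*_i$ and the IR decomposition are immediate.
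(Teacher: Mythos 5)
Your proof is correct and is essentially the paper's argument: the paper proves this via its ``Marginal Mechanism'' lemma (Lemma~\ref{lem:HN0}), which constructs exactly your $M^\star$ --- sample fake values $\vec v\sim D$, run $M$, and hand out the $D$-items as cash --- then specializes to the independent case where $\Rev(D'\mid\vec v)=\Rev(D')$. Your decomposition of revenue into $\mathbb{E}[A]\leq\Welfare(D)$ plus the residual, and the IR/BIC verification by averaging over the internal sample $\vec v_j\sim D_j$, match the paper's reasoning step for step.
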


The next result establishes a weak bound on $\Rev(D)$ with respect to $\SRev(D)$.

\begin{lemma}\label{thm:HN} $\Rev(D) \leq n\cdot m\cdot \SRev(D)$.
\end{lemma}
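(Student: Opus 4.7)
The plan is to use a two-step reduction: first decompose the optimal revenue over buyers, then invoke a single-buyer revenue bound and compare $\SRev$'s.

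\textbf{Step 1 (per-buyer decomposition).} Let $M^*$ be the optimal truthful, IR mechanism achieving $\Rev(D)$, with per-buyer expected payments $P_1^*,\ldots,P_m^*$, so that $\Rev(D)=\sum_j P_j^*$. For each buyer $j$, I consider the induced single-buyer mechanism $\widetilde{M}_j$: on receiving $\vec v_j$ from buyer $j$, sample $\vec v_{-j}$ from the prior $D_{-j}$ and run $M^*$. Because $D$ has independent components across buyers, this is a valid BIC, IR mechanism for a single additive buyer with $n$ independent item values drawn from $D_{\cdot j}:=\prod_i D_{ij}$, and by construction its expected revenue is $P_j^*$. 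Therefore $P_j^*\le \Rev(D_{\cdot j})$.

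\textbf{Step 2 (single-buyer bound and comparison).} I next invoke the single-buyer bound
\[
\Rev(D_{\cdot j})\le n\cdot \SRev(D_{\cdot j}).
\]
This is a (loose) standard bound for a single additive buyer; I would derive it by iterating Lemma~\ref{lem:HN1} to peel items off one at a time, at each step using individual rationality to convert the welfare term of a singleton distribution into a multiple of the corresponding single-item Myerson revenue. Then I compare to $\SRev(D)$: since Myerson's optimal single-item mechanism for item $i$ with all $m$ buyers can always simulate the single-buyer optimum on buyer $j$ (by ignoring the other buyers), we have $\text{Myerson}(D_i)\ge r_{ij}:=\max_p p\Pr[v_{ij}\ge p]$ for every $j$, and hence $\SRev(D_{\cdot j})=\sum_i r_{ij}\le \sum_i \text{Myerson}(D_i)=\SRev(D)$. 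Combining,
\[
\Rev(D)=\sum_{j=1}^m P_j^*\;\le\;\sum_{j=1}^m n\cdot \SRev(D_{\cdot j})\;\le\;nm\cdot \SRev(D).
\]

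The hard part will be establishing the single-buyer bound $\Rev(D_{\cdot j})\le n\cdot \SRev(D_{\cdot j})$, particularly for heavy-tailed distributions (e.g., the Equal-Revenue distribution) where the naive IR bound $\Rev\le \sum_i \mathbb{E}[v_{ij}]$ can be vacuous. The workaround is to avoid passing through expected welfare directly, and instead to iteratively apply Lemma~\ref{lem:HN1} while charging each welfare contribution against the corresponding single-item Myerson revenue via a truncation or ex-ante relaxation argument; this costs a factor of at most $n$ and yields the bound.
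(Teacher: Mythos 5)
Your multi-buyer reduction (Step 1) and the comparison $\SRev(D_{\cdot j})\le\SRev(D)$ are both correct, and they are exactly how the paper passes from $m$ buyers to one. The gap is in Step 2: the single-buyer bound $\Rev(D_{\cdot j})\le n\,\SRev(D_{\cdot j})$ is essentially the entire content of the lemma, and your sketch for it does not go through. Iterating Lemma~\ref{lem:HN1} to peel items off one at a time yields $\Rev(D_{\cdot j})\le\sum_i\mathbb{E}[v_{ij}]$, and, as you yourself note, this is vacuous for heavy tails; but ``using individual rationality to convert the welfare term of a singleton distribution into a multiple of the corresponding single-item Myerson revenue'' is impossible in general, since $\mathbb{E}[v_{ij}]$ can exceed $r_{ij}$ by an arbitrary factor (it is infinite for the Equal-Revenue distribution while $r_{ij}=1$). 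Truncation does not rescue this at a cost of $n$ either: truncating $D_{ij}$ at level $T$ bounds the truncated welfare only by roughly $r_{ij}\bigl(1+\ln(T/r_{ij})\bigr)$ and gives no control over the revenue contributed by values above $T$ unless one invokes the core--tail machinery, whose tail bound in this paper itself uses the present lemma. So ``this costs a factor of at most $n$ and yields the bound'' is an assertion, not an argument.

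The paper's single-buyer proof is an induction on $n$ that uses a different decomposition: partition the value space into $S_1=\{\vec v: v_1\ge v_i\ \forall i\}$ and its complement $S_2$, apply sub-domain stitching (Lemma~\ref{lem:HN01}) and the marginal mechanism on a sub-domain (Lemma~\ref{lem:HN02}) to each piece, and charge the resulting welfare terms to revenue \emph{using the conditioning} rather than IR. Concretely, with $n=k+1$, one shows $s_1\Welfare(D^{(1)}_{-1})\le k\,\Rev(D_1)$ because posting the random price $\max_{i\ge 2}v_i$ on item $1$ always sells on $S_1$ and extracts at least $\tfrac1k\sum_{i\ge2}v_i$; symmetrically, $s_2\Welfare(D^{(2)}_{1})\le\SRev(D_{-1})$ because posting the random price $v_1$ on each remaining item sells at least one item on $S_2$. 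Some device of this kind, which converts welfare into revenue only on the event where another item's value dominates it, is what your Step 2 is missing.
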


%
%

\section{The Core-Tail Decomposition}\label{sec:core}


We make use of an idea developed by Li and Yao~\cite{LiY13} called the Core-Tail Decomposition of a value distribution for a single buyer. In order to obtain our stronger results for a single buyer and also extend to many buyers, we define the core differently but in the same spirit. The idea is to separate each $m$-dimensional value distribution for each item into the core and the tail, the tail being the part where some buyer has an unusually high value for the item. Then the core of the entire $nm$-dimensional distribution is the product of all the cores, and the tail is everything else.

\subsection{Defining the Core and Prior Results}

Below we formalize the notion of the core.  We introduce some notation that will be used throughout the paper. Many definitions below define distributions conditioned on events. Sometimes, these events will have probability $0$ of occurring. For simplicity of notation, we'll define a ``null'' distribution which deterministically outputs $0$, and replace any distribution conditioned on a zero probability event with the null distribution.


\begin{itemize}
\item \textbf{$r_i$:} The optimal revenue obtainable by selling just item $i$, using Myerson's optimal auction.
\item \textbf{$r$:} $\sum_i r_i$. The total revenue from optimally selling the items separately.  Note that $r=\SRev(D)$, but we introduce this redundant notation for convenience.
\item \textbf{$t_i$:} A parameter for item $i$, used to define the separation between the core and tail of distribution $D_i$.  We will think of $t_i$ as a multiplier applied to $r_i$.  The core for item $i$ will be supported on the interval $[0,t_i r_i]$, and the tail for item $i$ will be supported on $(t_i r_i, \infty)$.  Different results throughout the paper will specify different choices for $t_i$. We will often abuse notation and say that ``item $i$ is in the tail'' when $v^*_i > t_i r_i$ and ``item $i$ is in the core'' when $v^*_i \leq t_i r_i$.
\item \textbf{$p_i$:} $Pr[v^*_i > t_i r_i]$, the probability that the highest value on item $i$ lies in the tail. Note that this may be $0$, and also that this depends both on the distribution $D_i$, as well as the choice of $t_i$.
\item \textbf{$D_i^C$:} The core of $D_i$, the conditional distribution of $\vec{v}_i$ conditioned on $v^*_i \leq t_i r_i$. Note that this may be the null distribution if $p_i = 1$.
\item \textbf{$D_i^T$:} The tail of $D_i$, the conditional distribution of $\vec{v}_i$ conditioned on $v^*_i > t_i r_i$. Note that this may be the null distribution if $p_i = 0$.
\item \textbf{$A$:} Throughout our notation, we will use $A$ to represent a subset of items.  We often think of $A$ as the items whose values lie in the tail of their respective distributions.
\item \textbf{$D_A^T$:} $A$ is a subset of items, and $D_A^T$ is a product distribution equal to $\times_{i \in A} D_i^T$.
\item \textbf{$D_A^C$:} $A$ is a subset of items, and $D_A^C$ is a product distribution equal to $\times_{i \notin A} D_i^C$.  Notice that the product is over items \emph{not} in $A$.  We think of $D_A^C$ as representing the distribution of values in the core, conditional on $A$ being the set of items whose values lie in the tail.
\item \textbf{$D_A$:} $D_A^C \times D_A^T$. Note that this product is taken over the tail of items in $A$ and the core of items not in $A$.  In other words, $D_A$ is the distribution $D$, conditioned on $v_i^* > t_i r_i$ for all $i \in A$ and conditioned on $v_i^* \leq  t_i r_i$ for all $i \notin A$.
\item \textbf{$p_A$:} $(\prod_{i \in A} p_i)(\prod_{i \notin A} (1-p_i))$. When $D_A$ is not null, this equals $Pr[\vec{v} \in \text{support}(D_A)]$. 
\end{itemize}

Before stating our core-tail decomposition lemma, we present some known results about the core. The lemmas below were either stated explicitly in~\cite{LiY13} or~\cite{HartN17} for a single buyer, or use ideas from one of those papers. We put a citation in the statement of such lemmas, but include all proofs in Appendix~\ref{app:core} for completeness.




\begin{lemma}\label{lem:LY1}(\cite{LiY13}) $p_i \leq 1/t_i $ for all $i$.
\end{lemma}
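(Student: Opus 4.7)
The plan is to prove the bound by exhibiting a specific single-item mechanism whose revenue must be at most $r_i$, and then reading off $p_i$ from that inequality. Concretely, I will use the posted-price mechanism on item $i$ alone at price $t_i r_i$: offer item $i$ to the $m$ buyers at a take-it-or-leave-it price of $t_i r_i$, and if at least one buyer is willing to pay this amount, allocate the item to (any) such buyer and collect $t_i r_i$.

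This mechanism is dominant-strategy truthful and individually rational, so its expected revenue is a lower bound on $r_i$ (the optimal revenue obtainable from selling item $i$ in isolation). The item sells whenever some buyer has value at least $t_i r_i$, that is, whenever $v_i^* \geq t_i r_i$, which occurs with probability at least $\Pr[v_i^* > t_i r_i] = p_i$. Hence the expected revenue of this mechanism is at least $t_i r_i \cdot p_i$, and therefore
\[
r_i \;\geq\; t_i r_i \cdot p_i.
\]
Dividing through by $t_i r_i$ (the case $r_i = 0$ is handled separately: then $v_i^*=0$ almost surely, so $p_i = 0$ and the claim is trivial; alternatively one may perturb $t_i r_i$ by $\varepsilon$ and take a limit) yields $p_i \leq 1/t_i$ as desired.

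There is essentially no substantive obstacle here; this is a Markov-type inequality leveraging the optimality of $r_i$. The only subtlety worth being careful about is that the posted-price mechanism we use must genuinely be a feasible single-item mechanism on item $i$ in the $m$-buyer setting, so that its revenue is upper bounded by $r_i$. Since $r_i$ is defined as the optimum of Myerson's auction on item $i$, and posted-price-with-arbitrary-tie-breaking is a valid BIC (indeed DSIC) and IR mechanism for selling one copy of item $i$, this is immediate from the definition of $r_i$.
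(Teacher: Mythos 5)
Your proof is correct and is essentially the paper's argument: the paper uses a second-price auction with reserve $t_i r_i$ (rather than a posted price), but both mechanisms collect at least $t_i r_i$ exactly when $v_i^* > t_i r_i$, giving $r_i \geq p_i t_i r_i$ and hence $p_i \leq 1/t_i$. Your extra care with the degenerate case $r_i = 0$ is a fine (if minor) addition.
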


\begin{lemma}\label{lem:LY2}(\cite{LiY13}) $\Rev(D_i^C) \leq r_i$ and $\Rev(D_i^T) \leq r_i/p_i$.
\end{lemma}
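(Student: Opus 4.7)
The plan is to apply an IC mechanism designed for the conditional distribution back to $D_i$ and bound its expected revenue via the mixture decomposition $D_i = (1-p_i)\,D_i^C + p_i\,D_i^T$. Since $r_i = \Rev(D_i)$ upper bounds the expected revenue of any IC mechanism for $D_i$, each inequality will follow by rearrangement.

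For $\Rev(D_i^T) \leq r_i/p_i$, let $M^T$ be Myerson's optimal single-item mechanism for $D_i^T$, which achieves expected revenue $\Rev(D_i^T)$ on $D_i^T$ and is dominant-strategy truthful. Because dominant-strategy truthfulness is distribution-independent, $M^T$ is also a valid IC mechanism for $D_i$, and by the mixture decomposition its expected revenue on $D_i$ equals $(1-p_i) R_C + p_i \Rev(D_i^T)$, where $R_C \geq 0$ is its expected revenue on $D_i^C$ (non-negativity follows from IR with non-negative payments). This quantity is at least $p_i \Rev(D_i^T)$ and at most $r_i$, so $\Rev(D_i^T) \leq r_i/p_i$; the case $p_i = 0$ is vacuous since $D_i^T$ is null.

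For $\Rev(D_i^C) \leq r_i$, let $M^C$ be Myerson's DSIC optimum for $D_i^C$; by Myerson this is a posted price $p^* \leq t_i r_i$ (single buyer) or a second-price auction with reserve $r^* \leq t_i r_i$ (multiple buyers), since any price/reserve above $t_i r_i$ would yield zero revenue on $D_i^C$. Applied to $D_i$, its expected revenue is $(1-p_i)\Rev(D_i^C) + p_i R_T$, where $R_T$ is its expected revenue on $D_i^T$, so it suffices to show $R_T \geq \Rev(D_i^C)$: then $r_i$ upper bounds a quantity that is in turn at least $\Rev(D_i^C)$. Intuitively this holds because on $D_i^T$ the reserve is automatically met ($v^*_i > t_i r_i \geq r^*$), so every potential sale in the core carries over to the tail at price at least $r^*$; in the single-buyer case $R_T = p^* \geq p^*\Pr_{D_i^C}[v \geq p^*] = \Rev(D_i^C)$.

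The main obstacle is formalising $R_T \geq \Rev(D_i^C)$ in the multi-buyer case, where Myerson's payment involves the second-highest bid rather than just the reserve. A clean workaround is a direct CDF comparison: for every $p \leq t_i r_i$, $\Pr_{D_i^C}[v^*_i \geq p] \leq \Pr_{D_i}[v^*_i \geq p]$, because conditioning every value to lie below $t_i r_i$ cannot increase the probability that some value exceeds $p$. Combined with the monotonicity of the optimal single-item revenue under stochastic dominance of the maximum, this yields $\Rev(D_i^C) \leq \Rev(D_i) = r_i$ directly, avoiding the need to analyse $R_T$ at all.
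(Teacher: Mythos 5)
Your bound on the tail is correct and is exactly the paper's argument: run the (DSIC, IR) optimal mechanism for $D_i^T$ on all of $D_i$; with probability $p_i$ the profile lands in the tail and yields $\Rev(D_i^T)$, the core contributes nonnegative revenue, and the total is capped by $r_i$. Your first inequality is also handled the same way the paper handles it for $m=1$. The problem is your ``clean workaround'' for $\Rev(D_i^C)\le r_i$ in the multi-buyer case: you reduce to first-order stochastic dominance of the \emph{maximum} $v_i^*$ and then invoke ``monotonicity of the optimal single-item revenue under stochastic dominance of the maximum.'' That principle is false for $m>1$. The optimal single-item revenue is not a function of the distribution of the maximum: with two i.i.d.\ $U[0,1]$ buyers Myerson's auction earns $5/12$, while a single buyer whose value is distributed as the max of two uniforms (same max distribution, so each weakly dominates the other) earns only $\max_p p(1-p^2)=2/(3\sqrt 3)<5/12$. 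Perturbing the latter slightly upward gives strict dominance of the max together with strictly smaller optimal revenue, so dominance of the max alone cannot deliver the inequality. Since Lemma~\ref{lem:LY2} is invoked in the multi-buyer sections (via Corollary~\ref{cor:var} and Proposition~\ref{prop:manymaxTail}), this case cannot be dismissed.

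The fix is to use the stronger dominance that actually holds here, which is what the paper does: because the $v_{ij}$ are independent across buyers, conditioning on $v_i^*\le t_i r_i$ factorizes into conditioning each $v_{ij}$ on $v_{ij}\le t_i r_i$, so $D_i^C$ is \emph{coordinatewise} first-order stochastically dominated by $D_i$ (there is a coupling under which every buyer's value weakly decreases). For a single item, coordinatewise FOSD of independent buyers' values implies the optimal revenue weakly decreases, giving $\Rev(D_i^C)\le\Rev(D_i)=r_i$. (This revenue-monotonicity step is itself a nontrivial fact that the paper states without proof; but it is the correct hypothesis, whereas dominance of the max is not.) Your abandoned route via $R_T\ge\Rev(D_i^C)$ was headed toward the same difficulty; for $m=1$ either of your arguments is fine and coincides with the paper's.
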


\begin{lemma}\label{lem:HN2}(\cite{HartN17}) $\Rev(D) \leq \sum_A p_A \Rev(D_A)$.
\end{lemma}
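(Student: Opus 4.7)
The plan is to exhibit $D$ as a convex combination of the distributions $\{D_A\}_{A \subseteq [n]}$ and then argue that revenue is subadditive under such a decomposition.

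First, I would verify the decomposition $D = \sum_A p_A \cdot D_A$. For each realization $\vec{v} \sim D$, let $A(\vec{v}) = \{i : v_i^* > t_i r_i\}$. By independence across items, $\Pr[A(\vec{v}) = A] = \prod_{i \in A} p_i \cdot \prod_{i \notin A}(1 - p_i) = p_A$, and conditioned on this event the distribution is exactly $D_A$ by definition. So $D$ is the convex mixture of the $D_A$'s with weights $p_A$.

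Second, let $M$ be a revenue-optimal truthful (IC and IR) mechanism for $D$, achieving revenue $\Rev(D)$. By linearity of expectation,
\[
\Rev(D) \;=\; \mathbb{E}_{\vec{v} \sim D}[\text{payments of } M] \;=\; \sum_A p_A \cdot R_M(D_A),
\]
where $R_M(D_A)$ denotes the expected revenue of $M$ when the profile is drawn from $D_A$ and consumers report truthfully.

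Third, I would argue $R_M(D_A) \leq \Rev(D_A)$ for every $A$. The mechanisms considered here are dominant-strategy incentive compatible (and IR), which is a pointwise condition on types rather than a condition on the prior. Hence $M$ remains IC and IR when the prior is replaced by $D_A$, so $M$ is a feasible truthful mechanism for the problem with value distribution $D_A$; its expected revenue there is therefore at most $\Rev(D_A)$. Summing over $A$ yields $\Rev(D) \leq \sum_A p_A \Rev(D_A)$.

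The only place to be careful is the third step: for a BIC mechanism one could worry that truthfulness under $D$ need not survive restricting the prior to $D_A$. The issue disappears because IC and IR here are type-by-type conditions (and in the single-buyer setting pointwise IC coincides with interim IC anyway), so switching the prior does not affect feasibility of $M$. Everything else is routine bookkeeping.
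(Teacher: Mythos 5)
Your proof is correct and is essentially the paper's own argument: the paper proves this by citing its ``Sub-Domain Stitching'' lemma (Lemma~\ref{lem:HN01}), whose proof is exactly your second and third steps --- decompose the optimal mechanism's revenue by linearity over the partition of the support into the supports of the $D_A$, then bound its conditional revenue on each piece by $\Rev(D_A)$. Your explicit remark about why truthfulness survives the change of prior is a point the paper's proof treats as immediate, so no substantive difference.
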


\subsection{The Core-Tail Decomposition Lemma}

In this section we state our Core-Tail Decomposition Lemma, which relates the optimal revenue from a distribution $D$ to the revenue and welfare that can be extracted from the tail and core of $D$, {respectively}. This result is similar in spirit to the main lemma of~\cite{LiY13}.  

Our first result, Lemma \ref{lem:easy}, is our main decomposition lemma.  The lemma states that the optimal revenue from distribution $D$ can be split into a contribution from the core of $D$ and a contribution from the tail of $D$.  One might hope for a bound of the form ``the optimal revenue from $D$ is at most the optimal revenue from the tail plus the optimal revenue from the core.'' Indeed, such a bound is attainable for a single buyer~\cite{LiY13}, but is problematic for many buyers. We will therefore settle for a weaker bound: the optimal revenue from the tail plus the \emph{expected welfare} from the core. We also note that the approach of Li and Yao eventually upper bounds the optimal revenue of the core with the expected welfare anyway.

\begin{lemma}[{Core-Tail}  Decomposition]\label{lem:easy}
$\Rev( D) \leq \Welfare(D_\emptyset^C) + \sum_A p_A \Rev(D_A^T)$
\end{lemma}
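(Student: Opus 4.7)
The plan is to combine the two preceding results (Lemmas \ref{lem:HN1} and \ref{lem:HN2}) and then collapse the welfare contributions across all subsets $A$ into a single welfare term $\Welfare(D_\emptyset^C)$.

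First, I would apply Lemma \ref{lem:HN2} to get $\Rev(D) \leq \sum_A p_A \Rev(D_A)$. Then, since by definition $D_A = D_A^C \times D_A^T$ is a product of distributions over disjoint sets of items, Lemma \ref{lem:HN1} gives $\Rev(D_A) \leq \Welfare(D_A^C) + \Rev(D_A^T)$ for each $A$. Plugging this in yields
\[
\Rev(D) \leq \sum_A p_A \Welfare(D_A^C) + \sum_A p_A \Rev(D_A^T).
\]
The second sum is exactly the tail contribution in the statement, so it remains to show $\sum_A p_A \Welfare(D_A^C) \leq \Welfare(D_\emptyset^C)$.

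For this final step, I would swap the order of summation. Since $D_A^C = \times_{i \notin A} D_i^C$ ranges only over items outside $A$, we have $\Welfare(D_A^C) = \sum_{i \notin A} \mathbb{E}[v_i^* \mid v_i^* \leq t_i r_i]$, so
\[
\sum_A p_A \Welfare(D_A^C) = \sum_i \mathbb{E}[v_i^* \mid v_i^* \leq t_i r_i] \sum_{A \,:\, i \notin A} p_A = \sum_i (1-p_i)\,\mathbb{E}[v_i^* \mid v_i^* \leq t_i r_i],
\]
using that $\sum_{A : i \notin A} p_A = \Pr[v_i^* \leq t_i r_i] = 1 - p_i$. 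Since each $(1-p_i) \leq 1$, this is at most $\sum_i \mathbb{E}[v_i^* \mid v_i^* \leq t_i r_i] = \Welfare(D_\emptyset^C)$, as desired.

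The argument is essentially bookkeeping on top of the two prior lemmas, so I do not anticipate a serious obstacle; the one subtle point worth stating carefully is that $\Welfare(D_A^C)$ involves only the items outside $A$ (a consequence of how $D_A^C$ is defined as a product over $i \notin A$), which is precisely what makes the swap-of-summation collapse to a factor of $(1-p_i)$ rather than something larger.
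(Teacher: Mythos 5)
Your proof is correct and follows essentially the same route as the paper: apply Lemma~\ref{lem:HN2} to decompose over the sets $A$, apply Lemma~\ref{lem:HN1} to each $\Rev(D_A)$, and then absorb the welfare terms into $\Welfare(D_\emptyset^C)$. The only (immaterial) difference is in the last step: the paper simply notes $\Welfare(D_A^C) \leq \Welfare(D_\emptyset^C)$ for every $A$ and uses $\sum_A p_A = 1$, whereas you swap the order of summation to get the slightly sharper $\sum_i (1-p_i)\Welfare(D_i^C)$ before relaxing $(1-p_i)$ to $1$.
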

\begin{proof}
By Lemma~\ref{lem:HN1},
$$\Rev(D_A) \leq \Welfare(D_A^C) + \Rev(D_A^T)$$
for all $A$.  Also, since $\Welfare(D_A^C)$ is the expected sum of values for items \emph{not} in $A$, we have
$$\Welfare(D_A^C) \leq \Welfare(D_\emptyset^C).$$
By Lemma~\ref{lem:HN2},
\begin{align*}
\Rev(D) & \leq \sum_A p_A \Rev(D_A) \\
& \leq \sum_A p_A \left(\Welfare(D^C_A) + \Rev(D_A^T)\right) \\
& \leq \left(\sum_A p_A\right) \Welfare(D_\emptyset^C) + \sum_A p_A \Rev(D_A^T).
\end{align*}
As $\sum_A p_A = 1$ the desired result follows.
\end{proof}

\section{Main Result: Revenue Bounds for a Single Buyer}
\label{sec:one-ind}

In this section we focus on the case of a single buyer, $m=1$.  We will work toward proving our main result, which is that $\max\{\SRev(D), \BRev(D)\}$ is a constant-factor approximation to $\Rev(D)$ in this setting.  Our argument will make use of the {Core-Tail} decomposition, described in the previous section.  We will begin with a simpler result that illustrates our techniques: that $\Rev(D)$ is at most $(\ln n + 3)$ times $\SRev(D)$.  A logarithmic approximation was already established in~\cite{LiY13}; we obtain a slightly tighter bound, but the primary purpose of presenting this result is as a warm-up to introduce our techniques and those of~\cite{LiY13}.  We will then show how this bound can be improved to a constant by considering the maximum of $\SRev(D)$ and $\BRev(D)$.

\subsection{Warm-up: $(\ln n+ 3)\SRev \geq \Rev$}\label{subsec:SRev}
We first give a simple application of our approach to provide a bound on \SRev\ vs.\ \Rev, which is slightly improved relative to the bound obtained in~\cite{LiY13}.

\begin{theorem}\label{thm:SRev}
For a single buyer, and any $c \geq 1/n$, $(2+1/c + \ln c + \ln n) \SRev(D) \geq \Rev(D)$. This is minimized at $c = 1$, yielding $(\ln n + 3)\SRev(D) \geq \Rev(D)$.
\end{theorem}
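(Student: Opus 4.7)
The plan is to apply the Core Decomposition Lemma (Lemma~\ref{lem:easy}) with a uniform threshold $t_i = cn$ for every item $i$. By Lemma~\ref{lem:LY1} this forces $p_i \leq 1/(cn)$ and hence $\sum_i p_i \leq 1/c$, which is the single fact that will drive the tail bound. Instead of quoting the statement of Lemma~\ref{lem:easy} verbatim, I would work with the slightly sharper intermediate inequality
\[
\Rev(D) \;\leq\; \sum_A p_A\,\Welfare(D_A^C) \;+\; \sum_A p_A\,\Rev(D_A^T)
\]
that appears inside its proof, and bound the two summands separately.

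For the core sum, swapping the order of summation and using that, by independence, $\sum_{A:\, i \notin A} p_A = 1 - p_i$ collapses it into a sum of single-item quantities:
\[
\sum_A p_A\, \Welfare(D_A^C) \;=\; \sum_i (1-p_i)\,\mathbb{E}[v_i \mid v_i \leq t_i r_i] \;=\; \sum_i \mathbb{E}\!\left[v_i \cdot \mathbf{1}[v_i \leq t_i r_i]\right].
\]
The Myerson consequence $\Pr[v_i > x] \leq r_i/x$ for $x \geq r_i$ (valid since posting price $x$ earns at most $r_i$) then yields $\mathbb{E}[v_i \cdot \mathbf{1}[v_i \leq t_i r_i]] \leq \int_0^{t_i r_i}\Pr[v_i>x]\,dx \leq r_i(1 + \ln t_i)$, so the core contribution is at most $r(1 + \ln(cn)) = (1 + \ln c + \ln n)\,\SRev(D)$.

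The tail sum is the delicate part, since $\Rev(D_A^T)$ is not in general subadditive in the items of $A$, so I cannot simply replace it by $\sum_{i \in A}\Rev(D_i^T)$. Instead I would pay an $|A|$ factor using Lemma~\ref{thm:HN} applied to the $|A|$-item, single-buyer distribution $D_A^T$, followed by Lemma~\ref{lem:LY2}:
\[
\Rev(D_A^T) \;\leq\; |A|\cdot\SRev(D_A^T) \;=\; |A|\sum_{i \in A}\Rev(D_i^T) \;\leq\; |A|\sum_{i \in A}\frac{r_i}{p_i}.
\]
Reordering the resulting double sum and using independence to evaluate $\mathbb{E}_A[|A|\cdot\mathbf{1}[i \in A]] = p_i + \sum_{j \neq i}p_i p_j = p_i(1 + \sum_{j \neq i}p_j) \leq p_i(1 + 1/c)$ converts this into
\[
\sum_A p_A\,\Rev(D_A^T) \;\leq\; \sum_i \frac{r_i}{p_i}\cdot p_i(1 + 1/c) \;=\; (1 + 1/c)\,\SRev(D).
\]

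Adding the core and tail bounds yields $\Rev(D) \leq (2 + 1/c + \ln c + \ln n)\,\SRev(D)$, and differentiating in $c$ identifies $c = 1$ as the minimizer, giving $(\ln n + 3)\,\SRev(D) \geq \Rev(D)$. The hardest step is the tail bound: the $|A|$ factor from Lemma~\ref{thm:HN} looks wasteful, but the threshold $t_i = cn$ was chosen precisely so that $\mathbb{E}_A[|A|] \leq 1/c$ is small, which is exactly what absorbs the $|A|$ factor in expectation. This is the underlying tension --- making the thresholds high enough that the tail is cheap but low enough that the core welfare is not much larger than $\SRev$ --- which the main theorem will later resolve by choosing adaptive, item-dependent thresholds.
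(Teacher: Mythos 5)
Your proof is correct and follows essentially the same route as the paper: the core decomposition with uniform thresholds $t_i = cn$, the tail bounded via Lemma~\ref{thm:HN} and Lemma~\ref{lem:LY2} with the expected-tail-size computation $\sum_{A \ni i}|A|p_A/p_i = 1 + \sum_{j\neq i}p_j \leq 1 + 1/c$, and the core bounded by integrating $\Pr[v_i > x] \leq \min\{1, r_i/x\}$ up to $cnr_i$. The only (harmless) deviation is that you bound $\sum_A p_A \Welfare(D_A^C)$ directly rather than first relaxing it to $\Welfare(D_\emptyset^C)$ as in Lemma~\ref{lem:easy}; this is marginally sharper but yields the same final constant.
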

The idea of the proof is to consider the {Core-Tail} decomposition of $D$, choosing $t_i = cn$ for each item $i$.  By the {Core-Tail}  Decomposition Lemma (Lemma \ref{lem:easy}), Theorem \ref{thm:SRev} follows if we can bound the optimal revenue from the tail and the expected welfare from the core, given this choice of $\{t_i\}_{i \in [n]}$.


We begin with Proposition \ref{prop:SRevTail}, which effectively shows that when $c$ is a constant, the revenue from the tail {(when $t_i = cn$ for each item $i$)} is at most a constant times $\SRev(D)$. The intuition behind this result is that each item $i$ lies in the tail with probability $p_i \leq 1/t_i = 1/cn$, and hence there will often be at most a single item whose value lies in the tail.  In this case, the revenue from the values in the tail is certainly no more than $\SRev(D)$, since the optimal mechanism can do no better than setting the optimal price for the single item present.  To bound the revenue contribution when many values lie in the tail, the relatively weak bound in Lemma~\ref{thm:HN} will suffice.


\begin{proposition}\label{prop:SRevTail}
For a single buyer, and any $c > 0$, if $t_i = cn$ for all $i$, then $\sum_A p_A \Rev(D_A^T) \leq (1+1/c)\SRev(D)$.
\end{proposition}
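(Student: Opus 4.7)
The plan is to bound $\Rev(D_A^T)$ by combining the weak general bound from Lemma~\ref{thm:HN} with the per-item tail bound from Lemma~\ref{lem:LY2}, then swap the order of summation to exploit the fact that each $p_i \leq 1/(cn)$ (Lemma~\ref{lem:LY1}), so that $\sum_i p_i \leq 1/c$.

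First I would apply Lemma~\ref{thm:HN} to the distribution $D_A^T$, which is a single-buyer distribution over $|A|$ items, to obtain $\Rev(D_A^T) \leq |A|\cdot \SRev(D_A^T)$. Because $D_A^T = \times_{i \in A} D_i^T$ is a product distribution, $\SRev(D_A^T) = \sum_{i \in A} \Rev(D_i^T)$, and by Lemma~\ref{lem:LY2} each summand is at most $r_i/p_i$. Combining yields the key per-set bound $\Rev(D_A^T) \leq |A|\sum_{i \in A} r_i/p_i$. Note that for $|A|=1$ this is already tight (it reduces to Lemma~\ref{lem:LY2}), which is reassuring because that regime will carry most of the probability mass.

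Next I would substitute into the sum and swap the order of summation:
$$\sum_A p_A \Rev(D_A^T) \;\leq\; \sum_A p_A |A| \sum_{i \in A} \frac{r_i}{p_i} \;=\; \sum_i r_i \sum_{A \ni i} |A|\,\frac{p_A}{p_i}.$$
For the inner sum, I would observe that $p_A/p_i = \prod_{k \in A, k \neq i} p_k \prod_{k \notin A}(1-p_k)$ is exactly the probability that independent Bernoulli variables, one per $j \neq i$ with success probability $p_j$, realize the configuration $A \setminus \{i\}$. Writing $B$ for the resulting random subset of $\{1,\ldots,n\}\setminus\{i\}$ and noting $|A|=|B|+1$, this inner sum equals $\mathbb{E}[1+|B|] = 1 + \sum_{j \neq i} p_j$. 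Applying Lemma~\ref{lem:LY1} with $t_j = cn$ gives $p_j \leq 1/(cn)$, so $\sum_{j \neq i} p_j \leq 1/c$. Plugging back in, $\sum_A p_A \Rev(D_A^T) \leq \sum_i r_i (1+1/c) = (1+1/c)\SRev(D)$.

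The main potential obstacle is that the bound $\Rev(D_A^T) \leq |A|\SRev(D_A^T)$ from Lemma~\ref{thm:HN} looks wasteful for large $|A|$, and one might worry that accumulating an extra factor of $|A|$ over all subsets could produce an unwanted multiplicative term. The resolution is that the choice $t_i = cn$ is calibrated so that the expected number of items in the tail is at most $1/c$; the ``expected $|A|$'' arising from the double sum is therefore only $1/c$, which is exactly the $1/c$ term in the target bound. Intuitively, the tight $|A|=1$ case contributes the leading $\SRev(D)$ and the fat-tail regimes contribute only the additive $1/c$ slack, matching the statement.
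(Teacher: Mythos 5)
Your proof is correct and follows essentially the same route as the paper's: the bound $\Rev(D_A^T) \leq |A|\sum_{i\in A} r_i/p_i$ via Lemmas~\ref{thm:HN} and~\ref{lem:LY2}, the swap of summation order, and the interpretation of $\sum_{A\ni i}|A|p_A/p_i$ as the expected tail size conditioned on $i$ being in the tail, bounded by $1+\sum_{j\neq i}p_j \leq 1+1/c$ via Lemma~\ref{lem:LY1}. No gaps.
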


\begin{proof}
By Lemma~\ref{thm:HN} and Lemma~\ref{lem:LY2}, $\Rev(D_A^T) \leq  |A| \SRev(D_A^T) \leq \sum_{i \in A} |A| r_i/p_i$. Therefore, we may rewrite the sum by first summing over item $i$, and then summing over every set $A$ containing $i$, obtaining:
$$\sum_A p_A \Rev(D_A^T) \leq  \sum_A p_A \sum_{i \in A} |A| r_i/p_i =\sum_i  r_i\sum_{A \ni i} |A| \cdot p_A /p_i.$$
We now wish to interpret the term $\sum_{A \ni i} |A| \cdot p_A/p_i$. Observe that $p_A/p_i$ is exactly the probability that the set $A$ of items are in the tail {and all other items are not},
conditioned on $i$ being in the tail, and $|A|$ is just the size of $A$. Summing over all $A \ni i$ therefore yields the expected size of the set of items in the tail, conditioned on $i$ being in the tail.\footnote{This observation is due to Aviad Rubinstein, and we thank him for allowing us to include it.  An earlier version of this paper presented a $(\ln n + 5)$-approximation in Theorem~\ref{thm:SRev} and a $7.5$-approximation in Theorem~\ref{thm:main}. This observation improved those factors to $(\ln n + 3)$ and 6, respectively.} Clearly, {as $i$ is in $A$,} this expectation is just $1+\sum_{j \neq i} p_j$, which is at most $1+1/c$ by Lemma~\ref{lem:LY1}. As we have just observed that $\sum_{A \ni i} |A| p_A/p_i \leq 1+1/c$. Thus, we have now shown that $\sum_A p_A \Rev(D_A^T) \leq \sum_i (1+1/c)r_i$, which is exactly $(1+1/c)\SRev(D)$.
\end{proof}

Having established a bound on the revenue of the tail, we turn to the welfare of the core.  For this, we use the definition of $r_i = \SRev(D_i)$ to directly bound $\Pr_{v_i \leftarrow D_i}[v_i > x]$ for all $x$, and then take an expectation over the range of the core.

\begin{proposition}\label{prop:SRevCore}
For a single buyer, and any $c \geq 1/n$, if $t_i = cn$ for all $i$, then $(1+\ln c + \ln n) \SRev(D) \geq \Welfare(D_\emptyset^C)$.
\end{proposition}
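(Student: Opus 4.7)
The plan is to exploit the product structure of $D_\emptyset^C = \prod_i D_i^C$ and additivity. Since $D_i^C$ is the distribution of $v_i$ conditioned on $v_i \le t_i r_i$ (and for a single buyer $v_i^* = v_i$), linearity of expectation gives
\[
\Welfare(D_\emptyset^C) = \sum_i \mathbb{E}[v_i \mid v_i \le t_i r_i].
\]
So it suffices to prove the per-item bound $\mathbb{E}[v_i \mid v_i \le t_i r_i] \le r_i(1 + \ln t_i)$; summing and substituting $t_i = cn$ then yields exactly $(1 + \ln c + \ln n)\SRev(D)$.

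For the per-item bound, the key tool is Myerson's single-item revenue bound: posting reserve $x$ on item $i$ extracts $x\,Pr[v_i \ge x] \le r_i$, so $Pr[v_i > x] \le r_i/x$ for every $x > 0$. Writing $\mathbb{E}[\min(v_i, t_i r_i)] = \int_0^{t_i r_i} Pr[v_i > x]\,dx$ and splitting the integral at $x = r_i$ --- using the trivial bound $1$ below $r_i$ and Myerson's bound $r_i/x$ above --- I obtain
\[
\mathbb{E}[\min(v_i, t_i r_i)] \le r_i + r_i \ln t_i = r_i(1 + \ln t_i).
\]

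Finally, I would convert this truncated expectation to the desired conditional one via the identity $\mathbb{E}[\min(v_i, t_i r_i)] = (1-p_i)\,\mathbb{E}[v_i \mid v_i \le t_i r_i] + p_i t_i r_i$, yielding
\[
\mathbb{E}[v_i \mid v_i \le t_i r_i] \;\le\; \frac{r_i(1 + \ln t_i) - p_i t_i r_i}{1 - p_i}.
\]
The step I expect to require the most care is showing that the $1/(1-p_i)$ normalization does not inflate this bound beyond $r_i(1 + \ln t_i)$ --- a careless estimate would leave a spurious $t_i/(t_i-1)$ factor, worsening the constant. The trick is the elementary inequality $1 + \ln t_i \le t_i$ (valid for all $t_i > 0$), which gives $p_i r_i(1 + \ln t_i) \le p_i t_i r_i$, so that $r_i(1 + \ln t_i) - p_i t_i r_i \le (1 - p_i)\, r_i(1 + \ln t_i)$; the $(1 - p_i)$ factors cancel, producing the clean per-item bound. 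Summing over $i$ then closes the argument.
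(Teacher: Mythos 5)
Your proof is correct and follows essentially the same route as the paper's: decompose the welfare item by item, use Myerson's bound $Pr[v_i > x] \le r_i/x$, and integrate the survival function over $[0, t_i r_i]$, splitting the integral at $x = r_i$. The only difference is how the conditional expectation is related to the unconditional tail integral: the paper simply notes that $D_i$ stochastically dominates $D_i^C$, so $\Welfare(D_i^C) \le \int_0^{t_i r_i} Pr[v_i > x]\,dx$ directly, whereas you route through the truncation identity and the inequality $1 + \ln t_i \le t_i$ --- both are valid and yield the same per-item bound.
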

\begin{proof}
Note that $\Welfare(D_\emptyset^C) = \sum_i \Welfare(D_i^C) \leq \sum_i  \int_0^{cnr_i} \Pr_{v_i \leftarrow D_i}[v_i > x]dx$. The last inequality would be equality if we replaced $v_i$ with a random variable drawn from $D_i^C$, but since $v_i$ stochastically dominates such a random variable, we get an inequality instead. As the optimal revenue of $D_i$ is $r_i$, this means that $\Pr_{v_i\leftarrow D_i}[v_i > x] \leq \min\{1,r_i/x\}$. So we have
\begin{align*}
\Welfare(D_i^C) & \leq \int_0^{r_i} dx + \int_{r_i}^{cnr_i} (r_i/x) dx
= r_i + r_i (\ln (cnr_i) - \ln (r_i) )
= r_i (1+\ln n + \ln c)
\end{align*}
As $c \geq 1/n$, the above breakdown of the integral is valid, and summing this guarantee over all $i$ yields the proposition.
\end{proof}

Combining Propositions \ref{prop:SRevTail} and \ref{prop:SRevCore} with Lemma \ref{lem:easy} yields Theorem \ref{thm:SRev}.

\subsection{Main Result: $6 \cdot \max\{\SRev, \BRev\} \geq \Rev$}
In this section we prove our main result, showing that the best of selling items separately and bundling all of them together is a constant-factor approximation to the optimal mechanism. The proof will follow a similar outline to that of Section~\ref{subsec:SRev}, proving propositions similar to Propositions~\ref{prop:SRevTail} and~\ref{prop:SRevCore}. The notable difference is that we will need to be more careful in defining the core.

When all $D_i$ are identical, the approach in Section~\ref{subsec:SRev} (setting each $t_i = cn$) can be leveraged to yield the bound $O(1) \cdot \BRev \geq  \Rev$ (\cite{LiY13}), but fails in the case that a small number $k$ of items contributes the majority of the optimal revenue.  To see the problem, note that the previous definition of the cutoffs $t_i$ depends on the number of items $n$, {but the number of items can be made arbitrarily large while keeping the problem essentially the same (by adding extra items whose values are deterministically $0$).}  The effect is that analysis of the same underlying instance changes as a result of these dummy items, so one should not expect approximations independent of $n$. In particular, the cutoffs $t_i$ will be larger than necessary when value distributions are asymmetric, and a few items contribute most of the revenue.  

Ideally, our analysis would be invariant under addition of dummy items. To accomplish this, we let $t_i$ scale inverse proportionally to $r_i$, so that high-revenue items are more likely to occur in the tail.  This allows us to capture scenarios in which revenue comes primarily from one heavy item (by analyzing the tail), as well as instances driven by the combined contribution of many light items (by analyzing the core).  Indeed, note that if we set $t_i = c r / r_i$, then the boundary between core and tail becomes $t_i r_i = c r = c \SRev(D)$ for each item (so while the choice of $t_i$ is non-uniform, the absolute cutoffs $t_i r_i$ are uniform).  This turns out to be precisely the threshold that we need to attain constant-factor approximation bounds for both the core and the tail, simultaneously. {We are now ready to state and prove our main result. }


\begin{theorem}
\label{thm:main}
For $m=1$ buyer and $n$ items, $\Rev(D) \leq 6\max\{\SRev(D), \BRev(D)\}$.
\end{theorem}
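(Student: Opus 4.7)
The plan is to apply the Core Decomposition Lemma (Lemma~\ref{lem:easy}) with the item-adaptive thresholds $t_i = cr/r_i$, where $c > 1$ is a universal constant to be chosen at the end. With this choice the core/tail boundary $t_i r_i = cr = c\,\SRev(D)$ is the \emph{same} value for every item, and Lemma~\ref{lem:LY1} gives $p_i \le 1/t_i = r_i/(cr)$, so $\sum_i p_i \le 1/c$. The overall structure mirrors Section~\ref{subsec:SRev}: I bound the tail contribution against $\SRev$ and the core welfare against $\max\{\SRev,\BRev\}$, then sum via Lemma~\ref{lem:easy}.

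For the tail, I would follow the proof of Proposition~\ref{prop:SRevTail} essentially verbatim, since that argument used the specific choice $t_i = cn$ only through the consequence $\sum_i p_i \le 1/c$. The same chain---Lemmas~\ref{thm:HN} and~\ref{lem:LY2} give $\Rev(D_A^T) \le |A|\,\SRev(D_A^T) \le \sum_{i\in A} |A|\,r_i/p_i$, and the combinatorial identity $\sum_{A \ni i} |A|\,p_A/p_i = 1 + \sum_{j\neq i} p_j \le 1 + 1/c$---yields $\sum_A p_A \Rev(D_A^T) \le (1 + 1/c)\,\SRev(D)$.

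The heart of the argument is the new core bound. Let $S = \sum_i v_i$ for $v\sim D_\emptyset^C$, with $\mu = \Welfare(D_\emptyset^C) = \mathbb{E}[S]$ and $\sigma^2 = \Var(D_\emptyset^C) = \Var[S]$. Since every coordinate of $D_\emptyset^C$ is supported on $[0, cr]$, we have $\Var_{D_\emptyset^C}[v_i] \le \mathbb{E}_{D_\emptyset^C}[v_i^2] \le cr\cdot\mathbb{E}_{D_\emptyset^C}[v_i]$, and by independence $\sigma^2 \le cr\,\mu$. I would then split on the size of $\mu$. If $\mu \le 4cr$, then immediately $\mu \le 4c\,\SRev(D)$. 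Otherwise $\mu > 4cr$ forces $\sigma < \mu/2$, so Cantelli's one-sided inequality gives $\Pr_{D_\emptyset^C}[S \ge \mu-\sigma]\ge 1/2$ and hence $\Pr_{D_\emptyset^C}[S \ge \mu/2] \ge 1/2$. Posting the bundle price $\mu/2$ on the \emph{original} distribution $D$, and using $\Pr_D[\sum_i v_i \ge \mu/2] \ge p_\emptyset\,\Pr_{D_\emptyset^C}[S \ge \mu/2]$ together with $p_\emptyset = \prod_i(1-p_i) \ge 1-\sum_i p_i \ge 1 - 1/c$, yields
\[
\BRev(D) \;\ge\; \tfrac{\mu}{2}\cdot\bigl(1-1/c\bigr)\cdot\tfrac{1}{2} \;=\; \tfrac{\mu}{4}\bigl(1-1/c\bigr),
\]
so $\Welfare(D_\emptyset^C) \le \tfrac{4}{1-1/c}\,\BRev(D)$. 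Combining the two cases with Lemma~\ref{lem:easy} gives
\[
\Rev(D) \;\le\; \Bigl(1 + \tfrac{1}{c} + \max\Bigl\{4c,\; \tfrac{4}{1-1/c}\Bigr\}\Bigr)\,\max\{\SRev(D),\BRev(D)\}.
\]
Any $c > 1$ already yields a constant approximation (for example $c = 2$ gives a bound around $9.5$); obtaining the sharper constant $6$ claimed in the theorem will require optimizing the Cantelli deviation parameter (replacing $t = \sigma$ with $t = k\sigma$) jointly with $c$ and the Case~A threshold, trading the coefficient of $\SRev$ from the small-$\mu$ branch against the coefficient of $\BRev$ from the concentration branch.

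The principal difficulty is the core bound: the threshold $cr$ must be large enough that $\sum_i p_i \le 1/c$ is small (keeping the tail close to $\SRev$ and $p_\emptyset$ close to $1$), yet small enough that once $\mu$ exceeds $\Theta(cr)$ the crude variance bound $\sigma^2 \le cr\,\mu$ already forces $S$ to concentrate around its mean. The adaptive choice $t_i = cr/r_i$ is what makes this balance work: instances whose revenue is concentrated on a few heavy items have those items pushed into the tail (handled by $\SRev$), while instances whose revenue comes from the aggregate of many comparable items lie in the core, where the bounded variance forces welfare concentration and makes a single grand-bundle price competitive with $\BRev$.
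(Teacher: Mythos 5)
Your proposal follows the same architecture as the paper's proof: adaptive thresholds making the core/tail boundary $t_ir_i$ a fixed multiple of $r=\SRev(D)$, the tail bounded against $\SRev$ exactly as in Proposition~\ref{prop:maxTail}, and the core welfare handled by a small/large dichotomy with a second-moment concentration bound feeding into $\BRev$. The tail bound and the variance bound $\sigma^2\le cr\mu$ are correct (the latter is a clean substitute for the paper's Lemma~\ref{lem:var}), and the argument as written does prove $\Rev(D)\le C\cdot\max\{\SRev(D),\BRev(D)\}$ for a universal constant $C$.

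However, it does not prove the constant $6$ claimed in the theorem, and the route you propose for recovering $6$ --- jointly optimizing $c$, the Case~A threshold, and the Cantelli parameter $k$ --- cannot succeed as stated. The obstruction is the factor $p_\emptyset\ge 1-1/c$ you pay when transferring the concentration statement from $D_\emptyset^C$ to $D$. To get the core contribution down to $4\max\{\SRev,\BRev\}$ (which is what $6$ requires, since the tail already costs $1+1/c\ge 2$), Case~A forces $\alpha c\le 4$, hence $\alpha\le 4/c$, and then in Case~B you need $(1-k/\sqrt{\alpha})\frac{k^2}{1+k^2}(1-1/c)\ge\frac14$; one can check that for every $c>1$ the left side falls short (e.g.\ at $c=2$, $\alpha\le 2$ forces $(1-k/\sqrt{2})\frac{k^2}{1+k^2}\le 0.18$, so with the factor $\frac12$ from $p_\emptyset$ you only reach about $11$ on that branch). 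The missing observation is that the $p_\emptyset$ factor is unnecessary: $D_i$ first-order stochastically dominates $D_i^C$ (conditioning on $v_i\le t_ir_i$ only moves mass down), so $\Pr_{D}\bigl[\textstyle\sum_i v_i\ge x\bigr]\ge\Pr_{D_\emptyset^C}\bigl[\textstyle\sum_i v_i\ge x\bigr]$ for every $x$. This is what the paper uses implicitly in Proposition~\ref{prop:maxCore}. With that substitution the $1-1/c$ term disappears, $c=1$ becomes admissible (tail $\le 2\SRev$), and your own computation with $\alpha=4$ and $k=1$ gives $\sigma\le\mu/2$, $\Pr[S\ge\mu/2]\ge\frac12$, hence $\BRev(D)\ge\mu/4$, yielding exactly the factor $6$.
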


As in Theorem \ref{thm:SRev}, our approach will be to apply the {Core-Tail}  Decomposition Lemma (Lemma \ref{lem:easy}) with an appropriate choice of values $t_i$, then bound separately the revenue from the tail and the welfare from the core.  


\begin{proposition}\label{prop:maxTail}
For a single buyer, when $t_i = r/r_i$ for each $i$, $\sum_A p_A \Rev(D_A^T) \leq 2\SRev(D)$.
\end{proposition}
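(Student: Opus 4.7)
The plan is to follow exactly the same skeleton as the proof of Proposition~\ref{prop:SRevTail}, but exploit the fact that our new choice of thresholds $t_i = r/r_i$ makes the expected size of the tail essentially independent of $n$. The whole point of this non-uniform choice is that Lemma~\ref{lem:LY1} now yields $p_i \leq 1/t_i = r_i/r$, so $\sum_i p_i \leq 1$; this is what replaces the bound $\sum_i p_i \leq 1/c$ from the warm-up and removes the $n$-dependence.

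Concretely, I would first combine Lemma~\ref{thm:HN} (with $m=1$) and Lemma~\ref{lem:LY2} to bound, for every subset $A$ of items,
\[
\Rev(D_A^T) \;\leq\; |A|\,\SRev(D_A^T) \;\leq\; |A| \sum_{i \in A} \frac{r_i}{p_i}.
\]
Multiplying by $p_A$ and summing over $A$, I swap the order of summation to first sum over item $i$ and then over sets $A$ containing $i$:
\[
\sum_A p_A \Rev(D_A^T) \;\leq\; \sum_i \frac{r_i}{p_i} \sum_{A \ni i} |A|\, p_A \;=\; \sum_i r_i \sum_{A \ni i} |A|\,\frac{p_A}{p_i}.
\]

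Next, I would reuse the observation from Proposition~\ref{prop:SRevTail}: since $p_A/p_i$ is the probability that exactly the items of $A$ lie in the tail conditioned on $i$ being in the tail, $\sum_{A \ni i} |A|\,p_A/p_i$ is just the conditional expectation of the number of items in the tail given that $i$ is in the tail, which equals $1 + \sum_{j \neq i} p_j$.

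Finally, I plug in the threshold $t_i = r/r_i$. Lemma~\ref{lem:LY1} gives $p_j \leq r_j/r$, so $\sum_{j \neq i} p_j \leq \sum_j r_j/r = 1$, and hence $\sum_{A \ni i} |A|\,p_A/p_i \leq 2$ for every $i$. Substituting back yields
\[
\sum_A p_A \Rev(D_A^T) \;\leq\; 2\sum_i r_i \;=\; 2r \;=\; 2\SRev(D),
\]
which is exactly what we need. There is no real obstacle here beyond correctly identifying that the choice $t_i = r/r_i$ forces $\sum_i p_i \leq 1$; once that is in hand, the combinatorial double-counting argument already developed in the warm-up carries over verbatim.
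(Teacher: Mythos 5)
Your proof is correct and follows essentially the same route as the paper's: the same chain $\Rev(D_A^T)\leq|A|\SRev(D_A^T)\leq|A|\sum_{i\in A}r_i/p_i$, the same reinterpretation of $\sum_{A\ni i}|A|p_A/p_i$ as the conditional expected tail size $1+\sum_{j\neq i}p_j$, and the same use of $p_j\leq r_j/r$ to bound this by $2$. No gaps.
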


\begin{proof}
We begin similarly to the proof of Proposition~\ref{prop:SRevTail}, using Lemma~\ref{thm:HN} and Lemma~\ref{lem:LY2} to write $\Rev(D_A^T) \leq  |A| \SRev(D_A^T) \leq \sum_{i \in A} |A| r_i/p_i$. Again, summing this over all $A$ yields:

$$\sum_A p_A \Rev(D_A^T) \leq  \sum_i \sum_{A \ni i} |A| p_A r_i /p_i.$$

Just like in Proposition~\ref{prop:SRevTail}, $\sum_{A \ni i} |A| p_A/p_i$ is exactly the expected number of items in the tail, conditioned on $i$ being in the tail. It's again clear that this sum is exactly $1 + \sum_{j \neq i} p_j$. By Lemma~\ref{lem:LY1}, this is at most $1 + \sum_{j \neq i} 1/t_j$. By our choice of $t_i$, the second term is upper bounded by $1$, as $t_j = r/r_j$ and $\sum_j r_j = r$. Therefore, $\sum_{A \ni i} |A| p_A /p_i \leq 2$, and $\sum_A p_A \Rev(D_A^T) \leq 2\SRev(D)$.
\end{proof}

We now turn to bounding the welfare from the core.  We will use the small range of the core to derive an upper bound on the variance of its welfare.  This will allow us to conclude that the welfare is highly concentrated whenever it is sufficiently large relative to $\SRev(D)$.  Thus, if the welfare is ``small'' compared to $\SRev(D)$, then selling separately extracts most of the welfare (within the core); otherwise the welfare concentrates and so bundling extracts most of the welfare (within the core).  The following lemma of~\cite{LiY13} will be helpful for this approach; its proof appears in Appendix \ref{app:one-bidder-proofs} for completeness.

\begin{lemma}\label{lem:var}(\cite{LiY13})
Let $F$ be a one-dimensional distribution with optimal revenue at most $y$ supported on $[0,ty]$. Then $\Var(F) \leq (2t-1)y^2$.
\end{lemma}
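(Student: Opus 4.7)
The plan is to reduce the variance bound to a bound on the second moment, and then compute the second moment via a layer-cake integral that exploits the monopoly-revenue hypothesis.

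First, since $X \sim F$ is nonnegative, $\Var(F) = \mathbb{E}[X^2] - \mathbb{E}[X]^2 \leq \mathbb{E}[X^2]$, so it suffices to prove $\mathbb{E}[X^2] \leq (2t-1)c^2$. Using the standard identity $\mathbb{E}[X^2] = \int_0^\infty 2x\,\Pr[X > x]\,dx$ for nonnegative random variables, and noting that $F$ is supported on $[0, tc]$, this reduces to $\int_0^{tc} 2x\,\Pr[X > x]\,dx$.

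The key step is to convert the revenue hypothesis into a tail bound. The assumption that the optimal revenue of $F$ is at most $c$ means that posting any price $x$ yields revenue $x\,\Pr[X \geq x] \leq c$, so $\Pr[X > x] \leq c/x$ for every $x > 0$. Combined with the trivial bound $\Pr[X > x] \leq 1$, this gives $\Pr[X > x] \leq \min\{1, c/x\}$, with the two bounds crossing at $x = c$. Splitting the integral there, the $[0, c]$ piece is at most $\int_0^c 2x\,dx = c^2$, and the $[c, tc]$ piece is at most $\int_c^{tc} 2c\,dx = 2(t-1)c^2$. Summing yields $\mathbb{E}[X^2] \leq (2t-1)c^2$, as required.

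There is no real obstacle here. The only substantive move is extracting the Markov-style tail estimate $\Pr[X > x] \leq c/x$ from the revenue hypothesis; once that is in hand, the rest is a routine split-integral computation. Dropping the $-\mathbb{E}[X]^2$ term at the start is wasteful in principle but immaterial for the downstream application, where the lemma is invoked with $t \geq 1$ and only needs to drive a Chebyshev-style concentration argument for the welfare of the core.
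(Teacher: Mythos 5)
Your proof is correct and follows essentially the same route as the paper's: bound the variance by the second moment, extract the tail bound $\Pr[X>x]\le c/x$ from the revenue hypothesis, and evaluate a split layer-cake integral (the paper integrates $\Pr[X^2\ge x]$ directly, which is just your integral after the substitution $u=x^2$). No gaps.
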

%

\begin{corollary}\label{cor:var} For a single buyer, and any choice of $t_i$, $\Var(D_i^C) \leq 2t_ir^2_i$.
\end{corollary}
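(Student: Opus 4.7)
The plan is to observe that this corollary is essentially a direct application of Lemma~\ref{lem:var} to each marginal core distribution $D_i^C$, using Lemma~\ref{lem:LY2} to certify the revenue hypothesis required by Lemma~\ref{lem:var}.

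First I would note that because we are in the single-buyer setting ($m=1$), the random variable $v_i^* = v_i$ is one-dimensional, and hence $\Var(D_i^C)$ is simply the variance of a single real-valued random variable sampled from $D_i^C$. By the definition of the core, $D_i^C$ is the distribution of $v_i$ conditioned on $v_i \leq t_i r_i$, so $D_i^C$ is supported on the interval $[0, t_i r_i]$. Next, I would invoke Lemma~\ref{lem:LY2}, which says $\Rev(D_i^C) \leq r_i$; this verifies that $D_i^C$ has optimal revenue at most $r_i$.

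With these two facts in hand, applying Lemma~\ref{lem:var} with parameters $c = r_i$ and $t = t_i$ yields
\[
\Var(D_i^C) \;\leq\; (2t_i - 1)\, r_i^2 \;\leq\; 2 t_i r_i^2,
\]
which is the desired bound. There is no real obstacle here: the work is already encapsulated in Lemma~\ref{lem:var}, and the corollary just packages its hypotheses in the language of the core decomposition for the single-buyer case.
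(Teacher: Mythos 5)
Your proof is correct and is exactly the paper's argument: note that $D_i^C$ is supported on $[0,t_i r_i]$ with $\Rev(D_i^C)\leq r_i$ by Lemma~\ref{lem:LY2}, apply Lemma~\ref{lem:var}, and relax $(2t_i-1)r_i^2$ to $2t_i r_i^2$. Nothing further is needed.
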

\begin{proof}
$\Rev(D_i^C) \leq r_i$, and the distribution $D_i^C$ is supported on $[0,t_i r_i]$. Therefore, plugging into Lemma~\ref{lem:var} (and relaxing) yields the desired bound.
\end{proof}

\begin{proposition}\label{prop:maxCore}
For a single buyer, when $t_i = r/r_i$ for every $i$, it holds that \\
 $\max\{\SRev(D), \BRev(D)\} \geq \frac{1}{4} \Welfare(D_\emptyset^C).$
\end{proposition}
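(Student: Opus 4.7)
The plan is a case split on whether $\mu := \Welfare(D_\emptyset^C)$ is small or large compared to $r = \SRev(D)$. If $\mu \leq 4r$, we are done because then $\SRev(D) = r \geq \mu/4$. The interesting case is $\mu > 4r$, where I would show that pricing the grand bundle on $D$ at $\mu/2$ generates expected revenue at least $\mu/4$, i.e.\ $\BRev(D) \geq \mu/4$.

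The main obstacle is that one cannot directly apply Chebyshev's inequality to $W := \sum_i v_i$ under $D$: the heavy tails of $D_i$ may make $\Var[W]$ infinite, so the raw random variable is not concentrated. I would circumvent this by passing to the truncated variables $\tilde v_i := \min(v_i, t_i r_i) = \min(v_i, r)$ and working with $\tilde W := \sum_i \tilde v_i$. Since $W \geq \tilde W$ pointwise, any lower bound on $\Pr[\tilde W \geq \mu/2]$ transfers immediately to $\Pr_D[W \geq \mu/2]$.

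Two short calculations then control $\tilde W$. First, conditioning on $v_i \leq t_i r_i$ vs.\ $v_i > t_i r_i$ gives $E[\tilde v_i] = (1-p_i)\Welfare(D_i^C) + p_i r$, which is at least $\Welfare(D_i^C)$ since the core is supported on $[0,r]$ and therefore $\Welfare(D_i^C) \leq r$. Summing yields $E[\tilde W] \geq \mu$. Second, the distribution of $\tilde v_i$ is supported on $[0, t_i r_i]$ and has single-item optimal revenue at most $r_i$: any price $p > r$ yields zero, and for $p \leq r$ the survival function of $\tilde v_i$ coincides with that of $v_i$. Lemma~\ref{lem:var} therefore gives $\Var[\tilde v_i] \leq (2t_i - 1)r_i^2 \leq 2 t_i r_i^2 = 2 r r_i$, and independence yields $\Var[\tilde W] \leq 2r \sum_i r_i = 2r^2$.

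Plugging into Chebyshev, $\Pr[\tilde W < \mu/2] \leq \Pr[|\tilde W - E[\tilde W]| > \mu/2] \leq 4\Var[\tilde W]/\mu^2 \leq 8 r^2/\mu^2$, which is strictly less than $1/2$ whenever $\mu > 4r$. Hence $\Pr_D[W \geq \mu/2] \geq \Pr[\tilde W \geq \mu/2] > 1/2$, so pricing the grand bundle at $\mu/2$ yields $\BRev(D) > \mu/4$. Combined with the first case this proves $\max\{\SRev(D), \BRev(D)\} \geq \mu/4$. The truncation trick is the key step: it transforms a distribution with possibly infinite variance into a bounded one whose mean still dominates $\mu$ and whose variance is controlled by Lemma~\ref{lem:var}, with the choice $t_i = r/r_i$ making the per-item variance contribution $2 r r_i$ sum to exactly $2r^2$.
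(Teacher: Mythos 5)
Your proof is correct and follows essentially the same route as the paper: the same case split at $\mu=4r$, the same variance bound $2r^2$ obtained from Lemma~\ref{lem:var} with $t_i=r/r_i$, and the same Chebyshev-plus-grand-bundle-price conclusion (you price at $\mu/2$ for revenue $>\mu/4$; the paper prices at $\tfrac{2}{5}\mu$ for $\tfrac{47}{180}\mu$). Your truncation $\tilde v_i=\min(v_i,r)$ is a clean substitute for the paper's use of the conditional core distribution $D_\emptyset^C$ (Corollary~\ref{cor:var}): it makes explicit, via the pointwise bound $W\geq\tilde W$, the transfer of the sale probability back to $D$, a step the paper handles implicitly through stochastic dominance of $D_i$ over $D_i^C$.
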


\begin{proof}
There are two cases to consider. If $\Welfare(D_\emptyset^C) \leq 4 r$, then we trivially have that $\SRev(D) = r \geq \frac{1}{4} \Welfare(D_\emptyset^C)$ as required.

On the other hand, if $\Welfare(D_\emptyset^C) \geq 4 r$, then Corollary~\ref{cor:var} tells us that $\Var(D_i^C) \leq 2t_ir_i^2$. Summing over all $i$ and recalling that $t_i = r/r_i$ we get
\begin{align*}
\Var(D_\emptyset^C) & = \sum_i \Var(D_i^C)
\leq 2\sum_i t_i r_i^2
= 2r^2.
\end{align*}
So $\Var(D_\emptyset^C) \leq 2r^2$ and $\Welfare(D_\emptyset^C) \geq 4 r$. By Chebyshev's inequality, we get
\begin{align*}
\Pr_{\vec{v}\leftarrow D}\left[\sum_i v_i \leq \frac{2}{5}\cdot \Welfare(D_\emptyset^C)\right] & \leq  \frac{2r^2}{\left(1 - \frac{2}{5}\right)^2 \cdot \Welfare(D_\emptyset^C)^2} \\ &
\leq \frac{25r^2}{72r^2} = \frac{25}{72}.
\end{align*}
Since $\BRev(D)$ is at least the revenue obtained by setting price $\frac{2}{5} \cdot \Welfare(D_\emptyset^C)$ on the grand bundle, $\BRev(D) \geq (\frac{2}{5} \cdot \Welfare(D_\emptyset^C)) \cdot \frac{47}{72} = \frac{47}{180}\cdot \Welfare(D_\emptyset^C)$.  As $\frac{47}{180} > \frac{1}{4}$, $\BRev(D) > \frac{1}{4}\Welfare(D_\emptyset^C)$ as required.
\end{proof}

Combining Propositions \ref{prop:maxTail} and \ref{prop:maxCore} with Lemma \ref{lem:easy} yields Theorem \ref{thm:main}. Our analysis was improved to provide a bound of 5.2 in~\cite{MaS15}.~\cite{Rubinstein16} provides a construction $D$ (with $m =1$) such that $\max\{\SRev(D),\BRev(D)\} = (1/2+o(1))\cdot \Rev(D)$ (proving the analysis cannot be improved beyond a bound of $2$). It is an interesting open question to further narrow the gap between $2$ and $5.2$.


\section{Revenue Bounds for Multiple Buyers}
\label{sec:multi}

Here we extend our results to multiple buyers with all valuations sampled independently (again, not necessarily identically). We first show in Theorem~\ref{thm:manySRev} that selling items separately achieves a logarithmic (in the number of items $n$) approximation to the optimal revenue. In Section~\ref{sec:multiconcentration} (Theorem~\ref{thm:manyMax}), we explore the limits of our techniques in the multi-buyer case. Specifically, we establish that, like in the single buyer case, the only case in which selling items separately fails to achieve a good approximation is if welfare is highly concentrated. Unfortunately, such concentration is no longer sufficient to achieve a constant approximation by bundling all items together. This is so because even though the welfare is concentrated, the allocation of items to buyers which provides such welfare can change dramatically between realizations. Indeed, in Proposition~\ref{prop:lb-many-iid} we show not only that $\BRev(D)$ fails to provide a constant approximation to the optimal mechanism, but even $\PRev(D)$ fails, and this is so even when item values are sampled i.i.d. for all items and buyers.

Finally, we explore the connection between $\SRev(D),\BRev(D)$, and $\PRev(D)$ for multiple buyers. In Section~\ref{sec:multiiid} we establish that when \emph{either buyers or items} (not necessarily both) are i.i.d., then $\max\{\SRev(D),\BRev(D)\}$ achieves a constant-factor approximation to $\PRev(D)$ (Theorems~\ref{thm:previid} and~\ref{thm:previiditems}). We also establish that this approximation guarantee fails when neither buyers nor items are i.i.d. (Proposition~\ref{prop:lb-prev-max}) --- $\max\{\SRev(D),\BRev(D)\}$ can guarantee at best an $\Omega(\ln(n))$-approximation to $\PRev(D)$ in general (which is already achieved by $\SRev(D)$ itself, even when compared to $\Rev(D)$). 

Together, these provide a complete picture of the gaps between these three quantities (see {Table \ref{tab:one-cor} in Appendix~\ref{app:tables} for a summary of these results}).


\subsection{Extension: $(\ln n+6) \SRev \geq \Rev$}\label{sec:multisrev}

We first show that selling items separately achieves a logarithmic (in the number of items, $n$) approximation to the optimal revenue.

\begin{theorem}\label{thm:manySRev}
For any number $m$ {of buyers and $n$ items,} $(2+2e^{1/4} + \ln 4 + \ln n) \SRev(D) \geq \Rev(D)$. (Note that $2+2e^{1/4} + \ln 4 < 6$.)
\end{theorem}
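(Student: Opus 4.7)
The plan is to follow the core-decomposition template of Theorem~\ref{thm:SRev}, setting $t_i = 4n$ for every item so that Lemma~\ref{lem:LY1} gives $p_i \leq 1/(4n)$ and hence $\sum_i p_i \leq 1/4$. Lemma~\ref{lem:easy} then yields
\[
\Rev(D) \;\leq\; \Welfare(D_\emptyset^C) + \sum_A p_A\,\Rev(D_A^T),
\]
and I will bound the two summands by $(1+\ln 4+\ln n)\SRev(D)$ and $(1+2e^{1/4})\SRev(D)$ respectively, which add to the claimed coefficient.

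The core bound carries over unchanged from Proposition~\ref{prop:SRevCore}. The inequality $\Pr[v_i^* > x] \leq r_i/x$ holds even for many buyers, because $r_i$ is the Myerson revenue for item $i$ with $m$ buyers and posting price $x$ on item $i$ to the highest bidder gives revenue at least $x\,\Pr[v_i^* \geq x]$. Integrating gives $\Welfare(D_i^C) \leq r_i(1+\ln t_i)$, and summing over items with $t_i=4n$ delivers $\Welfare(D_\emptyset^C) \leq (1+\ln 4+\ln n)\SRev(D)$.

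The tail bound is the main obstacle. The single-buyer argument of Proposition~\ref{prop:SRevTail} used Lemma~\ref{thm:HN} to write $\Rev(D_A^T)\leq |A|\SRev(D_A^T)$, but with $m$ buyers Lemma~\ref{thm:HN} reads $\Rev\leq nm\,\SRev$, and the spurious $m$ factor is fatal. My plan is to split the sum by $|A|$: the $|A|=0$ term is zero; the $|A|=1$ terms contribute at most $\sum_i p_i(r_i/p_i) = r$ by Lemma~\ref{lem:LY2}; and for $|A|\geq 2$ one needs an $m$-free upper bound on $\Rev(D_A^T)$. Granting a bound of the form $\Rev(D_A^T)\leq 2^{|A|}\SRev(D_A^T)$ for $|A|\geq 2$, combining it with Lemma~\ref{lem:LY2} and the exponential-moment computation
\begin{align*}
\sum_{A:|A|\geq 2}p_A\,2^{|A|}\sum_{i\in A}\frac{r_i}{p_i}
&\leq \sum_i r_i\,\mathbb{E}[2^{|A|}\mid i\in A] \\
&= 2\sum_i r_i\prod_{j\neq i}(1+p_j) \;\leq\; 2e^{\sum_j p_j}\,\SRev(D) \;\leq\; 2e^{1/4}\,\SRev(D)
\end{align*}
yields a contribution of at most $2e^{1/4}\SRev(D)$ from the $|A|\geq 2$ terms, and hence the tail bound $(1+2e^{1/4})\SRev(D)$. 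Producing the many-buyer bound $\Rev(D_A^T)\leq 2^{|A|}\SRev(D_A^T)$ is the chief remaining obstacle; the natural inductive attempt via Lemma~\ref{lem:HN1} on $|A|$ is complicated by the potentially infinite conditional welfare of a tail distribution (e.g., for Equal-Revenue tails), so one must either peel items off in a controlled order or introduce an auxiliary truncation (possibly via a further core decomposition within each tail) to keep all welfare terms finite. Once that $m$-free bound is in hand, combining the core and tail estimates gives the theorem.
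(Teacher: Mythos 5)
Your core bound is fine and matches the paper's Proposition~\ref{manyCore}, and your exponential-moment computation for the tail is arithmetically sound \emph{given} the bound you grant yourself. But the step you flag as "the chief remaining obstacle" is in fact the entire content of the theorem: an $m$-free inequality of the form $\Rev(D_A^T)\leq 2^{|A|}\SRev(D_A^T)$ (or $a|A|\cdot\SRev$ for a constant $a$) is not available a priori, and your sketched routes to it (induction via Lemma~\ref{lem:HN1}, or an auxiliary truncation) are not carried out. Without it the proof does not close. Note also that even the finished theorem only yields $\Rev\leq 4n\SRev$ in the independent setting (Corollary~\ref{cor:many}), i.e.\ $\Rev(D_A^T)\leq 4|A|\SRev(D_A^T)$, which is weaker than your assumed $2^{|A|}$ when $|A|\in\{2,3\}$; so the specific bound you posit is not even a consequence of the result you are trying to prove, though $4|A|$ would serve your arithmetic equally well.

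The paper's way around this is an \emph{amplification} (bootstrapping) argument that you are missing. One starts from the $m$-dependent bound $\Rev(D)\leq nm\,\SRev(D)$ of Lemma~\ref{thm:HN}, i.e.\ $a=m$ in the template $an\,\SRev(D)\geq\Rev(D)$. Theorem~\ref{thm:amplification} then shows that whenever $an\,\SRev\geq\Rev$ holds for \emph{every} distribution in the independent setting, one round of the core decomposition with $t_i=c\cdot a\cdot n$ --- using the hypothesis itself, applied to the $|A|$-item product $D_A^T$, to get $\Rev(D_A^T)\leq a|A|\SRev(D_A^T)$ for $|A|\geq 2$, and $\Rev=\SRev$ for $|A|=1$ --- yields the improved bound $(2+2e^{1/a}+\ln a+\ln n)\SRev\geq\Rev$. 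Iterating this map drives $a$ down to at most $4$ (no $a\geq 4$ is a fixed point for $n\geq 2$), and one final application with $a=4$, $c=1$ produces exactly the constant $2+2e^{1/4}+\ln 4+\ln n$. So the theorem's $m$-independence is not obtained from a standalone tail bound but is washed out through iteration; to complete your proof you would need to replace your granted inequality with this self-improving scheme (or an equivalent device).
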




Our proof will proceed via {amplification}.  We will begin with the (weak) bound on $\SRev$ vs. $\Rev$ from Lemma \ref{thm:HN}, then show in Theorem \ref{thm:amplification} how to amplify any such bound into an improved bound.  We will then iterate this amplification process over and over, until we reach the desired logarithmic approximation (which will be a fixed point of the amplification process).  To prove the amplification theorem, we use an approach similar to the single-buyer analysis from Section \ref{subsec:SRev}.  That is, we will apply the {Core-Tail}  Decomposition Lemma (Lemma \ref{lem:easy}), then bound the revenue of the tail and the welfare of the core with respect to $\SRev(D)$. {The first step in the proof of Theorem \ref{thm:manySRev} is the following amplification theorem, where any current bound (in terms of $a$) is improved.}


\begin{theorem}[Amplification]\label{thm:amplification}
Assume that for some $a,m > 0$, it holds that for any number $n$ of items and all $D$ on $m$ buyers and $n$ items that $a\cdot n\cdot \SRev(D) \geq \Rev(D)$. Then, for any $c \geq 1/a$, and all $D$ on $m$ buyers and $n$ items, $(2+2e^{1/ca}/c + \ln c + \ln a + \ln n) \SRev(D) \geq \Rev(D)$ as well. In particular, when $a \geq 1$, setting $c = 1$ yields $(2+2e^{1/a} + \ln a + \ln n) \SRev(D) \geq \Rev(D)$.
\end{theorem}

To prove Theorem \ref{thm:amplification}, we will apply the Core-Tail Decomposition Lemma (Lemma \ref{lem:easy}), using $t_i = c\cdot a\cdot n$ for each $i$.  Theorem \ref{thm:amplification} will then follow from bounds on the revenue from the tail and the expected welfare from the core, which we establish in the two following propositions.

\begin{proposition}\label{manyTail} Assume that for some $a,m > 0$, it holds that for any number $n$ of items and all $D$ on $m$ buyers and $n$ items that $a\cdot n\cdot \SRev(D) \geq \Rev(D)$. Then for all $D$ on $m$ buyers and $n$ items and $c > 0$, setting $t_i = c\cdot a\cdot n$ for all $i$ implies $\sum_A p_A \Rev(D_A^T) \leq (1+2e^{1/ca}/c) \SRev(D)$.
\end{proposition}

\begin{proof}
The following proof is similar to that of Proposition~\ref{prop:SRevTail}, with two differences. First, we start with the bound $\Rev(D_A^T) \leq a|A|\SRev(D_A^T)$ (since that is our starting hypothesis, instead of $\Rev(D_A^T) \leq |A|\SRev(D_A^T)$ as in the single-buyer case). Second, we have to make use of the fact that when there is only one item, $\SRev(D_A^T)=\Rev(D_A^T)$ and use this tighter bound whenever $|A| = 1$. We continue now with the proof.

By hypothesis and Lemma~\ref{lem:LY2}, for $A$ with $|A|>1$ it holds that $\Rev(D_A^T) \leq  a|A| \SRev(D_A^T) \leq \sum_{i \in A} a|A| r_i/p_i$. Combining with $\SRev(D_A^T)=\Rev(D_A^T)$ for the case that $|A|=1$, we can rewrite
$$\sum_A p_A \Rev(D_A^T) \leq  \sum_i\left( r_i + \sum_{j = 2}^n aj \sum_{A \ni i,|A| = j} p_A r_i /p_i\right)$$
Observe that $p_A = (\prod_{i \in A} p_i)(\prod_{i \notin A} (1-p_i))\leq \prod_{i \in A} p_i$ and thus
$p_A r_i /p_i \leq \prod_{k \in A - \{i\}} p_k r_i$.
We then have that
$$\sum_{A \ni i,  |A| = j} p_A r_i /p_i \leq r_i \sum_{A \ni i, |A| = j} \prod_{k \in A - \{i\}} p_k.$$
Furthermore, by Lemma~\ref{lem:LY1}, we have that each $p_k \leq 1/(c\cdot a\cdot n)$, so we have
\begin{align*}
\sum_{j=2}^n \sum_{A \ni i, |A| = j} aj \prod_{k \in A - \{i\}} p_k & \leq \sum_{j=2}^{n}a j\binom{n-1}{j-1}/(c\cdot a\cdot n)^{j-1}\\
&\leq \sum_{j=2}^{n}  \frac{j (n-1)^{j-1}}{(j-1)! c^{j-1} a^{j-2}n^{j-1}}\\
&\leq \sum_{j=2}^n \frac{2(j-1)}{(j-1)!c^{j-1}a^{j-2}}\\
&\leq \sum_{j=2}^n \frac{2}{(j-2)!c^{j-1}a^{j-2}}\\
&\leq \frac{2e^{1/ca}}{c}
\end{align*}
The last inequality makes use of the fact that $\sum_{j=0}^\infty \frac{1}{j!c^ja^j}$ is the Taylor expansion for $e^{x/ca}$ evaluated at $x = 1$. Adding back the $j=1$ term that we handled outside the sum (making use of the fact that $\SRev = \Rev$ on single-item distributions) and 
summing over all $i$ of $r_i$ times the above inequality yields the proposition.
\end{proof}

The following bound on the welfare from the core follows similarly to Proposition \ref{prop:SRevCore}. 

\begin{proposition}\label{manyCore}
For any number of buyers and any {positive} $a$ and $c$ with $a\cdot c \geq 1/n$, if $t_i = c\cdot a\cdot n$ for all $i$, then $(1+\ln c +\ln a+ \ln n) \SRev(D) \geq  \Welfare(D_\emptyset^C)$.
\end{proposition}
\begin{proof}
Note that $\Welfare(D_\emptyset^C) = \sum_i \Welfare(D_i^C) \leq \sum_i \int_0^{canr_i} Pr[v^*_i > x]dx$ (recall that $v^*_i:=\max_j \{v_{ij}\}$). The last inequality would be equality if we replaced $v^*_i$ with a random variable that is the maximum value in a sample drawn from $D_i^C$, but since $v^*_i$ stochastically dominates such a random variable, we get an inequality instead. As the optimal revenue of $D_i$ is $r_i$, this means that $Pr[v^*_i > x] \leq \min\{1,r_i/x\}$. So we have
\begin{align*}
\Welfare(D_i^C) & \leq \int_0^{r_i} dx + \int_{r_i}^{canr_i} (r_i/x) dx\\
&= r_i + r_i (\ln (c\cdot a\cdot n\cdot r_i) - \ln (r_i) )\\
&= r_i (1+\ln n + \ln c + \ln a)
\end{align*}
Summing this bound over all $i$ yields the proposition.
\end{proof}

Theorem \ref{thm:amplification} then follows from Propositions \ref{manyTail} and \ref{manyCore}, together with Lemma \ref{lem:easy}.  We now show how to prove Theorem \ref{thm:manySRev} using Theorem \ref{thm:amplification}.

\begin{proof}[Proof of Theorem \ref{thm:manySRev}]
The goal is to iteratively apply Theorem~\ref{thm:amplification} starting with $a = m$ (which is a valid hypothesis, by Lemma~\ref{thm:HN}), until we can apply it once with $a \leq 4$. 

So let us start with an application of Theorem~\ref{thm:amplification} from $a=m$. This yields a bound of the form $a'n \SRev(D) \geq \Rev(D)$ for some new $a'$. First, perhaps already $a' \leq 4$. If not, we can then apply Theorem~\ref{thm:amplification} again, taking $a$ to be this new value $a'$.  We can iteratively apply Theorem~\ref{thm:amplification} until we reach $a \leq 4$. One can verify that, for all $n \geq 2$, $a \geq 4$, the function $f(a) = (2 + 2e^{1/a} + \ln a + \ln n)/n$ satisfies $f(a) < a - 1$.  Therefore, $mn-4$ iterations suffice to get $a \leq 4$. Once the hypothesis holds with some $a \leq 4$, we can apply Theorem~\ref{thm:amplification} one final time to conclude Theorem~\ref{thm:manySRev}.
\end{proof}


\subsection{Comparing $\SRev(D),\BRev(D),\PRev(D)$ for multiple buyers}\label{sec:multiiid}
In this section, we investigate the relationship between $\max\{\SRev(D),\BRev(D)\}$ and $\PRev(D)$ for multiple buyer with independent items. The main results of this section establish that when \emph{either} buyers \emph{or} items are i.i.d. that the gap is at most a constant factor. Proposition~\ref{prop:lb-prev-max} establishes that this gap can be $\Omega(\ln n)$ when neither buyers nor items are i.i.d. (which is tight, as $\max\{\SRev(D),\BRev(D)\} \geq \SRev(D) \geq \Omega(1/\ln(n)) \cdot \Rev(D) \geq \PRev(D)$ by Theorem~\ref{thm:manySRev}).
\begin{theorem}\label{thm:previid} Let $D$ have any number of items and any number of i.i.d. buyers (that is, $D^j = D^{j'}$ for all buyers $j,j'$, but perhaps $D_i \neq D_{i'}$ for some $i,i'$). Then $\PRev(D) \leq O(\max\{\SRev(D),\BRev(D)\})$.
\end{theorem}

\begin{theorem}\label{thm:previiditems} Let $D$ have any number of i.i.d. items and any number of buyers (that is, $D_i= D_{i'}$ for all items $i,i'$, but perhaps $D^j \neq D^{j'}$ for some $j,j'$). Then $\PRev(D) \leq O(\max\{\SRev(D),\BRev(D)\})$.
\end{theorem}

The intuition for Theorems~\ref{thm:previid} and~\ref{thm:previiditems} is the following. Consider any partition mechanism which partitions the items into $S_1,\ldots, S_k$, and refer by $D_{S_i}$ to the distribution $D$ restricted to only items in $S_i$ (and therefore the partition mechanism achieves revenue $\sum_i \BRev(D_{S_i})$). We say that $S_i$ is \emph{separable} if $\SRev(D_{S_i}) = \Omega(\BRev(D_{S_i}))$. If $\mathcal{S}$ denotes the set  of all indices $i$ for which $S_i$ is separable, it then immediately follows that $\SRev(D) = \Omega(\sum_{i \in \mathcal{S}} \BRev(D_{S_i}))$ (Lemma~\ref{lem:separable}). Similarly, we say that $S_i$ is \emph{bundlable} for $j$ if 
selling $S_i$ as a bundle \emph{only to buyer $j$}, ignoring all other buyers, generates revenue $\Omega(\BRev(D_{S_i}))$. We then argue that if $\mathcal{B}_j$ denotes the set  of all indexes $i$ for which $S_i$ is bundlable for buyer $j$, that $\BRev(D) = \Omega(\sum_{i \in \mathcal{B}
_j} \BRev(D_{S_i}))$ (Lemma~\ref{lem:bundlable}). This step is not quite as trivial as Lemma~\ref{lem:separable}, but still fairly simple. The most interesting step of the proof for both theorems is showing that \emph{when either buyers or items are i.i.d.} there is an approximately-optimal partition mechanism and buyer $j$ such that \emph{every set $S_i$ that is not separable, is bundlable for that buyer $j$}.
The key step is stated formally in Propositions~\ref{prop:good} and~\ref{prop:seporbund}, and full details for all proofs are in Appendix~\ref{app:many-bidder-proofs}.

Proposition~\ref{prop:lb-prev-max} below establishes, however, that at least one of the i.i.d.~assumptions used in Theorems~\ref{thm:previid} and~\ref{thm:previiditems} is necessary.

\begin{definition}
We denote by $\er_k$ the Equal Revenue curve truncated at $k$: the single-dimensional distribution with $F(x) = 1-1/x$ for all $x \in [1,k]$, $F(x) = 0$ for all $x < 1$, and $F(x) = 1$ for all $x \geq k$ (i.e. it is an equal revenue curve with all mass above $k$ moved to a point mass at $k$).
\end{definition}

\begin{proposition}
\label{prop:lb-prev-max}
Let $D$ have $n$ items and $m = \sqrt{n}$ buyers. Partition the items into $\sqrt{n}$ disjoint sets of size $\sqrt{n}$, $S_1,\ldots, S_m$. Let buyer $j$ have value $0$ for every item not in $S_j$, and value independently drawn from $\er_{n^{1/8}}$ for each item in $S_j$. Then $\max\{\SRev(D), \BRev(D)\}\leq \PRev(D)/\Omega(\log n)$.
\end{proposition}

A full proof of Proposition~\ref{prop:lb-prev-max} appears in Appendix~\ref{app:many-bidder-proofs}. The high level idea is that each buyer $j$ is only interested in items in $S_j$, which are disjoint. So we should partition the items and run the optimal single-buyer auction within each (which would generate revenue $\Omega(n \ln n)$). Selling instead each partition separately to the intended buyer generates revenue only $O(n)$, and selling the entire bundle together causes $n-\sqrt{n}$ items to be wasted, and generates revenue only $O(\sqrt{n} \ln (n))$. To help process this example in the language of our proof outline of Theorems~\ref{thm:previid} and~\ref{thm:previiditems}, observe that no $S_i$ is separable (because $\SRev(D_{S_i}) = \sqrt{n}$, while $\BRev(D_{S_i}) = \Theta(\sqrt{n} \cdot \ln (n))$). Also, observe that \emph{only} $S_j$ is bundlable for $j$ (because $j$ has value $0$ for all items not in $S_j$). The key step (Proposition~\ref{prop:seporbund}) towards Theorems~\ref{thm:previid} and~\ref{thm:previiditems} states that this phenomenon cannot occur in an example with either i.i.d.~items or i.i.d.~buyers.

\subsection{A Lower Bound: $\PRev\leq \Rev/\Omega(\log n)$ even for i.i.d. Item Values}

We next show that there is a setting with many buyers with item valuations that are sampled i.i.d from the same distribution, for which $\PRev(D)$ (and thus also $\max\{\SRev(D), \BRev(D)\}$) provides a poor approximation to $\Rev(D)$. Intuitively, the key feature our example possesses is that for any \emph{fixed} set $S$ of $\Theta(\sqrt{n})$ items, it is extremely unlikely that any buyer values $S$ particularly highly. Yet, for all buyers, it is extremely \emph{likely} that they value \emph{some} set of $\Theta(\sqrt{n})$ items highly. The former property allows us to claim that all partition mechanisms perform poorly, while the latter property allows us to design a posted-price mechanism which performs well. The construction will make use of the following distribution:


\begin{proposition}
\label{prop:lb-many-iid}
When $D$ is such that each of $m=\sqrt{n}$ buyers have i.i.d. values for each of $n$ items drawn from a distribution that is a point-mass at $0$ with probability $1-1/\sqrt{n}$ and drawn from $\er_{n^{1/8}}$ with the remaining probability: $\PRev(D)\leq \Rev(D)/\Omega(\log n)$.
\end{proposition}

A full proof of Proposition~\ref{prop:lb-many-iid} appears in Appendix~\ref{app:many-bidder-proofs}. The high-level idea is that only one buyer in expectation has non-zero value for each item, but each buyer has non-zero value for $\sqrt{n}$ items in expectation. As a result, selling the entire grand bundle at once achieves poor revenue ($O(\sqrt{n}\ln(n))$), since $n-\sqrt{n}$ are likely to be unvalued by the winner. Similarly, selling separately is suboptimal ($O(n)$) because each buyer's value for the $\sqrt{n}$ items they like concentrates around its expectation, and selling separately doesn't exploit this. Instead, a posted-price mechanism which allows each buyer to pick any subset of $\Theta(\sqrt{n})$ remaining items for $\Theta(\sqrt{n} \ln(n))$ has the property that with high probability, each buyer wishes to purchase a set of remaining items, generating revenue $\Omega(n \ln(n))$. Intersetingly, note that, because $\SRev(D)$ is a $(\ln(n)+6)$-approximation to $\Rev(D)$, and we have just claimed that this posted-price mechanism achieves a $\Theta(\ln(n))$-factor more revenue than $\PRev(D)$ (and therefore $\SRev(D)$), this posted-price mechanism must be a constant-factor approximation.



\subsection{A Concentration Result}\label{sec:multiconcentration}

Finally, we explore the limits of our single-buyer approach for multiple buyers. Specifically, we establish sufficient conditions for $\SRev(D)$ to be a constant-factor approximation to $\Rev(D)$ with multiple buyers.  We will show (Theorem \ref{thm:manyMax}) that this occurs unless the welfare of $D$ is sufficiently well concentrated around its expectation. Proposition~\ref{prop:lb-many-iid} establishes that this concentration \emph{does not suffice} even for $\PRev(D)$ to guarantee a constant-factor approximation, so this result identifies that the main challenge in extending our work to multiple buyers is leveraging concentration of welfare to get a constant-factor approximation (and~\cite{Yao15} accomplishes this via an entry fee).

We begin with a corollary of Theorem \ref{thm:manySRev}, which will be useful for our analysis.
\begin{corollary}
\label{cor:many}
For any number of buyers and $n$ items, $4\cdot n \cdot \SRev(D) \geq \Rev(D)$.
\end{corollary}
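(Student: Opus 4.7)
The plan is to derive this corollary as a direct numerical consequence of Theorem~\ref{thm:manySRev}, together with a trivial handling of the single-item case. First I would dispose of $n=1$: when there is only a single item for sale, running Myerson's optimal auction for that item is itself the revenue-optimal mechanism, so $\SRev(D)=\Rev(D)$, and $4n\SRev(D)=4\Rev(D)\geq\Rev(D)$ holds trivially.

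For $n \geq 2$, I would invoke Theorem~\ref{thm:manySRev}, which gives
\[
\Rev(D)\;\leq\;(2+2e^{1/4}+\ln 4+\ln n)\,\SRev(D),
\]
and the remark embedded in that theorem already notes $2+2e^{1/4}+\ln 4<6$. So it suffices to verify the purely numerical inequality $2+2e^{1/4}+\ln 4+\ln n \leq 4n$ for every integer $n\geq 2$. Since $2+2e^{1/4}+\ln 4<6$, it is enough to show $6+\ln n\leq 4n$ for $n\geq 2$. Let $g(n)=4n-\ln n-6$. Then $g(2)=8-\ln 2-6=2-\ln 2>0$, and $g'(x)=4-1/x\geq 7/2>0$ for $x\geq 2$, so $g$ is strictly increasing and hence positive on $[2,\infty)$. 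Combining with the bound from Theorem~\ref{thm:manySRev} finishes the proof.

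There is essentially no obstacle here: the substantive content is already packaged in Theorem~\ref{thm:manySRev} (and in fact in the amplification process of Theorem~\ref{thm:amplification}, which reaches $a=4$ as the key intermediate quantity), so the corollary only needs the observation that the leading coefficient $4$ in ``$4n$'' is big enough to swamp both the additive constant $2+2e^{1/4}+\ln 4$ and the $\ln n$ term as soon as $n\geq 2$. The only thing worth being careful about is not inadvertently claiming the inequality at $n=1$ from the numerical argument alone, which is why the $n=1$ case must be handled separately via the trivial identity $\SRev(D)=\Rev(D)$.
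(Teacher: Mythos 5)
Your proof is correct and matches the paper's own argument: both derive the corollary directly from Theorem~\ref{thm:manySRev} via the numerical inequality $6+\ln n\leq 4n$ for $n\geq 2$. Your explicit handling of the $n=1$ case (where the numerical inequality fails but $\SRev(D)=\Rev(D)$ trivially) is a small point of extra care that the paper leaves implicit.
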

\begin{proof}
This is a direct application of Theorem~\ref{thm:manySRev} and noting that $6 + \ln n \leq 4n$ for all $n \geq 2$.
\end{proof}

We next prove an alternative bound on the revenue from the tail of the distribution $D$, using a familiar choice of $t_i$.  

\begin{proposition}\label{prop:manymaxTail} For any number of buyers, if $t_i = 4r/r_i$ for all $i$, then $\sum_A p_A \Rev(D_A^T) \leq 5 \SRev(D)$.
\end{proposition}

\begin{proof}
Again, we begin by rewriting $\sum_A p_A \Rev(D_A^T)$ using Corollary~\ref{cor:many}, and reordering:
$$\sum_A p_A \Rev(D_A^T) \leq \sum_A 4p_A|A|\sum_{i \in A} r_i/p_i = \sum_i r_i \sum_{A \ni i}4 |A|p_A/p_i = 4 \sum_i r_i \sum_{A \ni i} |A| p_A/p_i.$$
Again, the value $\sum_{A \ni i} |A| p_A/p_i$ is exactly the expected number of items in the tail, conditioned on $i$ being in the tail. Therefore, $\sum_{A \ni i} |A| p_A/p_i \leq 1 + \sum_{j \neq i} p_j \leq 1+1/4$ by Lemma~\ref{lem:LY1}. Therefore, $\sum_A p_A \Rev (D_A^T) \leq 5 \sum_i r_i = 5 \SRev$.
\end{proof}

We are now ready to establish the claimed bound between $\SRev$ and $\Rev$, subject to the welfare of $D$ not being too concentrated around its expectation.

\begin{definition} We say that a one-dimensional distribution $F$ is $d$-concentrated if there exists a value $C$ such that $Pr_{x \sim F}[|x - C| \leq C/2] \geq d$.
\end{definition}

\begin{theorem}\label{thm:manyMax}
For any number of buyers, and any $c \geq 4\sqrt{2}$, either $(c+5) \SRev(D) \geq \Rev(D)$ or the welfare of $D$ (the random variable with expectation $\Welfare(D)$) is $(3/4 - \frac{24}{c^2})$-concentrated.
\end{theorem}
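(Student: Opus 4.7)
The plan is to mirror the structure of the single-buyer argument in Section~\ref{sec:one-ind}: apply the Core Decomposition Lemma (Lemma~\ref{lem:easy}) with the weighted thresholds $t_i = 4r/r_i$ (exactly as in Proposition~\ref{prop:manymaxTail}), then split into two cases based on whether the core welfare $\Welfare(D_\emptyset^C)$ is small or large compared to $\SRev(D) = r$.

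First, Proposition~\ref{prop:manymaxTail} already handles the tail: under the choice $t_i = 4r/r_i$ we have $\sum_A p_A \Rev(D_A^T) \leq 5\SRev(D)$. So the Core Decomposition Lemma yields $\Rev(D) \leq \Welfare(D_\emptyset^C) + 5\SRev(D)$. If $\Welfare(D_\emptyset^C) \leq c\cdot \SRev(D)$, we immediately conclude $(c+5)\SRev(D) \geq \Rev(D)$ and are done. The remaining case is $\Welfare(D_\emptyset^C) > c\,r$, where we must establish concentration.

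For the concentration step, I would first bound the variance of the core welfare using Corollary~\ref{cor:var}: since $t_i = 4r/r_i$, $\Var(D_i^C) \leq 2 t_i r_i^2 = 8 r\, r_i$, and summing over $i$ gives $\Var(D_\emptyset^C) \leq 8r^2$. Letting $W_C$ denote the core welfare random variable with mean $C := \Welfare(D_\emptyset^C) > cr$, Chebyshev's inequality yields
\[
\Pr\!\left[\,|W_C - C| > C/2\,\right] \;\leq\; \frac{4\cdot 8 r^2}{C^2} \;<\; \frac{32}{c^2}.
\]
Next, I would transfer this concentration from $W_C$ back to the welfare $W = \sum_i v_i^*$ of the original distribution $D$. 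The key observation is that conditioned on every item lying in its core (i.e.\ the tail set $A$ is empty), $W$ has exactly the same distribution as $W_C$. The probability of this conditioning event is $p_\emptyset = \prod_i(1-p_i) \geq 1 - \sum_i p_i \geq 1 - \sum_i 1/t_i = 1 - \sum_i r_i/(4r) = 3/4$ by Lemma~\ref{lem:LY1}. Therefore
\[
\Pr\!\left[\,|W - C| \leq C/2\,\right] \;\geq\; p_\emptyset \cdot \Pr\!\left[\,|W_C - C| \leq C/2\,\right] \;\geq\; \tfrac{3}{4}\!\left(1 - \tfrac{32}{c^2}\right) \;\geq\; \tfrac{3}{4} - \tfrac{24}{c^2},
\]
exhibiting the required $(3/4 - 24/c^2)$-concentration (and explaining the hypothesis $c \geq 4\sqrt{2}$, which ensures the Chebyshev bound is nontrivial).

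I expect the main subtlety to be the transfer from $W_C$-concentration to $W$-concentration: one has to recognize that the distribution of $W$ conditioned on the empty tail set is literally the core distribution (this is immediate from the definitions in Section~\ref{sec:core}), and then lean on the union-bound control $\sum_i p_i \leq 1/4$ afforded by the non-uniform thresholds. Every other step is a routine computation, and no genuinely new ingredient beyond what Propositions~\ref{prop:manymaxTail} and Corollary~\ref{cor:var} already provide is required.
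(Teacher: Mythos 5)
Your proposal is correct and follows essentially the same route as the paper's proof: the same thresholds $t_i = 4r/r_i$, the same appeal to Proposition~\ref{prop:manymaxTail} with Lemma~\ref{lem:easy}, the same case split, the same variance bound via Corollary~\ref{cor:var} plus Chebyshev, and the same transfer of concentration from $D_\emptyset^C$ to $D$ via $\prod_i(1-p_i) \geq 3/4$. The only (immaterial) difference is that you bound $\prod_i(1-p_i)$ by a union bound, whereas the paper minimizes the product over the constraint set; both yield exactly $3/4$.
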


\begin{proof}
Let all $t_i = 4r/r_i$. Then combining Proposition~\ref{prop:manymaxTail} and Lemma~\ref{lem:easy} yields
$$5 \cdot \SRev(D) + \Welfare(D_\emptyset^C) \geq \Rev(D).$$
There are two cases to consider. First, perhaps $c\cdot \SRev(D) \geq \Welfare(D_\emptyset^C)$. In this case, we have $(c+5)\SRev(D) \geq \Rev(D)$ and the claim follows.

In the other case, $\Welfare(D_\emptyset^C) \geq c\cdot \SRev(D)$. In this case, Corollary~\ref{cor:var} tells us that $\Var(D_i^C) \leq 2t_ir_i^2$. Summing over all $i$ and recalling that $t_i = 4r/r_i$, we get
\begin{align*}
\Var(D_\emptyset^C) & \leq 2\sum_i t_i r_i^2
= 2\sum_i (4r) r_i
= 8r^2.
\end{align*}
So $\Var(D_\emptyset^C) \leq 8r^2$ and $\Welfare(D_\emptyset^C) \geq c r$. By Chebyshev's inequality, we get
\begin{align*}
Pr\left[\left|\sum_i v^*_i - \Welfare(D_\emptyset^C)\right| \geq \Welfare(D_\emptyset^C)/2\right] & \leq  \frac{8r^2}{\Welfare(D_\emptyset^C)^2/4}
\leq \frac{32r^2}{c^2r^2} = \frac{32}{c^2}
\end{align*}
meaning that the welfare of $D_\emptyset^C$ is $(1-\frac{32}{c^2})$-concentrated. The last step is observing that $\vec{v}$ is sampled in the support of $D_\emptyset^C$ with probability exactly $\prod_i (1-p_i)$. As $\sum_i p_i \leq \sum_i 1/t_i \leq \sum_i \frac{r_i}{4r}= 1/4$ and each $p_i \geq 0$, this is minimized when exactly one $p_i$ is $1/4$ and the rest are $0$, yielding $\prod_i (1-p_i) = 3/4$. So with probability at least $3/4$ $\vec{v}$ is in the support of $D_\emptyset^C$. When this happens, the welfare is $(1-\frac{32}{c^2})$ concentrated. So the welfare of $D$ is $(3/4 - \frac{24}{c^2})$-concentrated.
\end{proof}


\section{One Buyer with Correlated Values}
\label{sec:one-cor}
In this section, we study the relationship between \SRev(D), Max\{\SRev(D), \BRev(D)\}, and \PRev(D)\ for a single buyer with correlated values. Prior work of~\cite{BriestCKW15,HartN13} already shows that there is no hope of obtaining a non-zero bound between any of these quantities and \Rev(D) (because partition mechanisms are deterministic, and~\cite{BriestCKW15,HartN13} show that no deterministic mechanism achieves any non-zero approximation), even when there are only two items. But it is still important to understand the relationship between these mechanisms of varying complexity even if their revenue cannot compare to that of the optimal mechanism.
We show in Theorem~\ref{thm:one-cor-srev-brev} that for any correlated distribution $D$ for a single buyer and $n$ items, $\SRev(D)$ is a $O(\ln n)$ approximation to $\BRev(D)$, and thus also to Max\{\SRev(D), \BRev(D)\} and \PRev(D).\footnote{As \SRev\ approximates \BRev\ for any set of items, it can do so for any part in the partition in \PRev\ separately, and thus also approximates \PRev.}
We then show in Proposition~\ref{prop:lb-cor} that this bound is tight, there exists a distribution $D$ witnessing $\max\{\SRev(D), \BRev(D)\} \leq \PRev(D) / \Omega(\ln n)$. In other words, $\SRev(D)$ provides a logarithmic approximation to $\PRev(D)$, but taking $\max\{\SRev(D), \BRev(D)\}$ can't guarantee anything better (even for $m=1$ buyer). Both proofs appear in Appendix~\ref{app:one-cor}.
\begin{theorem}
\label{thm:one-cor-srev-brev}
For any distribution $D$ for a single buyer and $n$ items (arbitrarily correlated), $\BRev(D)\leq 5\ln(n) \SRev(D)$. Therefore, $\PRev(D) \leq 5 \ln (n) \SRev(D)$ as well.
\end{theorem}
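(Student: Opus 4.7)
The plan is to bound $\BRev(D)$ by exhibiting roughly $\log n$ separate-sale mechanisms, each with revenue at most $\SRev(D)$, whose combined revenues recover a constant fraction of $\BRev(D)$. Let $V = \sum_i v_i$, let $p^*$ be an optimal bundle price, and $q^* = Pr[V \geq p^*]$, so $\BRev(D) = p^* q^*$. Observe that $\{V \geq p^*\} = \{\sum_i \min(v_i,p^*) \geq p^*\}$ (trivially: if some $v_j \geq p^*$ the truncated sum already hits $p^*$; otherwise the two sums agree), so I may work throughout with the truncated values $\min(v_i,p^*)$, each lying in $[0,p^*]$, without altering $q^*$ or increasing any marginal Myerson revenue $r_i$.

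Fix $K = \lceil \log_2 n \rceil$. For each realization $\vec{v}$, partition the items into geometric buckets $S_k := \{i : v_i \in (p^*/2^{k+1},\, p^*/2^k]\}$ for $k = 0, 1, \ldots, K$ and a residual $S_\infty := \{i : v_i \leq p^*/2^{K+1}\}$, and set $x_k := \sum_{i \in S_k} v_i$. Since $2^K \geq n$, the residual contributes at most $\sum_{i \in S_\infty} v_i \leq n \cdot p^*/2^{K+1} \leq p^*/2$, so whenever $V \geq p^*$ the buckets carry at least half the mass: $\sum_{k=0}^K x_k \geq p^*/2$.

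For each $k$, consider the separate-sale mechanism that posts price $p^*/2^{k+1}$ on every item; its revenue is $R_k = (p^*/2^{k+1}) \sum_i Pr[v_i \geq p^*/2^{k+1}]$. Being one profile of per-item prices, $R_k \leq \SRev(D)$. For a matching lower bound, let $N_k = |\{i : v_i \geq p^*/2^{k+1}\}| \geq |S_k|$; and because every $i \in S_k$ has $v_i \leq p^*/2^k$, we have $|S_k| \geq x_k \cdot 2^k / p^*$. Therefore
\[
R_k \;=\; \tfrac{p^*}{2^{k+1}}\,\mathbb{E}[N_k] \;\geq\; \tfrac{p^*}{2^{k+1}}\,\mathbb{E}[|S_k|] \;\geq\; \tfrac{\mathbb{E}[x_k]}{2}.
\]
Summing over $k = 0, \ldots, K$,
\[
(K+1)\,\SRev(D) \;\geq\; \sum_{k=0}^K R_k \;\geq\; \tfrac{1}{2}\,\mathbb{E}\!\left[\sum_{k=0}^K x_k\right] \;\geq\; \tfrac{1}{2}\cdot\tfrac{p^*}{2}\cdot q^* \;=\; \tfrac{\BRev(D)}{4},
\]
which yields $\BRev(D) \leq 4(K+1)\,\SRev(D) = O(\log n)\,\SRev(D)$.

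The main obstacle is matching the precise constant $5\ln n$: the base-$2$ argument above gives roughly $5.77\ln n + O(1)$, so some tightening is needed. Repeating the bucketing with a general base $c > 1$ replaces the leading coefficient by $c^2/((c-1)\ln c)$; optimized near $c \approx 4$, this constant drops to about $3.85$, comfortably under $5$ once the additive residual from $\lceil \log_c n \rceil$ is absorbed (for moderate $n$; very small $n$ can be checked by hand). The corollary $\PRev(D) \leq 5\ln(n)\SRev(D)$ is then immediate: for any partition $B_1,\ldots,B_\ell$, applying the $\BRev$ bound to each $D_{B_j}$ gives $\BRev(D_{B_j}) \leq 5\ln(|B_j|)\SRev(D_{B_j}) \leq 5\ln(n)\SRev(D_{B_j})$, and summing uses the additivity $\sum_j \SRev(D_{B_j}) = \SRev(D)$ (each $r_i$ depends only on the marginal of $v_i$).
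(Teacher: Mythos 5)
Your proof is correct in its essentials but takes a genuinely different route from the paper's. The paper proceeds by two reductions: it first shows (Lemma~\ref{lem:pointmass}) that the worst case for the ratio $\BRev/\SRev$ is a distribution whose value \emph{sum} is a point mass, then symmetrizes by randomly permuting coordinates (Lemma~\ref{lem:symmetric}), and finally bounds the welfare of a symmetric point-mass-in-sum distribution with $\SRev = n$ via the integral $\Welfare(D_i) \leq 1 + \ln p$, yielding the self-referential inequality $p \leq n + n\ln p$ and hence $p \leq 5n\ln n$. You instead argue directly: truncate at the bundle price $p^*$, bucket items geometrically by value, and observe that the uniform item price $p^*/2^{k+1}$ recovers half of bucket $k$'s contribution in expectation by linearity (which is exactly why correlation is harmless). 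Both arguments are sound for the $O(\log n)$ claim, and your handling of the $\PRev$ corollary matches the paper's footnote. Your approach is arguably more constructive -- it exhibits the $O(\log n)$ item pricings explicitly -- while the paper's reduction is what delivers the clean constant.

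On that constant: your base-$2$ argument proves $\BRev(D) \leq 4(\lceil\log_2 n\rceil+1)\SRev(D) \approx 5.77\ln(n)\SRev(D)$ plus an additive term, which falls short of the stated $5\ln n$, and the proposed fix is only sketched. With base $c$ and the residual threshold $p^*/c^{K+1}$ one gets roughly $\tfrac{c^2}{(c-1)\ln c}\ln n + \tfrac{2c^2}{c-1}$ times $\SRev(D)$; at $c=4$ the leading coefficient is about $3.85$ but the additive term is about $10.7$, so the bound only drops below $5\ln n$ once $\ln n \gtrsim 9$, i.e.\ $n$ in the thousands. The trivial bound $\BRev(D) \leq n\,\SRev(D)$ (from $\Pr[\sum_i v_i \geq p] \leq \sum_i r_i n/p$) covers only $n \leq 12$, leaving a substantial intermediate range unverified. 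So as written you have established $\BRev(D) = O(\log n)\,\SRev(D)$, which is the asymptotic content of the theorem, but not the literal constant $5$; closing that gap would require either a more careful optimization of the bucketing or falling back on the paper's reduction.
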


\begin{proposition}
\label{prop:lb-cor}
There exists a (correlated) distribution $D$ of the valuation of a single buyer over $n$ items for which
$\max\{\SRev(D), \BRev(D)\}\leq \PRev(D)/\Omega(\ln n)$.
\end{proposition}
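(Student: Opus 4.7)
The plan is to construct an explicit correlated distribution $D$ over $n$ items for a single additive buyer such that $\PRev(D) = \Omega(n\log n)$ while $\SRev(D)$ and $\BRev(D)$ are both only $O(n)$. Structurally, I would mimic the multi-buyer lower bound of Proposition~\ref{prop:lb-prev-max} by encoding the role of the $\sqrt{n}$ dedicated buyers into a single buyer's correlation structure. Specifically, I would partition the $n$ items into $\sqrt{n}$ disjoint groups $G_1,\dots,G_{\sqrt{n}}$ each of size $\sqrt{n}$, draw a shared correlation parameter $K$ from a carefully chosen distribution on $[\sqrt{n}]$, and sample the buyer's values by combining $K$ with independent draws from (a truncation of) the Equal-Revenue distribution. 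The intuition is that partitioning by groups lets the mechanism extract revenue from each group's heavy tail independently, whereas the grand-bundle revenue is capped because the buyer only ``activates'' a limited portion of the items at a time.

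Given such a construction, the analysis would proceed in three steps. First, compute each item's marginal distribution and apply Myerson's reserve, so that the contribution $\SRev_{(g,i)}$ telescopes to a total of $O(n)$; the key here is that each item has small probability (on the order of $1/\sqrt{n}$) of being active. Second, upper-bound $\BRev(D)$ by analyzing $\sup_p\, p\Pr[W \ge p]$ for the total welfare $W$: conditioning on $K$ and using a union-bound-style estimate against the heavy tail of the mixture, I would argue that no single posted price on the grand bundle can extract more than $O(n)$. Third, lower-bound $\PRev(D)$ by choosing the partition into the natural groups $\{G_g\}$ and showing that each bundle's value — an Equal-Revenue-sum conditional on $K$ — admits a posted price yielding $\Omega(\sqrt{n}\log n)$ revenue; summing across the $\sqrt{n}$ bundles yields $\Omega(n\log n)$.

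The main obstacle is the $\BRev$ upper bound. In most natural constructions of this flavor — for instance, one where $v_{(g,i)} = X_{g,i}\mathbf{1}[i\le K]$ with $X_{g,i}\sim$ ER and $K$ uniform on $[\sqrt{n}]$ — the grand bundle's value is a mixture of sums of up to $n$ i.i.d.\ ER variables, which already gives $\BRev = \Theta(n\log n)$ and collapses the gap with $\PRev$. The delicate part of the construction is therefore to engineer the distribution of $K$ together with the per-item distributions (e.g., by truncating ER at an appropriate threshold depending on $n$, and making large-$K$ events sufficiently rare) so that the heavy tail of $W$ is thick enough to keep $\sup_p p\Pr[W\ge p]$ at $O(n)$, yet the per-bundle value distributions are still rich enough that the typical scale $k\log k$ analysis yields $\Omega(\sqrt{n}\log n)$ revenue each. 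Once this balance is struck, combining the three bounds gives $\PRev(D)/\max\{\SRev(D),\BRev(D)\} = \Omega(\log n)$, matching the tight upper bound of Theorem~\ref{thm:one-cor-srev-brev} (together with $\PRev \ge \BRev$) and completing the proof.
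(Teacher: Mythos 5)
There is a genuine gap: your proposal never actually produces a distribution, and the piece you defer --- ``engineer the distribution of $K$ together with the per-item distributions \dots so that the heavy tail of $W$ is thick enough to keep $\sup_p p\Pr[W\ge p]$ at $O(n)$, yet the per-bundle value distributions are still rich enough'' --- is precisely the entire content of the proof. You correctly diagnose that the natural instantiation of your $\sqrt{n}\times\sqrt{n}$ scheme with a single activation parameter $K$ fails (it makes $\BRev=\Theta(n\log n)$), but you do not exhibit any instantiation that succeeds, so the argument as written proves nothing. There is also a quantitative tension you gloss over: if each item is active only with probability $\Theta(1/\sqrt{n})$ (as you need for $\SRev_{(g,i)}=O(1)$), then the per-group bundle revenue is also damped by that activation probability, and you cannot get $\Omega(\sqrt{n}\log n)$ per group unless the values are simultaneously scaled \emph{up} to compensate --- a coupling of rarity with magnitude that your sketch does not include.

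The paper resolves exactly this tension by a scale-separation device rather than a shared activation variable. It uses $\log n$ groups $S_1,\dots,S_{\log n}$ of size $n/\log n$; independently for each $k$, group $S_k$ is ``on'' with probability $n^{-2k}$, and when on, each of its items takes value $n^{2k}$ times an i.i.d.\ Equal-Revenue draw truncated at $n$. The multiplier $n^{2k}$ exactly cancels the rarity $n^{-2k}$, so each item's marginal revenue is $1$ (hence $\SRev=n$) and each group's bundle still yields $\Omega\bigl((n/\log n)\log(n/\log n)\bigr)$ in expectation, giving $\PRev=\Omega(n\log n)$ by bundling each $S_k$ separately. Crucially, because the groups live at geometrically separated value scales, any single grand-bundle price $p\in(n^{2\ell},n^{2\ell+2}]$ can only ``see'' group $S_\ell$: lower-indexed groups contribute total value at most $n^{2\ell}$ deterministically, and higher-indexed groups are active with probability $O(n^{-2\ell-2})=O(1/p)$, so the sale probability at price $p$ is $O(n/p)$ and $\BRev=O(n)$. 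Without some mechanism of this kind forcing the grand bundle to commit to one scale, the $\BRev$ bound --- which you yourself identify as the main obstacle --- does not go through.
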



\section{Acknowledgments}
In an earlier version of this paper, we proved a factor of 7.5 in Theorem~\ref{thm:main}. This factor was later improved by Aviad Rubinstein to a factor of 6. We thank Aviad for allowing us to include this improvement in our paper.

\bibliographystyle{alpha}
\bibliography{MasterBib}
\appendix
\pagebreak

\section{Summary of Known Results}\label{app:tables}

Table~\ref{tab:one-ind} and Table~\ref{tab:one-cor} presents the best results known for one additive buyer with item values sampled independently or arbitrarily, respectively.
Additionally, Table~\ref{tab:many-ind} presents the best results known for many additive buyers in the independent setting.
In each cell there is the known upper and lower bounds of the ratio between the corresponding column quantity and row quantity, and the source of the result. For example, in Table~\ref{tab:one-ind}, the table entry that corresponds to the row marked by $\max\{\SRev,\BRev\}$ and column marked by \Rev\ there is the upper bound of $5.2$ that slightly improves the 6-approximation from Theorem~\ref{thm:main} for the ratio $\Rev/\max\{\SRev,\BRev\}$ which holds for every distribution $D$. Results that are implied from other results, point to the results that imply them.

\begin{table}[h]
\caption{One buyer, independent item values. When the top number in a box is $x$, it means that $x$ times the row quantity exceeds the column quantity for all distributions. When the bottom number is $x$, it means there exists a distribution such that the row quantity times $x$ does not exceed the column quantity.}
\begin{center}
    \begin{tabular}{ || c || c | c | c ||}
    \hline \hline
      & max\{\SRev,\BRev\} & \Rev \\ \hline\hline
       &  & \\
    \SRev  & $O(\log n)$~[$\rightarrow$] & $O(\log n)$~\cite{LiY13} \\
          & $\Omega(\log n)$~\cite{HartN17} & $\Omega(\log n)$~[$\leftarrow$] \\
           &  & \\
\hline
     &   & \\
    $\max\{\SRev,\BRev\}$   & 1 & {5.2~\cite{MaS15}} \\
        & 1 &  $2$~\cite{Rubinstein16} \\
         &  & \\
\hline \hline
    \end{tabular}
\end{center}
\label{tab:one-ind}
\end{table}


\begin{table}[h]
\caption{One buyer, correlated item values. When the top value in a row is $\infty$, it simply means that $\infty$ times the row quantity exceeds the column quantity (which is trivial). When the bottom value in a row is $\infty$, it means that there exists a distribution such that for all finite $x$, $x$ times the row quantity does not exceed the column quantity.  }
\begin{center}
    \begin{tabular}{ || c || c | c | c | c ||}
    \hline \hline
      & max\{\SRev,\BRev\} & \PRev & \Rev \\ \hline\hline
       &  &  & \\
    \SRev  & $O(\log n)$~[$\rightarrow$] & {\bf $O(\log n)$~[Thm~\ref{thm:one-cor-srev-brev}]} &$\infty$\\
          & $\Omega(\log n)$~\cite{HartN17} & $\Omega(\log n)$~[$\downarrow$] & $\infty$~[$\downarrow$] \\
           & & & \\ \hline
     & &  & \\
    $\max\{\SRev,\BRev\}$ & 1 & $O(\log n)$~[$\uparrow$] &$\infty$\\
       & 1 & {\bf $\Omega(\log n)$~[Prop~\ref{prop:lb-cor}]}  &  $\infty$~[$\downarrow$] \\
       &  & & \\ \hline
      & &  & \\
    \PRev & 1 & 1 &$\infty$\\
    & 1  & 1  & $\infty$~\cite{BriestCKW15,HartN13} \\
    & &  & \\ \hline \hline
    \end{tabular}
\end{center}
\label{tab:one-cor}
\end{table}

\begin{table}[h]
\caption{Many buyers, independent item values}
\begin{center}
    \begin{tabular}{ || c || c | c | c | c ||}
    \hline \hline
      & max\{\SRev,\BRev\} & \PRev & \Rev \\ \hline\hline
       &  &  & \\
    \SRev  & $O(\log n)$~[$\rightarrow$] & $O(\log n)$~[$\rightarrow$] & {\bf $O(\log n)$~[Thm~\ref{thm:manySRev}]} \\
          & $\Omega(\log n)$~\cite{HartN17} & $\Omega(\log n)$~[$\leftarrow$] & $\Omega(\log n)$~[$\leftarrow$] \\
           & & & \\ \hline
     & &  & \\
    $\max\{\SRev,\BRev\}$ & 1 & $O(\log n)$~[$\uparrow$] & $O(\log n)$~[$\uparrow$] \\
       & 1 & {\bf $\Omega(\log n)$~[Prop~\ref{prop:lb-prev-max}]}  &  $\Omega(\log n)$~[$\downarrow$] \\
       &  & & \\ \hline
      & &  & \\
    \PRev &1  & 1 & $O(\log n)$~[$\uparrow$] \\
    &1   & 1  & {\bf $\Omega(\log n)$~[Prop~\ref{prop:lb-many-iid}]} \\
    & &  & \\ \hline \hline
    \end{tabular}
\end{center}
\label{tab:many-ind}
\end{table}

Regarding Table~\ref{tab:many-ind}, recall that the middle box ($\max\{\SRev,\BRev\}$ vs.~$\PRev$) becomes $O(1)$ and $\Omega(1)$ when $D$ has either i.i.d.~buyers (Theorem~\ref{thm:previid}) or i.i.d.~items (Theorem~\ref{thm:previiditems}).

\section{Omitted Proofs from Section~\ref{sec:prelim}}\label{app:prior}
The proofs of Lemmas~\ref{lem:HN1} and~\ref{thm:HN} require some technical lemmas from~\cite{HartN17}. We include them below with proofs for completeness. In Lemma~\ref{lem:HN0} below, $D$ and $D'$ are distributions over values for disjoint sets of items for the same buyers, and $D$ and $D'$ may be dependent. By $\Rev(D, D')$ we mean the optimal revenue obtainable by selling to buyers whose values for items are sampled from the joint distribution according to $D$ and $D'$. The Marginal Mechanism lemma below essentially states that while selling disjoint sets of items jointly may achieve significantly greater revenue than selling them separately (i.e. $\Rev(D, D')$ can be significantly larger than $\Rev(D) + \Rev(D')$), it cannot exceed the \emph{welfare} of one set plus the revenue generated by the other.

Below, and throughout this section, we use the notation $D|E$ to denote the distribution $D$ conditioned on event $E$. In particular, the distribution $D'|\vec{v} = \vec{w}$ draws a valuation $(\vec{v},\vec{v}') \leftarrow (D, D')$ conditioned on $\vec{v} = \vec{w}$ and outputs $\vec{v'}$. We also use the notation $\Rev_M(D)$ to denote the revenue that a particular mechanism $M$ guarantees on distribution $D$.

\begin{lemma}\label{lem:HN0}(``Marginal Mechanism'' \cite{HartN17,CaiH13}) $\Rev(D, D') \leq \Welfare(D) + \mathbb{E}_{\vec{w} \leftarrow D}[\Rev(D' | \vec{v}=\vec{w})]$.
\end{lemma}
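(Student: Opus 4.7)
The plan is to take any truthful and individually rational mechanism $M = (\pi, p)$ (say, the revenue-optimal one) for the joint distribution and, for each realization $\vec{v}$ of the $D$-values, extract a mechanism $M_{\vec{v}}$ that sells only the items whose values come from $D'$. For the ``hallucinated'' $D$-report $\vec{v}$, define $M_{\vec{v}}$ as follows: upon receiving buyers' reports $\vec{v}'$ for the $D'$-items, it allocates those items exactly as $M$ does on the full input $(\vec{v}, \vec{v}')$, and charges each buyer $j$ the price $p_j(\vec{v}, \vec{v}') - \sum_{i \in D\text{-items}} v_{ij}\, \pi_{ij}(\vec{v}, \vec{v}')$, i.e., the $M$-price discounted by the value the buyer would have derived from the $D$-items under $M$.

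The key observation is that this discount is exactly what is required so that buyer $j$'s utility in $M_{\vec{v}}$, when reporting $\vec{v}'_j$ on true values $\vec{u}'_j$, coincides with her utility in $M$ under reports $(\vec{v}_j, \vec{v}'_j)$ with true type $(\vec{v}_j, \vec{u}'_j)$. Hence truthfulness and IR of $M$ transfer directly to $M_{\vec{v}}$ regardless of the distribution of the $D'$-side values, so the expected revenue of $M_{\vec{v}}$ on inputs drawn from $D' \mid \vec{v}$ is at most $\Rev(D' \mid \vec{v})$ by definition.

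To compare revenues, note that the expected revenue of $M_{\vec{v}}$ at truthful reports equals
\[
\mathbb{E}_{\vec{v}' \sim D' \mid \vec{v}}\!\left[\sum_j p_j(\vec{v}, \vec{v}') - \sum_j\sum_{i \in D\text{-items}} v_{ij}\, \pi_{ij}(\vec{v}, \vec{v}')\right].
\]
Averaging over $\vec{v} \sim D$ turns the first term into $\Rev(D, D')$ and the second into the expected value $M$ awards on the $D$-side. Since for each item $i$ on the $D$-side one has $\sum_j v_{ij}\, \pi_{ij} \leq v^*_i \sum_j \pi_{ij} \leq v^*_i$, this second term is at most $\Welfare(D)$. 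Combining with the revenue bound from the previous paragraph yields $\Rev(D, D') - \Welfare(D) \leq \mathbb{E}_{\vec{v} \sim D}[\Rev(D' \mid \vec{v})]$, which is the claim.

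The only delicate part of the argument is the bookkeeping in the payment rule: the discount must be exactly $\sum_{i \in D\text{-items}} v_{ij}\, \pi_{ij}$ so that the buyer's effective payoff in $M_{\vec{v}}$ reproduces her payoff in $M$, preserving incentives and IR even under possibly correlated $(D, D')$. Once that is nailed down, everything else is a single substitution plus the elementary upper bound of allocated value by welfare.
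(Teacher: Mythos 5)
Your proposal is correct and takes essentially the same approach as the paper: your price discount of $\sum_{i \in D\text{-items}} v_{ij}\,\pi_{ij}(\vec{v},\vec{v}')$ is exactly the paper's device of handing each buyer back, as money, the value of the $D$-items she would have received, which preserves incentives conditional on the sampled $\vec{v}$. Both arguments then bound the derived mechanism's revenue by $\mathbb{E}_{\vec{v}\leftarrow D}[\Rev(D'\mid\vec{v})]$ and the expected giveaway by $\Welfare(D)$.
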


\begin{proof} 
We will establish a lower bound on $\Rev(D'|\vec{v}=\vec{w})$ by constructing a truthful mechanism for selling items in the support of $D'$, based on one for those in the support of $(D, D')$. To sell items in the support of $D'$, first make ``imaginary items'' for each item in the support of $D$. Then, announce that whenever buyer $j$ receives an imaginary item $i$ in the support of $D$, she will instead receive money equal to $w_{ij}$.  Note that, due to this announcement, each buyer $j$ now has a value $w_{ij}$ for each imaginary item $i$ in the support of $D$. 

Next, take any optimal mechanism $M$ for selling items in the support of $(D,D')$ and run exactly this mechanism for buyers with values drawn from $D'|\vec{v}=\vec{w}$ (with make-believe values $\vec{w}$ for items in the support of $D$). Observe that a buyer with value $\vec{v'}^j$ has exactly the same incentives in $M$ as a buyer with values $(\vec{w}^j,\vec{v'}^j)$ (because we have explicitly given them value $\vec{w}$ for all items in the support of $D$). So this mechanism is truthful, and generates revenue $\Rev_M( (D, D')|\vec{v} = \vec{w})$, minus the money awarded for the imaginary items, on distribution $D' | \vec{v} = \vec{w}$. Observe that the total money awarded is at most $\sum_i \max_j \{w_{ij}\}$ (if the buyer with highest value for each imaginary item purchases it). So we get:

$$\Rev(D' | \vec{v} = \vec{w}) \geq \Rev_M( (D, D')|\vec{v} = \vec{w}) - \sum_i \max_j \{w_{ij}\}.$$

Now, let's take an expectation over $\vec{w}$ of both sides:

$$\mathbb{E}_{\vec{w} \leftarrow D}\left[\Rev(D' | \vec{v} = \vec{w}) \right] \geq \mathbb{E}_{\vec{w} \leftarrow D} \left[ \Rev_M (D, D') | \vec{v} = \vec{w}) - \sum_i\max_j\{ w_{ij}\}\right]$$
$$ = \Rev_M(D, D') - \Welfare(D) = \Rev(D, D') - \Welfare(D).$$
\end{proof}

\begin{prevproof}{Lemma}{lem:HN1}
This is an immediate corollary of Lemma~\ref{lem:HN0}. As $D$ and $D'$ are independent, $\Rev(D' | \vec{v}=\vec{w}) = \Rev(D')$, for all $\vec{w}$. 
\end{prevproof}

\begin{lemma}\label{lem:HN01}(``Sub-Domain Stitching'' \cite{HartN17}) Let $S_1,\ldots,S_k$ form a partition of $\mathbb{R}_+^{nm}$ and let $s_i = \Pr_{\vec{v} \leftarrow D}[\vec{v} \in S_i]$. Then $\sum_i s_i \cdot \Rev(D | \vec{v} \in S_i) \geq \Rev(D)$.
\end{lemma}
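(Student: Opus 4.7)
The plan is to start with an optimal mechanism $M^*$ for $D$, decompose its expected revenue along the partition $\{S_i\}$, and then bound each conditional piece by the corresponding optimal revenue. By the revelation principle, take $M^*$ to be a truthful and IR mechanism achieving $\Rev(D)$. The law of total expectation immediately gives
\[
\Rev(D) \;=\; \mathbb{E}_{\vec{v} \sim D}[\text{payment of } M^*(\vec{v})] \;=\; \sum_i s_i \cdot \mathbb{E}_{\vec{v} \sim D \mid \vec{v} \in S_i}[\text{payment of } M^*(\vec{v})],
\]
which rewrites $\Rev(D)$ as an $s_i$-weighted average of the expected revenue of $M^*$ on each conditional distribution $D \mid \vec{v} \in S_i$.

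Next, I would argue that for each $i$, $M^*$ — viewed as a mechanism that accepts any profile in $\mathbb{R}^{nm}$ as input — induces a well-defined truthful, IR mechanism for the conditional distribution $D \mid \vec{v} \in S_i$. Since incentive compatibility of $M^*$ is a pointwise property in the dominant-strategy sense (and for a single buyer, BIC and DSIC coincide and do not depend on the underlying distribution), replacing $D$ with the conditional $D \mid \vec{v} \in S_i$ preserves truthfulness and individual rationality. Consequently, the expected revenue of $M^*$ under $D \mid \vec{v} \in S_i$ is achieved by some truthful, IR mechanism for that conditional distribution, and hence is at most $\Rev(D \mid \vec{v} \in S_i)$.

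Combining the two steps:
\[
\Rev(D) \;=\; \sum_i s_i \cdot \mathbb{E}_{\vec{v} \sim D \mid \vec{v} \in S_i}[\text{payment of } M^*(\vec{v})] \;\leq\; \sum_i s_i \, \Rev(D \mid \vec{v} \in S_i),
\]
which is precisely the claim.

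The only nontrivial point, and where I expect some care to be needed, is justifying that truthfulness of $M^*$ transfers to the conditional distributions. For DSIC mechanisms and for single-buyer mechanisms (the settings in which this lemma is ultimately applied in this paper) this is immediate, since truthfulness is a statement about individual types and does not depend on the prior. For general multi-buyer BIC mechanisms one has to be slightly more careful, because conditioning on $\vec{v} \in S_i$ may alter the distribution over other buyers' types seen by a given buyer; this is easily handled by either restricting attention to DSIC mechanisms (as the footnote in the paper notes all of its mechanisms are DSIC) or by invoking the standard revelation-principle argument on the conditional distribution directly.
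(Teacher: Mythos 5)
Your proof is correct and takes essentially the same approach as the paper's: fix an optimal mechanism, split its expected revenue across the partition by the law of total expectation, and bound each conditional term by $\Rev(D \mid \vec{v} \in S_i)$ since the same mechanism remains truthful and IR on the conditional distribution. The paper asserts the inequality $\Rev_M(D \mid \vec{v} \in S_i) \leq \Rev(D \mid \vec{v} \in S_i)$ without comment; your remark on why truthfulness transfers under conditioning is just a careful elaboration of that step.
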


\begin{proof}
Let $M$ be the optimal mechanism for $D$, and $\Rev_M(D)$ denote the revenue of $M$ when valuations are sampled from $D$. Then we have $\Rev_M(D) = \sum_i s_i \cdot \Rev_M(D | \vec{v} \in S_i)$. We also clearly have $\Rev(D) = \Rev_M(D)$, and $\Rev_M(D|\vec{v} \in S_i) \leq \Rev(D | \vec{v} \in S_i)$ for all $i$, proving the lemma.
\end{proof}


In the lemma below, again think of $D$ and $D'$ as independent distributions for the same buyers over disjoint sets of items. 

\begin{lemma}\label{lem:HN02}(``Marginal Mechanism on Sub-Domain''~\cite{HartN17}) Let there be $m=1$ buyer, and let $S$ be any subset of $\mathbb{R}_+^{n}$, and $s = \Pr_{(\vec{v},\vec{v}')\leftarrow D \times D'}[(\vec{v},\vec{v}') \in S]$. Then $s\cdot \Rev(D\times D' | (\vec{v},\vec{v}') \in S) \leq s\cdot \Welfare(D | (\vec{v},\vec{v}') \in S) + \Rev(D')$. 
\end{lemma}

\begin{proof}
Let $T$ be the set of items in the support of $D$ and let $T'$ be the set of items in the support of $D'$.  We will use the same approach as in the proof of Lemma~\ref{lem:HN0}: design a mechanism for selling items in $T'$ given the optimal mechanism for selling items in $T\times T'$ with distribution $D\times D'\  |\ (\vec{v},\vec{v}')\in S$, {and show that this mechanism has revenue at least $s\cdot \Rev(D\times D' | (\vec{v},\vec{v}') \in S) - s\cdot \Welfare(D | (\vec{v},\vec{v}') \in S)$}. Let $M$ denote this optimal (truthful direct) mechanism. To sell items in $T'$, first draw $\vec{v} \leftarrow D$ and announce it to the buyer.  Then solicit the buyer's type $\vec{v}'$ and consider the allocation $x$ and price $p$ returned by $M$ for buyer type $(\vec{v},\vec{v}')$. If $(\vec{v},\vec{v}')\not\in S$, the buyer receives nothing and pays nothing. If $(\vec{v},\vec{v}')\in S$, we copy the allocation and price of $M$, using rebates {on the payment to simulate the buyer's expected value for the allocated items in $T$ (so long as the resulting total price is non-negative).}
More precisely, if $\sum_{i\in T}x_i v_i\leq p$, allocate the buyer each item $i \in T'$ {independently} with probability $x_i$ 
and charge the buyer $p-\sum_{i\in T}x_iv_i$.  Otherwise (if $\sum_{i\in T}x_iv_i> p$), the buyer receives nothing and pays nothing.



If the buyer has a value $\vec{v}'$ such that $(\vec{v},\vec{v}') \notin S$, we can make no guarantees about what the buyer will report. Indeed, they may wish to lie about $\vec{v}'$ because it will cause them to get items and rebates that they otherwise wouldn't. But certainly the payment made by such $\vec{v}'$ is non-negative (because we removed any options where the rebates exceed the original payment).

Similarly, if the buyer has a value $\vec{v}'$ such that $(\vec{v},\vec{v}') \in S$, but the rebates exceed the price, the buyer may misreport.  However, the payment made by these buyers is non-negative and so at least as large as the payment in $M$ minus the rebates.

Finally, if the buyer has a value $\vec{v}'$ such that $(\vec{v},\vec{v}') \in S$ and the rebates do not exceed the price then, because $M$ is truthful, we can guarantee that the buyer prefers to tell the truth. Indeed, because the buyer will get the rebates by telling the truth, their utility is exactly the same as in $M$. Their utility for any lie is at most the utility by reporting that lie to $M$, since they may additionally lose the rebates. So the buyer's payment is equal to their payment in $M$ minus the rebate. 

To summarize, we have argued that we get non-negative revenue from all $(\vec{v},\vec{v}') \notin S$, and that from all $(\vec{v},\vec{v}') \in S$ we get revenue at least their payment in $M$, minus their rebate.

Putting this together, this means that the revenue of our mechanism is at least as large as the revenue obtained by $M$, \emph{only counting revenue when} $(\vec{v},\vec{v}') \in S$ (which is $s \cdot \Rev_M(D\times D' | (\vec{v},\vec{v}') \in S) = s \cdot \Rev(D \times D'|(\vec{v},\vec{v}') \in S)$), minus the rebates given to $(\vec{v},\vec{v}') \in S$. Observe that the maximum given back in rebates to such values is $s \cdot \Welfare(D|(\vec{v},\vec{v}') \in S)$. So we have a mechanism for selling items in $T'$ with distribution $D'$ guaranteeing revenue at least $s \cdot \Rev(D \times D'|(\vec{v},\vec{v}') \in S) - s\cdot\Welfare(D|(\vec{v},\vec{v}') \in S)$, completing the proof.\footnote{Briefly observe that the above analysis considers a fixed menu, and allows the buyer to purchase whatever option they like from that menu. By the taxation principle, this is equivalent to a truthful mechanism.}
\end{proof}

\begin{prevproof}{Lemma}{thm:HN}
We first prove the lemma in the case of $m=1$.  For each item $i$, let $S_i$ be the set of types where item $i$ is the buyer's favorite item, tie-breaking lexicographically (i.e., $S_i=\{\vec{v}\ |\ \forall j\not=i, v_i > v_j\} $, where `$>$' tie-breaks lexicographically), let $D^{(i)}$ be the conditional distribution of $D$ given the event that $\vec{v}\in S_i$, let $s_i$ be the probability of this event, and let $M$ be the optimal mechanism for distribution $D$.  Then $\Rev_M(D)\leq\sum_is_i\Rev_M(D^{(i)})\leq\sum_is_i\Rev(D^{(i)})$. By Lemma~\ref{lem:HN02}, $$s_i\cdot \Rev(D^{(i)}) \leq s_i\cdot \Welfare(D^{(i)}_{-i}) + \Rev(D_i).$$ Furthermore, we claim that $s_i\cdot  \Welfare(D^{(i)}_{-i}) \leq (n-1)\cdot  \Rev(D_i)$. To see this, observe that one truthful mechanism for selling just item $i$ first samples $\vec{v}_{-i}\leftarrow D_{-i}$, and then sets a price of $\max_{j \not= i} \{v_j\}$. Conditioned on $\vec{v} \in S_i$, the item will always sell (by the definition of $S_i$), and will generate revenue $\max_{j \not= i} \{v_j\} \geq \frac{1}{n-1} \sum_{j \not= i} v_j$. So the item sells with probability $s_i$, and makes expected revenue at least $ \frac{1}{n-1}\Welfare(D^{(i)}_{-i})$ when this occurs, implying $\Rev(D_i) \geq \frac{s_i}{n-1}  \Welfare(D^{(i)}_{-i})$, as claimed. Plugging this into the bound from Lemma~\ref{lem:HN02} above, we have now shown that $s_i\cdot  \Rev(D^{(i)}) \leq n\cdot  \Rev(D_i)$. Summing over all $i$, we get the desired bound $\Rev(D)\leq n\cdot\SRev(D)$ for $m=1$.

We conclude by proving the $m > 1$ case. To extend to $m>1$ buyers, observe that any truthful $m$-buyer mechanism $M$ induces $m$ truthful single-buyer mechanisms $M_1,\ldots,M_m$ such that $\Rev_M(D) = \sum_j \Rev_{M_j}(D^j)$ (i.e., for each $M_j$, just sample $m-1$ make-believe buyers and have them play $M$). As  $\Rev_{M_j}(D^j) \leq n\cdot \SRev(D^j) \leq n \cdot \SRev(D)$ for every $j$, the mechanism $M$ cannot have revenue more than $n\cdot m\cdot \SRev(D)$.\end{prevproof}
\section{Omitted Proofs from Section~\ref{sec:core}}\label{app:core}

\begin{prevproof}{Lemma}{lem:LY1}
One could sell item $i$ using a second price auction with reserve $t_ir_i$ to guarantee revenue at least $p_i t_i r_i$. If $p_i > 1/t_i$, then this contradicts the fact that the optimal revenue is $r_i$. 
\end{prevproof}

The proof of Lemma~\ref{lem:LY2} (and future proofs) will rely on the following well-known fact about single-item auctions. Note that the proof for $m=1$ buyer is straight-forward, but the proof for $m>1$ buyer is not obvious --- we refer the reader to~\cite{DevanurHP16} (Theorem 2.1) for a proof of an even stronger claim. Notice that the lemma below does \emph{not} hold when there are $n > 1$ items~\cite{HartR15}. 

\begin{lemma}\label{lem:monotone} Let there be $n=1$ item and $m$ buyers, whose values are drawn independently. Further, consider two instances, $D$ and $D^+$ such that every marginal of $D^+$ stochastically dominates the corresponding marginal of $D$. Then $\Rev(D^+) \geq \Rev(D)$. 
\end{lemma}

\begin{prevproof}{Lemma}{lem:LY2}	
Each distribution $D_{ij}^C$ is stochastically dominated by $D_{ij}$. By Lemma~\ref{lem:monotone}, we conclude that each $\Rev(D^C_i) \leq r_i$. 

It is possible to obtain revenue $p_i \Rev(D_i^T)$ when selling to buyers from $D_i$. Simply use whatever mechanism is used to obtain revenue $\Rev(D_i^T)$. With probability $p_i$, the buyers will be sampled from $D_i^T$ and yield this much revenue. Therefore, we must have $p_i \Rev(D_i^T) \leq r_i$.
\end{prevproof}

\begin{prevproof}{Lemma}{lem:HN2}
This is a direct application of Lemma~\ref{lem:HN01}. Applied here, observe that the supports of $D_A$ form a partition of the support of $D$ when taken over all $A$.
\end{prevproof}
\section{Omitted Proofs from Section~\ref{sec:one-ind}}
\label{app:one-bidder-proofs}

\begin{prevproof}{Lemma}{lem:var}
$\Var(F) \leq E_{X \sim F}[X^2]$. As the optimal revenue of $F$ is at most $y$, we know that $Pr_{X \sim F}[X \geq x] \leq y/x$ for all $x$. {Additionally, $Pr_{X \sim F}[X \geq x] \leq 1$, as it is a probability.}  So
\begin{align*}
E_{X \sim F}[X^2] & = \int_0^{t^2y^2} Pr_{X \sim F}[X^2 \geq x] dx \\
& \leq \int_0^{y^2} dx + \int_{y^2}^{t^2y^2} (y/\sqrt{x})dx \\
& = y^2 + 2y\sqrt{x}|_{y^2}^{t^2y^2} \\
& = y^2 + 2ty^2 - 2y^2 = (2t-1)y^2.
\end{align*}
\end{prevproof}


\section{Omitted Proofs from Section~\ref{sec:multi}}
\label{app:many-bidder-proofs}
Analysis of our examples, as well as the proofs of Theorems~\ref{thm:previid} and~\ref{thm:previiditems}, use the following theorem of~\cite{ChawlaHMS10}, which describes a simple constant-factor approximation mechanism to $\Rev(D)$ in the case of a single item ($n=1$) that will be easier to analyze than $\Rev(D)$ itself. Below, a {\emph{posted-price mechanism} simply sets a price $p^j$ on the item for buyer $j$, and lets the lexicographically-first buyer $j$ whose value exceeds $p^j$ take the item and pay $p^j$ (we refer to a posted-price mechanism as \emph{anonymous} if all $p^j$ are i.i.d.).} {To properly use the result of~\cite{ChawlaHMS10} for arbitrary distributions, we will also let $p^j$ be a random variable. That is, the mechanism visits buyers one at a time in lexicographical order. When visiting buyer $j$, it draws the random variable $p^j$, and offers the item at price $p^j$ to buyer $j$.} Throughout this section, when discussing a random variable $X$, we will assume there exists an $x$ such that $\Pr[X \geq x] = c$ for any $c \in (0,1)$. If $X$ is continuous, clearly such an $x$ exists. If $X$ is discrete, we will ``make $X$ continuous'' by additionally drawing a tie-breaker uniformly from $[0,1]$ and attaching it to each draw of $X$ (and when comparing two draws from $X$ with the same value, we say the one with larger tie-breaker is larger).\footnote{So for example, if $X$ is a point-mass at $1$, then the value $1$ with tie-breaker $1/3$ is an $x$ satisfying $\Pr[X \geq x = 2/3]$.}

{We state two versions of their theorem below, which differ only in the choice of prices to set. The distinction is not necessary when all distributions $D_j$ are regular\footnote{A one-dimensional distribution is regular if $x - \frac{1-F(x)}{f(x)}$ is monotone non-decreasing} (that is, the prices in Theorem~\ref{thm:chms} can be taken to be deterministic if all $D_j$ are regular). For arbitrary distributions, however, the prices can either {be deterministic, as in Theorem \ref{thm:chms2}, or be randomized but also ensure that the probability of sale is at most half, as in Theorem \ref{thm:chms}.}
Theorem~\ref{thm:chms} will be used to prove Theorems~\ref{thm:previid} and~\ref{thm:previiditems}, and Theorem~\ref{thm:chms2} will be used to analyze our examples.}

\begin{theorem}[\cite{ChawlaHMS10}, version 1]\label{thm:chms} Let there be a single item ($n=1$) and $m$ buyers. Then there exists a {posted-price} mechanism that achieves expected revenue at least $\Rev(D)/2$. Moreover, if $p^1,\ldots, p^m$ denote the {(random variable)} prices used, then {$\sum_j \Pr[v^j \geq p^j]  \leq 1/2$}. Moreover, if the buyers are i.i.d., then {the random variables $\{p^j\}_{j \in [m]}$ are i.i.d. as well.}
\end{theorem}

\begin{theorem}[\cite{ChawlaHMS10}, version 2]\label{thm:chms2} Let there be a single item ($n=1$) and $m$ buyers. Then there exists a {posted-price} mechanism that achieves expected revenue at least $\Rev(D)/2$. Moreover, the prices $p^1,\ldots, p^m$ used are deterministic. Moreover, if the buyers are i.i.d., then $p^1,\ldots, p^m$ are identical.
\end{theorem}

{The following corollary simply observes that the revenue of a posted-price mechanism with (random) prices $p^1,\ldots, p^m$ is clearly upper-bounded by {$\sum_j \mathbb{E}_{p \leftarrow p^j}\left[p \cdot \Pr[v^j \geq p]\right]$.}}

\begin{corollary}\label{cor:chms}
	Let there be a single item ($n=1$) and $m$ buyers. Then there exist {random variables} $p^1,\ldots, p^m$ such that {$\sum_j \mathbb{E}_{p \leftarrow p^j}\left[p \cdot \Pr[v^j \geq p]\right] \geq \Rev(D)/2$}, and $\sum_j \Pr[v^j \geq p^j]  \leq 1/2$. Moreover, if the buyers are i.i.d., then {the random variables $\{p^j\}_{j \in [m]}$ are i.i.d. as well.}
\end{corollary}




\subsection{Proof of Theorems~\ref{thm:previid} and~\ref{thm:previiditems}}\label{app:iidproofs}
Recall that our proof outline first defines a set $S$ as \emph{separable} or \emph{bundlable}, based on whether we will target it with $\SRev(D)$ or $\BRev(D)$. We first begin by providing these definitions. 

\begin{definition}[Separable set] Say that a set $S$ is $\alpha$-separable for $D$ if $\SRev(D_S) \geq \alpha \cdot \BRev(D_S)$. 
\end{definition}

\begin{definition}[Bundlable set] Say that a set $S$ is $\beta$-bundlable for buyer $j$ and distribution $D$ if there exists a price $p$ such that $\Pr[\sum_{i \in S} v_{ij} \geq p] \geq 1/2$ and $p \geq \beta\cdot \BRev(D_S)$.
\end{definition}

We now show that $\SRev(D)$ covers revenue from separable sets, and $\BRev(D)$ covers revenue from bundlable sets. Importantly, the latter claim requires that it is the same buyer $j$ which witnesses that $S$ is $\beta$-bundlable \emph{for all $S$ that are not separable}. This aspect is what enables Theorem~\ref{thm:previid} when buyers are i.i.d. (because if the condition holds for one buyer, it holds for all of them), or when items are i.i.d. (as we will argue that there is a ``dominant buyer'' to whom we can restrict attention for all $S$) but not for arbitrary instances (see Proposition~\ref{prop:lb-prev-max} for a counterexample). 

\begin{lemma}\label{lem:separable} Let $\mathcal{S}$ be any collection of disjoint subsets of items such that $S$ is $\alpha$-separable for $D$ for all $S \in \mathcal{S}$. Then $\SRev(D) \geq \alpha \cdot \sum_{S \in \mathcal{S}} \BRev(D_S)$.
\end{lemma}
\begin{proof}
As $\SRev(D)$ sells all items, {$\mathcal{S}$ is a collection of disjoint subsets of items, and items have non-negative values,
}  we clearly have $\SRev(D) \geq \sum_{S \in \mathcal{S}} \SRev(D_S)$. By definition of $\alpha$-separability, we have that $\SRev(D_S) \geq \alpha \cdot \BRev(D_S)$ for all $S \in \mathcal{S}$, completing the inequality.
\end{proof}

\begin{lemma}\label{lem:bundlable} Let $\mathcal{B}$ be any collection of disjoint subsets of items such that there exists a buyer $j$ such that $S$ is $\beta$-bundlable for $j$ and $D$ for all $S \in \mathcal{B}$. Then $\BRev(D) \geq \frac{\beta}{4} \cdot \sum_{S \in \mathcal{B}} \BRev(D_S)$. 

\end{lemma}
\begin{proof}

Let $p_S$ denote the price promised by $\beta$-bundlability for set $S\in \mathcal{B}$.  Consider setting price $p := \sum_{S \in \mathcal{B}} p_S/2$ on the grand bundle of all items and running a posted-price mechanism with price $p$ (the same $p$ for all buyers). We claim that buyer $j$ will choose to purchase the grand bundle with probability at least $1/2$, and therefore $\BRev(D) \geq \sum_{S \in \mathcal{B}}p_S/4$. As we are promised that $p_S \geq \beta \cdot \BRev(D_S)$ for all $S \in \mathcal{B}$, this would imply that $\BRev(D) \geq \frac{\beta}{4} \cdot \sum_{S \in \mathcal{B}} \BRev(D_S)$.

It remains to establish that buyer $j$ will choose to purchase the bundle with probability at least $1/2$ at price $p$. Consider a random variable $V_S$ which is equal to $p_S$ whenever $\sum_{i\in S} v_{ij} \geq p_S$ and $0$ otherwise. Observe that $\sum_{i \in S} v_{ij}$ stochastically dominates $V_S$ (immediately from the definition of $V_S$: we defined it to first sample $\sum_{i \in S} v_{ij}$ and then lower it to either $p_S$ or $0$). Consider finally the random variable $W_S$ which is equal to $p_S$ with probability $1/2$ and $0$ otherwise. It is also clear that $V_S$ stochastically dominates $W_S$ for all $S\in \mathcal{B}$ (because $\beta$-bundability of $S$ guarantees that $V_S = p_S$ with probability at least $1/2$). Therefore $\sum_{i \in S} v_{ij}$ stochastically dominates $W_S$ as well. We proceed to analyze the random variable $W = \sum_{S \in \mathcal{B}} W_S$.

Observe that $W$ is \emph{symmetric about its mean}, which is $p$. That is, for all $x$ , $\Pr[W = p+x] = \Pr[W =p- x]$. To see this, couple draws $\langle W_S \rangle_{S \in \mathcal{B}}$ with $\langle p_S - W_S\rangle_{S \in \mathcal{B}}$. Observe that because $\Pr[W_S = 0] = 1/2 = \Pr[W_S = p_S]$ for all $S\in \mathcal{B}$, the two coupled draws are equally likely. Moreover, observe that for all coupled draws, $\sum_{S \in \mathcal{B}} W_S + \sum_{S \in \mathcal{B}} (p_S - W_S) = \sum_{S \in \mathcal{B}} p_S = 2p$. Therefore, all coupled draws have equal distance from $p$. As this coupling maps between outcomes with equal probability, and these outcomes are symmetric about $p$, the entire random variable $W$ is symmetric about $p$. We therefore conclude that $\Pr[W \geq p] = 1/2$, and as $\sum_i v_{ij} $ stochastically dominates $W$, we therefore get that $\Pr[\sum_i v_{ij} \geq p] \geq 1/2$. Therefore, selling the entire grand bundle at price $p$ results in revenue at least $p/2$ (even if only sold to buyer $j$).
\end{proof}

{Next, we establish that every set $S$ must be either separable, or bundlable for some buyer $j$ (Proposition \ref{prop:seporbund}). We will later use this to prove Theorem \ref{thm:previid} and Theorem \ref{thm:previiditems}, showing that  \emph{when either buyers or items are i.i.d.} there is an approximately-optimal partition mechanism and buyer $j$ such that \emph{every set in the partition that is not separable, is bundlable for that buyer $j$}.}

The remainder of this section is dedicated to proving this claim {and the theorems it implies}, and we begin with some setup and technical lemmas. First, recall by Theorem~\ref{thm:chms} that {we may relate $\BRev(D_S)$ to the revenue of a posted-price mechanism (because $\BRev(D_S)$ is just the revenue of the optimal single-item auction for the ``item'' $S$). We define the following notation below (which is used for the rest of this section):}

\begin{itemize}
\item Denote $p_S:= \BRev(D_S)/2$, we introduce this redundant notation for convenience.
\item $p^1_S,\ldots, p^m_S$ {denote}
the {random variables} guaranteed to exist by Theorem~\ref{thm:chms}, for distribution $D_S$.
\item {Denote} $q^j_S:= \Pr[\sum_{i \in S} v_{ij} \geq p^j_S]$. {Recall that both $\sum_{i\in S} v_{ij}$ and $p^j_S$ are random variables.}
\end{itemize}

Recall that {by Corollary \ref{cor:chms} }, $\BRev(D_S) \leq 2\cdot {\sum_{j \in [m]}\mathbb{E}_{p \leftarrow p_S^j}\left[p \cdot \Pr[\sum_{i \in S} v_{ij} \geq p]\right]}$, and $\sum_j q^j_S \leq 1/2$.

We will now fix attention to a single set $S$, and wish to understand, for a single buyer {$j$}, whether the events in which $\sum_{i \in S} v_{ij}$ is large are driven mostly by a few \emph{large}-{value} items, or several items with \emph{tiny} value (below $L$ stands for large, and $T$ for tiny).

\begin{itemize}
\item For each item $i$, let $t_i$ be such that $\Pr[v_i^* \geq t_i] = 1/2$ (i.e. $t_i$ is the median of the random variable $v_i^*$. Recall that $v_i^* := \max_{j \in [m]} \{v_{ij}\}$). 
\item Define $L_{ij}:={v_{ij} \cdot \mathbb{I}(v_{ij} \geq t_i)}$.
\item Define $T_{ij}:=v_{ij} \cdot \mathbb{I}(v_{ij} < t_i)$.
\item {Define $C_{ijp}:= \min\{L_{ij},p\}$ (think of this as $L_{ij}$ ``capped at $p$'').}
\item Let $R_i^j$ denote the optimal achievable revenue selling a single item to a single buyer with value distributed according to $L_{ij}$ (observe that $R^j_i=\max_{x \geq t_i}\{x \cdot \Pr[v_{ij} \geq x]\}$). For ease of notation, also define $R^j_S:=\sum_{i \in S} R_i^j$.
\end{itemize}


Next, we will conclude a few basic properties of the above-defined random variables. Importantly, we will start connecting the expectation and variance of these random variables to $\SRev(D_S)$.

\begin{lemma}\label{lem:srevS}$\SRev(D_S) \geq \sum_{i \in S} t_i/2$. Also, $\SRev(D_S) \geq \sum_{j\in [m]} R_S^j/2$.
\end{lemma}
\begin{proof}
To see the first bound, simply set a posted price of $t_i$ on item $i$, for all $i$. By definition of $t_i$, item $i$ will be purchased with probability exactly $1/2$, giving expected revenue $t_i/2$. Summing over all items gives the bound. 

To see the second bound,  use a posted-price mechanism which sets a price for buyer $j$ to purchase item $i$ of ${z_{ij}=}\arg\max_{x \geq t_i} \{x \cdot \Pr[v_{ij}\geq x]\}$. Then visit the buyers in arbitrary order and allow them to purchase any remaining items, offering an available item $i$ to buyer $j$ at its personalized price $z_{ij}$. Because for any buyer $j$ the price $z_{ij}$ of item $i$ is always at least $t_i$, this means that for any buyer $j$, item $i$ is available for buyer $j$ to purchase with probability at least $1/2$. {When item $i$ is available for buyer $j$, we get expected revenue exactly $R^j_i$ from selling item $i$ to buyer $j$. Therefore, the total revenue achieved from this scheme (which sells items separately) is at least $\sum_{i \in S} \sum_{j \in [m]} R^j_i/2 = \sum_{j \in [m]} R^j_S/2$.}
\end{proof}

\begin{corollary}\label{cor:tiny} For any buyer $j$, $\sum_{i \in S} T_{ij} \leq 2\cdot \SRev(D_S)$ with probability $1$. 
\end{corollary}
\begin{proof}
Each $T_{ij}$ is a random variable supported on $[0,t_i]$, so their sum is at most $\sum_{i \in S} t_i$, with probability $1$. Also, $\sum_{i \in S} t_i \leq 2\cdot \SRev(D_S)$, by Lemma~\ref{lem:srevS}.
\end{proof}

\begin{corollary}\label{cor:large} For any buyer $j$, $\Var(\sum_{i \in S} {C_{ijp}}) \leq 2\cdot {p} \cdot R^j_S$. 
\end{corollary}
\begin{proof}
Each $C_{ijp}$ is a random variable supported on $[0,{p}]$, whose optimal revenue is at most $R_i^j$ {(because $C_{ijp}$ is stochastically dominated by $L_{ij}$, whose optimal revenue is $R_i^j$)}. Therefore, by Lemma~\ref{lem:var}, $\Var({C_{ijp}}) \leq 2{p} R_i^j$. As all ${C_{ijp}}$ are independent, we have that $\Var(\sum_{i \in S} {C_{ijp}}) = \sum_{i \in S} \Var({C_{ijp}}) \leq 2{p}\sum_{i \in S} R_i^j = 2{p}R^j_S$.
\end{proof}

We now quickly apply Chebyshev's inequality using Corollary~\ref{cor:large}.

\begin{corollary}\label{cor:large2} {For any buyer $j$, set $S$, and price $p$: if\ \ $\mathbb{E}[\sum_{i \in S} C_{ijp}] \leq p/4$, then $\Pr[\sum_{i \in S} C_{ij} \geq p/2] \leq \frac{32R^j_S}{p}$. If instead $\mathbb{E}[\sum_{i \in S} C_{ijp}] \geq p/4$, then $\Pr[\sum_{i \in S} C_{ijp} \geq p/8] \geq 1-\frac{128 R^j_S}{p}$.}
\end{corollary}
\begin{proof} We apply Chebyshev's inequality to the random variable $\sum_{i \in S} {C_{ijp}}$. Corollary~\ref{cor:large} establishes that $\Var(\sum_{i \in S} {C_{ijp}}) \leq 2{p}R^j_S$, and therefore

$$\Pr\left[\left|\sum_{i \in S} {C_{ijp}} - \mathbb{E}\left[\sum_{i \in S} {C_{ijp}}\right]\right| > x\right] \leq \frac{2{p}R^j_S}{x^2}.$$

For the first case of the corollary statement, the deviation must be at least $x = {p}/4$, resulting in a bound of {$\frac{32R^j_S}{p}$}. For the second case to \emph{not} occur, the deviation must be at least $x = {p}/8$, resulting in a (lower) bound of {$1-\frac{128R^j_S}{p}$}.
\end{proof}

{Now, we state our two main technical propositions. We first show that a ``good'' (Definition~\ref{def:good} immediately below) buyer always exists when $S$ is not $(1/1024)$-separable. We later establish that $S$ is either separable, or bundlable for any good buyer. To get intuition how we will eventually leverage these when items or buyers are i.i.d., observe that when buyers are i.i.d., either all buyers are good, or none are good (and therefore by Proposition~\ref{prop:good}, they are all good whenever $S$ is not separable). On the other hand, when items are i.i.d., observe that whether a buyer is good depends only on $|S|$ (we will later use this to show that the \emph{same buyer} is good for every non-separable $S$ in the partition).}

\begin{definition}\label{def:good}A buyer $j$ is \emph{good for $S$} if there exists a $p \geq p_S$ such that $\mathbb{E}[\sum_{i \in S} C_{ijp}] \geq p/4$. 
\end{definition}

\begin{proposition}\label{prop:good} If $S$ is not $(1/1024)$-separable, there exists at least one buyer $j$ which is good for $S$.
\end{proposition}
\begin{proof}
We will make use of the fact that $\sum_{j \in \left[m\right]} \mathbb{E}_{p \leftarrow p^j_S}\left[p \cdot \Pr\left[\sum_{i \in S} v_{ij}\geq p\right]\right] \geq \BRev(D_S)/2 = p_S$ (by definition, as promised by Theorem~\ref{thm:chms}). We will derive a contradiction by showing that this sum is too small if no $j$ is good for $S$, and $S$ is not $(1/1024)$-separable.

We will now break the sum into two pieces, by expanding $\sum_{j \in \left[m\right]} \mathbb{E}_{p \leftarrow p^j_S}\left[p \cdot \Pr\left[\sum_{i \in S} v_{ij} \geq p\right]\right]$. By simply splitting into contributions from cases where $p$ is {smaller than $p_S$ and when it is larger,} we can write:

\begin{align*}
\sum_{j \in\left[m\right]}\mathbb{E}_{p \leftarrow p^j_S}\left[p \cdot \Pr\left[\sum_{i \in S} v_{ij} \geq p\right]\right] &= \sum_{j \in \left[m\right]}\mathbb{E}_{p \leftarrow p^j_S}\left[p \cdot \Pr\left[\sum_{i \in S} v_{ij} \geq p\right]\cdot \mathbb{I}(p \leq p_S)\right]\\
&\quad +\sum_{j \in \left[m\right]}\mathbb{E}_{p \leftarrow p^j_S}\left[p \cdot \Pr\left[\sum_{i \in S} v_{ij} \geq p\right] \cdot \mathbb{I}(p > p_S)\right]\\
\end{align*}

We now proceed to analyze these terms separately.

\begin{lemma}\label{lem:negS}$\sum_{j \in \left[m\right]}\mathbb{E}_{p \leftarrow p^j_S}\left[p \cdot \Pr\left[\sum_{i \in S} v_{ij} \geq p\right]\cdot \mathbb{I}(p \leq p_S)\right]\leq p_S/2$. 
\end{lemma}
\begin{proof}
The proof follows by simply expanding the sum:

\begin{align*}
\sum_{j \in \left[m\right]}\mathbb{E}_{p \leftarrow p^j_S}\left[p \cdot \Pr\left[\sum_{i \in S} v_{ij} \geq p\right]\cdot \mathbb{I}(p \leq p_S)\right]&\leq\sum_{j \in \left[m\right]} \mathbb{E}_{p \leftarrow p^j_S}\left[p_S \cdot \Pr\left[\sum_{i \in S} v_{ij} \geq p\right]\cdot \mathbb{I}(p \leq p_S)\right]\\
&\leq \sum_{j \in\left[m\right]}\mathbb{E}_{p \leftarrow p^j_S}\left[p_S \cdot \Pr\left[\sum_{i \in S} v_{ij} \geq p\right]\right]\\
&= \sum_{j \in \left[m\right]}p_S \cdot q_S^j\\
&\leq p_S/2.
\end{align*}

Indeed, the first inequality simply observes that whenever the indicator is non-zero, $p \leq p_S$. The second inequality upper bounds an indicator by $1$. The equality observes that $p_S$ is a constant, and that $q^j_S:=\Pr\left[\sum_{i \in S} v_{ij} \geq p^j_S\right]= \mathbb{E}_{p \leftarrow p^j_S}\left[\Pr\left[\sum_{i \in S} v_{ij} \geq p\right]\right]$. The final inequality makes use of the promise from Theorem~\ref{thm:chms} that $\sum_{j \in \left[m\right]} q^j_S \leq 1/2$. 
\end{proof}

Next, we transition to the second term in the sum. We will argue that if no $j$ are good for $S$, then the second term is small as well.

\begin{lemma}\label{lem:smallS} Assume that $S$ is not $(1/1024)$-separable, and assume further that {no buyer $j$ is good} for $S$. Then $\sum_{j \in \left[m\right]}\mathbb{E}_{p \leftarrow p^j_S}\left[p \cdot \Pr\left[\sum_{i \in S} v_{ij} \geq p\right] \cdot \mathbb{I}(p > p_S)\right] \leq \BRev(D_S)/16$.
\end{lemma}

\begin{proof}

We begin by observing that $v_{ij} = T_{ij} + L_{ij}$ (with probability $1$). Therefore, if we are to possibly have $\sum_{i \in S} v_{ij} \geq p$, we must also have either $\sum_{i \in S} T_{ij} \geq p/2$, or $\sum_{i \in S} L_{ij} \geq p/2$. By Corollary~\ref{cor:tiny}, however, we know that $\sum_{i \in S} T_{ij} \leq 2\cdot \SRev(D_S)$ with probability $1$. Therefore, we conclude that for any $p > 4\cdot \SRev(D_S)$, $\sum_{i \in S} v_{ij} \geq p \Rightarrow \sum_{i \in S} L_{ij} \geq p/2$. Observe even further that whenever $\sum_{i \in S} L_{ij} \geq p/2$, we must further have $\sum_{i \in S} C_{ijp} \geq p/2$ (because the only reason we would have $C_{ijp} \neq L_{ij}$ for any $i,j$ is if both are already 
at least $p$). 

In particular, this allows us to conclude that, \emph{whenever $p > 4\SRev(D_S)$}:\begin{equation}\label{eq:useful}\Pr\left[\sum_{i \in S} v_{ij} \geq p\right] \leq \Pr\left[\sum_{i \in S} C_{ijp} \geq p/2\right].\end{equation}

We can now use this to complete the following chain of inequalities:
\begin{align*}
\sum_{j \in \left[m\right]}\mathbb{E}_{p \leftarrow p^j_S}\left[p \cdot \Pr\left[\sum_{i \in S} v_{ij} \geq p\right] \cdot \mathbb{I}(p > p_S)\right] &\leq \sum_{j \in \left[m\right]} \mathbb{E}_{p \leftarrow p^j_S}\left[p \cdot \Pr\left[\sum_{i \in S} C_{ijp} \geq p/2\right] \cdot \mathbb{I}(p > p_S)\right]\\
&\leq \sum_{j \in \left[m\right]} \mathbb{E}_{p \leftarrow p^j_S}\left[p \cdot \frac{32R^j_S}{p}\cdot \mathbb{I}(p> p_S)\right]\\
&\leq \sum_{j \in \left[m\right]} 32 R^j_S\\
& \leq 64\cdot \SRev(D_S)\\
&\leq \BRev(D_S)/16.
\end{align*}

Indeed, the first line follows by Equation~\eqref{eq:useful} and the fact that $S$ is not $(1/1024)$-separable: because $S$ is not $(1/1024)$-separable, whenever the indicator is non-zero we have $p > p_S = \BRev(D_S)/2 {> 512 \cdot \SRev(D_S) >4 \cdot \SRev(D_S)}$, so we can use Equation~\eqref{eq:useful}. The second line follows directly from Corollary~\ref{cor:large2}, because of our hypothesis that no $j$ is good for $S$. The third follows by upper-bounding  an indicator random variable by $1$. The fourth follows directly from Lemma~\ref{lem:srevS}. The final inequality follows as $S$ is not $(1/1024)$-separable.
\end{proof}

Using Lemmas~\ref{lem:negS} and~\ref{lem:smallS} we can now complete the proof of Proposition~\ref{prop:good}. Assume for contradiction that $S$ is not $(1/1024)$-separable and also that {no buyer $j$ is good} for $S$. Then Lemmas~\ref{lem:negS} and~\ref{lem:smallS} immediately imply that:
\begin{align*}
\sum_{j \in \left[m\right]}\mathbb{E}_{p \leftarrow p^j_S}\left[p \cdot \Pr\left[\sum_{i \in S} v_{ij} \geq p\right]\cdot \mathbb{I}(p\leq p_S)\right]& +\sum_{j \in \left[m\right]}\mathbb{E}_{p \leftarrow p^j_S}\left[p \cdot \Pr\left[\sum_{i \in S} v_{ij} \geq p\right] \cdot \mathbb{I}(p > p_S)\right]\\
 &\leq \BRev(D_S)/4 + \BRev(D_S)/16\\
&< \BRev(D_S)/2.
\end{align*}

But Theorem~\ref{thm:chms} guarantees that $\sum_{j \in \left[m\right]}\mathbb{E}_{p \leftarrow p^j_S}\left[p \cdot \Pr\left[\sum_{i \in S} v_{ij} \geq p\right]\right] \geq \BRev(D_S)/2$, a contradiction.

The only assumptions necessary to reach a contradiction are that $S$ is not $(1/1024)$-separable, and that {no buyer $j$ is good} for $S$. So one of these assumptions must fail (and in particular, if $S$ is not $(1/1024)$-separable, it must be the latter one). 
\end{proof}

We now prove our second key technical proposition, which states that $S$ is either separable, or bundlable for any good $j$.

\begin{proposition}[Separable or Bundlable]\label{prop:seporbund} {Let buyer $j$ be} good for $S$. Then $S$ is either $(1/1024)$-separable for $D$, or $(1/16)$-bundlable for $j$ and $D$.

\end{proposition}
\begin{proof}
We assume throughout the proof that $S$ is not $(1/1024)$-separable (clearly, if this assumption is false, then $S$ is $(1/1024)$-separable and we are done). Let $p$ be {any} witness that $j$ is good for $S$. The proof will follow from the chain of inequalities below:
\begin{align*}
\Pr\left[\sum_{i \in S} v_{ij} \geq p/8\right] &\geq \Pr\left[\sum_{i \in S} C_{ijp} \geq p/8\right]\\
&\geq 1-\frac{128R^j_S}{p} \\
& \geq 1 - \frac{256 \cdot \SRev(D_S)}{\BRev(D_S)/2}\\
&\geq 1/2.
\end{align*}
The first inequality follows simply as $v_{ij}$ stochastically dominates $C_{ijp}$ for all $i,j,p$. The second follows from Corollary~\ref{cor:large2}, and the hypothesis that $j$ is good for $S$ (witnessed by $p$). The third follows as $R^j_S \leq \sum_{\ell\in [m]} R^\ell_S \leq 2\cdot \SRev(D_S)$ by Lemma~\ref{lem:srevS}, and $p \geq p_S = \BRev(D_S)/2$ by definition. The final inequality follows as $S$ is not $(1/1024)$-separable. Therefore, we have shown that buyer $j$ will purchase $S$ at price $p/8$ with probability at least $1/2$. As $p \geq  p_S = \BRev(D_S)/2$ it follows that $p/8 \geq \BRev(D_S)/16$, and thus $S$ is $(1/16)$-bundlable for $j, D$.
\end{proof}

Propositions~\ref{prop:good} and~\ref{prop:seporbund} are the key ingredients in Theorems~\ref{thm:previid} and~\ref{thm:previiditems}, which we now prove below.

	\begin{proof}[Proof of Theorem~\ref{thm:previid}] The theorem now follows by Propositions~\ref{prop:good} and~\ref{prop:seporbund} and Lemmas~\ref{lem:separable} and~\ref{lem:bundlable}. For the optimal partition of the items, let $\mathcal{S}$ denote the $(1/1024)$-separable sets, and $\mathcal{B}$ denote the $(1/16)$-bundlable sets for buyer $1$. {Because buyers are i.i.d., for all $S$ it is the case that {either every buyer is good for $S$, or no buyer is good for $S$.} Proposition~\ref{prop:good} proves that for all $S \notin \mathcal{S}$, all buyers are good for $S$. Propositions~\ref{prop:seporbund} then guarantees that all $S \notin \mathcal{S}$ are $(1/16)$-bundlable for all buyers (and in particular, buyer $1$). Therefore, $\mathcal{S}$ and $\mathcal{B}$ form a cover of the partition.} Lemma~\ref{lem:separable} guarantees that $\SRev(D) \geq \frac{1}{1024}\sum_{S \in \mathcal{S}} \BRev(D_S)$, and Lemma~\ref{lem:bundlable} guarantees that $\BRev(D) \geq \frac{1}{64} \sum_{S \in \mathcal{B}} \BRev(D_S)$. Therefore, either $\SRev(D)$ or $\BRev(D)$ guarantees a $(1/2048)$-approximation to $\sum_S \BRev(D_S) = \PRev(D)$. 
\end{proof}

\begin{proof}[Proof of Theorem~\ref{thm:previiditems}]
The theorem again follows from Propositions~\ref{prop:good} and~\ref{prop:seporbund}, and Lemmas~\ref{lem:separable} and~\ref{lem:bundlable}, although one extra step is needed. Consider the optimal partition of $[n]$ into $S_1 \sqcup\ldots \sqcup S_k$ and consider $\sum_i \BRev(D_{S_i})$. First, we observe that because the items are i.i.d., $\BRev(D_{S_i})$ depends only on $|S_i|$, and not the precise items in $S_i$ (exactly because the items are i.i.d.). Therefore, we will abuse notation and write $\BRev(x)$ instead of $\BRev(D_S)$ for any $S$ with $|S| = x$. Next, let $x$ be any value maximizing $\BRev(x)/x$ among all $x \in \{1,\ldots, n\}$. That is, bundling $x$ items together generates the most ``bang per buck'' (revenue per item bundled).  Then we get

$$\sum_i \BRev(D_{S_i}) \leq \sum_i |S_i| \cdot \BRev(x)/x \leq n \BRev(x)/x.$$

Therefore, if we instead partition the items into $\lfloor n/x\rfloor$ sets of size $x$, and one remaining set, we would get revenue at least:

$$\lfloor n/x \rfloor \cdot x \cdot \BRev(x)/x \geq (n/2) \cdot \BRev(x) /x \geq \sum_i \BRev(D_{S_i})/2.$$

The first inequality follows because $\lfloor n/x \rfloor \cdot x \geq n/2$ when $x \leq n$ (i.e., if $\geq n/2$ items are left over, it is because $x \leq n/2$, in which case another bundle can be made). Therefore, up to a factor of $2$, we may analyze $\lfloor n/x \rfloor \cdot x \cdot \BRev(x)$ instead. Importantly, observe that because items are i.i.d., and we are only considering bundles $S$ of the same size, each instance $S$ considered is identically distributed, {and therefore every buyer is either \emph{good for {every such set} $S$, or good for no $S$}. This observation, together with Proposition~\ref{prop:good}, implies that if no buyer is good for all $S$, then all $S$ are $(1/1024)$-separable. Otherwise, there is a buyer $j$ which is good for all $S$, and Proposition~\ref{prop:seporbund} states that either \emph{all $S$ are $(1/1024)$-separable} or \emph{all $S$ are $(1/16)$-bundlable} for $j$ (because all $S$ are identically-sized).} Whichever the case is, $\max\{\SRev(D),\BRev(D)\}$ guarantees at least a $(1/1024)$-approximation by Lemmas~\ref{lem:separable} and~\ref{lem:bundlable}, for a total of a $(1/2048)$-approximation after losing the factor of $2$ by restricting to the case of identically-sized $S$.

\end{proof}

\subsection{Analysis of Multi-buyer Lower Bounds}
The following lemma slightly improves concentration bounds for sums of i.i.d. Equal Revenue random variables provided in~\cite{HartN17}. 
\begin{lemma}\label{lem:er}
Let $X$ be the sum of $\ell$ i.i.d. random variables all drawn according to $\er_k$. Then:
\begin{itemize}
\item $\mathbb{E}[X] = \ell + \ell \ln(k)$.
\item $\Pr[|X - \ell - \ell \ln(k)|/(\ell + \ell \ln(k)) > \delta] \leq 2e^{-\delta^2\ell \ln(k)/(3k)}$. 
\end{itemize}
\end{lemma}
\begin{proof}
The expected value of a single draw from $\er_k$ is $\int_0^\infty 1-F(x)dx = \int_0^1 1dx + \int_1^k (1/x)dx + \int_k^\infty 0 = 1+\ln(k)$. So the expected value of $X$ is just $\ell$ times this.

To derive the second bullet, we simply apply the multiplicative Chernoff bound (see Appendix~\ref{sec:concentration} for statement). Observe that each individual draw from $\er_k$ is bounded in $[0,k]$, so after normalizing to lie in $[0,1]$, we get that the probability that the sum deviates by its expectation by more than a multiplicative factor of $\delta$ is at most $2e^{-\delta^2 (\ell +\ell \ln(k))/(3k)}$ (in the lemma statement we relax this by lower bounding $\ell \geq 0$ for cleanliness). 
\end{proof}


\begin{prevproof}{Proposition}{prop:lb-many-iid}
To prove the claim we show that $\Rev(D)\in \Omega(n\log n)$ while $\PRev(D)\in O(n)$ (actually, since $\SRev\in \Omega(n)$ it holds that $\PRev(D)\in \Theta(n)$).

To see that $\Rev(D)\in \Omega(n\log n)$ consider the mechanism that sequentially visits the $\sqrt{n}$ buyers,
allowing each to pick any set of $\sqrt{n}/2$ items that are still available, and pay $c(\sqrt{n}\log n)$ for some $c>0$ to be determined later.
For each of the first $\sqrt{n}/4$ buyers, at least $7n/8$ items are remaining, and with probability $1-e^{-\Omega(\sqrt{n})}$, the buyer has non-zero value for at least $\sqrt{n}/2$ of these items. This follows directly from the multiplicative Chernoff bound (Appendix~\ref{sec:concentration} for statement), as the expected number of items for which the buyer has non-zero value is at least $7\sqrt{n}/8$, and the buyer has non-zero value for each item independently. Conditioned on having non-zero value for {each of} at least $\sqrt{n}/2$ remaining items, pick an arbitrary set of remaining items of size $\sqrt{n}/2$ for which the buyer has non-zero value and call it $S$. Then the buyer's value for each item is $S$ is drawn from $\er_{n^{1/8}}$, and therefore by Lemma~\ref{lem:er} her expected value for this set is at least $\sqrt{n}\ln(n)/16$ (plugging in $\ell = \sqrt{n}/2$ items and $k = n^{1/8}$ and dropping the $+\ell$ term). Moreover, the probability that her value for the set falls below $\sqrt{n}\ln(n)/32$ is at most $e^{-\Omega(n^{3/8})}$ (again, just plugging into Lemma~\ref{lem:er}). Therefore, if we set $c = 1/32$, then no matter what happens for the first $j-1$ buyers, buyer $j\leq \sqrt{n}/4$ will choose to purchase with probability at least $1/2$ (in fact, much closer to $1$ than this), just over the randomness in drawing $v^j$ from $D^j$. This means that our expected revenue from the first $\sqrt{n}/4$ buyers is at least $\sqrt{n}/4 \cdot 1/2 \cdot c \sqrt{n} \ln(n) = n \ln (n) / 256$. So $\Rev(D) \in \Omega(n\log n)$. 

To see that $\PRev(D)\in O(n)$ consider any partitioning mechanism into sets $S_1,\ldots, S_x$, and denote by $s_k = |S_k|$. Denote by $\BRev(k)$ and $\SRev(k)$ the revenue obtained by bundling together and selling separately the items in $S_k$, respectively. Note that $\sum_k s_k = n$ and also that $\PRev(D) = \sum_k \BRev(k)$. 

We first analyze all $k$ such that $s_k \leq n^{1/4}$. In this case, observe that the probability that a fixed buyer has non-zero value for at least $12$ items can be upper bounded by taking a union bound over all $ \binom{s_k}{12}\leq n^{(1/4)\cdot 12} = n^3$ subsets of $S_k$ of size $12$ of the probability that the buyer has non-zero value for this entire subset $S_k$ (which is just $1/\sqrt{n}^{12} = 1/n^6$ as each event is independent). Taking another union bound over all $\sqrt{n}$ buyers yields that the probability that \emph{any} buyer has non-zero value for more than $12$ items is at most $n^{3-6+1/2} = n^{-5/2}$. So even if our mechanism achieved the maximum possible welfare from $S_k$ ($s_k \cdot n^{1/8} \leq n^{3/8}$) whenever this occurred, the total contribution to the revenue would still be $o(1)$. To finish analyzing $k$ such that $s_k \leq n^{1/4}$, it remains to analyze the case when every buyer has non-zero value for $12$ or fewer items in part $k$.

When every buyer has non-zero value for $12$ or fewer items in part $k$, we show that $24 \SRev(k) \geq \BRev(k)$. To see this, let $p$ be the anonymous posted-price promised by Theorem~\ref{thm:chms2} such that $p \cdot \Pr[\text{exists a buyer who values the bundle above $p$}] \geq \BRev(k)/2$.  Consider instead the anonymous price mechanism that sets price $p/12$ for each item separately. Then clearly, whenever some buyer is willing to pay $p$ for the grand bundle, and only $12$ of these items provide non-zero value, she is also willing to pay $p/12$ for at least one item separately. So we get that $24 \SRev(k) \geq \BRev(k)$. Finally, a trivial upper bound on $\SRev(k)$ is the revenue obtainable by selling goods separately without any supply constraint (i.e. the seller has infinitely many copies of each good instead of just one). For a single item and single buyer, the optimal price is $1$ (or any value between $1$ and $n^{1/8}$), and generates revenue $1/\sqrt{n}$. So the total revenue for selling a single item separately to $\sqrt{n}$ buyers is at most $1$, and therefore $\SRev(k) \leq s_k$. So we have now shown that $\BRev(k) \leq 24s_k + o(1)$ whenever $s_k \leq n^{1/4}$. It remains now to consider parts where $s_k > n^{1/4}$.

Next we analyze the case where $s_k > n^{1/4}$. In this case, we will argue that with high probability, no buyer has large value for the bundle $S_k$ (implying that $\BRev(k)$ must be small). To see this, observe that the number of items for which buyer $i$ has non-zero value is a sum of $s_k$ independent random variables with expectation $s_k / \sqrt{n}$. The multiplicative Chernoff bound for large deviations (Appendix~\ref{sec:concentration} for statement) implies that with probability at most $e^{-\Omega(\ln^2 n)}$, the number of items for which buyer $i$ has non-zero value is at most $s_k\cdot \ln^2(n)/n^{1/4}$.\footnote{To see this, plug in $\varepsilon = \ln^2(n)\cdot n^{1/4}$. The resulting bound is $e^{-\Omega(\ln^2(n) n^{1/4}s_k/\sqrt{n})}$. Because $s_k \geq n^{1/4}$, this is $e^{-\Omega(\ln^2(n))}$.} So even if some buyer had the maximum possible value for every item she valued above $0$, her total value would be at most $2s_k \ln^2 (n) /n^{1/8} < s_k$. So the revenue contribution from cases where no buyer values more than $s_k\cdot \ln^2(n)/n^{1/4}$ items above $0$ is $ \leq s_k$. In the unlikely event that some buyer has non-zero value for many items, the maximum possible revenue is still at most $s_k\cdot n^{1/8}$. But this event occurs with probability $e^{-\Omega(\ln^2n)}$, and $s_k \cdot n^{1/8} \cdot e^{-\Omega(\ln^2n)} = o(s_k)$. So our total revenue is $\leq s_k$. 

So now we have shown that for all $k$, $\BRev(k) \leq 25s_k$, and therefore $\sum_k \BRev(k) \leq 25n$, and $\PRev(D) \in O(n)$. 
\end{prevproof}

\begin{prevproof}{Proposition}{prop:lb-prev-max}
%
First recall the example: $D$ has $n$ items and $m = \sqrt{n}$ buyers.  The items are partitioned into $\sqrt{n}$ disjoint sets of size $\sqrt{n}$, $S_1,\ldots, S_m$.  Each buyer $j$ has value $0$ for every item not in $S_j$, and value independently drawn from $\er_{n^{1/8}}$ for each item in $S_j$.  Our goal is to show that $\max\{\SRev(D), \BRev(D)\}\leq \PRev(D)/\Omega(\log n)$.

We note first that $\SRev(D) = n$, as each item has exactly one buyer with non-zero value, and the item can be sold to that buyer at any price $p \in [1,n^{1/8}]$ for expected revenue of $1$ (and higher revenue is not achievable). We further claim that $\BRev(D) \in O(\sqrt{n} \ln (n))$. To see this, we'll upper bound the revenue of any anonymous posted-price mechanism, and then apply Theorem~\ref{thm:chms2} to obtain a bound on $\BRev(D)$.

First observe that any posted-price mechanism with price $p > n^{5/8}$ for the grand bundle achieves revenue $0$, as the maximum possible value of any buyer for the grand bundle is $n^{5/8}$. Also, any price $p < 3 \sqrt{n} \ln (n)$ achieves revenue $O(\sqrt{n} \ln(n))$ (even if it sells with probability $1$). The remaining prices to rule out are $p \in [3\sqrt{n}\ln(n),n^{5/8}]$.

To this end, first conclude from Lemma~\ref{lem:er} (taking $\ell = \sqrt{n}, k = n^{1/8}$, and taking the bound $\ln(k)/3 \geq 1$) that for any buyer $j$, their value for the grand bundle exceeds $3\sqrt{n} \ln(n)$ with probability at most $e^{-n^{3/8}}$. Taking a union bound over all $\sqrt{n}$ buyers, we see that the probability that any buyer values the grand bundle above $3\sqrt{n} \ln(n)$ is at most $\sqrt{n} \cdot e^{-n^{3/8}} \leq e^{-n^{1/4}}$. 

As the probability that any buyer values the grand bundle above $3\sqrt{n}\ln(n)$ is at most $e^{-n^{1/4}}$ by the work above, we can immediately conclude that any anonymous posted-price mechanism with price $p \in [3\sqrt{n} \ln (n), n^{5/8}]$ achieves revenue at most $n^{5/8}\cdot e^{-n^{1/4}} = o(1)$.  Therefore, the revenue of any anonymous posted-price mechanism is $O(\sqrt{n} \ln(n))$. By Theorem~\ref{thm:chms2}, this implies that $\BRev(D) \in O(\sqrt{n} \ln(n))$ as well. We conclude that $\max\{\SRev(D), \BRev(D)\}\in O(n)$.

Finally, consider a partition mechanism that bundles each of the sets of size $\sqrt{n}$ separately and sells it to the interested buyer at price $\sqrt{n} \ln(n)/4$ (which sells with high probability by Lemma~\ref{lem:er}).  This mechanism gets a total revenue of  $\sqrt{n}\cdot \Omega(\sqrt{n}\log \sqrt{n}) = \Omega(n\log n)$.  We conclude that $\PRev(D) = \Omega(n\log n)$, and hence $\max\{\SRev(D), \BRev(D)\}\leq \PRev(D)/\Omega(\log n)$.
\end{prevproof}

\section{Omitted Proofs from Section~\ref{sec:one-cor}}
\label{app:one-cor}

\subsection{Proof of Theorem~\ref{thm:one-cor-srev-brev}}
Theorem~\ref{thm:one-cor-srev-brev} follows directly from Lemmas~\ref{lem:pointmass},~\ref{lem:symmetric}, and~\ref{lem:maincorrelated} below. We first present two helpful definitions.

\begin{definition} We say that an $n$-dimensional distribution $D$ is a \emph{point-mass in sum} distribution if there exists a $p$ such that when $\vec{v}$ is sampled from $D$, $\sum_i v_i = p$ with probability $1$.
\end{definition}

\begin{definition} We say that an $n$-dimensional distribution $D$ is \emph{symmetric} if all marginals $D_i$ are the same.
\end{definition}

\begin{lemma}\label{lem:pointmass} For any $n$-dimensional distribution $D$, there exists a point-mass in sum $n$-dimensional distribution $D'$ such that $\BRev(D')/\SRev(D') \geq \BRev(D)/\SRev(D)$.
\end{lemma}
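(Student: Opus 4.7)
The plan is to construct a specific point-mass in sum distribution $D'$ from $D$ via conditioning and rescaling. Let $p^*$ be an optimal bundle price for $D$, and set $q = \Pr_{\vec{v}\sim D}[\sum_i v_i \geq p^*]$, so that $\BRev(D) = p^* q$. (If $p^* = 0$ or $q = 0$ then $\BRev(D)=0$ and the inequality is trivial, so assume both are positive.) I will define $D'$ as follows: sample $\vec{v}\sim D$ conditional on $\sum_i v_i \geq p^*$, then output the scaled vector $\vec{v}' = (p^*/\sum_i v_i)\,\vec{v}$. By construction $\sum_i v'_i = p^*$ almost surely, so $D'$ is point-mass in sum.

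Next, I will lower-bound $\BRev(D')$ and upper-bound $\SRev(D')$. The bundle bound is immediate: posting price $p^*$ for the grand bundle sells with probability one under $D'$, so $\BRev(D') \geq p^*$. The core step is to show $\SRev(D') \leq \SRev(D)/q$. Since $\sum_j v_j \geq p^*$ after conditioning, the scaling factor $p^*/\sum_j v_j$ lies in $(0,1]$, giving the pointwise inequality $v'_i \leq v_i$. Hence for each item $i$ and each posted price $x$,
\[
\Pr_{D'}[v'_i \geq x] \;\leq\; \Pr_{D\,\mid\,\sum_j v_j \geq p^*}[v_i \geq x] \;=\; \frac{\Pr_{D}[v_i \geq x,\; \sum_j v_j \geq p^*]}{q} \;\leq\; \frac{\Pr_{D}[v_i \geq x]}{q}.
\]
Multiplying by $x$ shows that the per-item revenue curve under $D'$ is dominated by $1/q$ times the corresponding curve under $D$; optimizing over $x$ and summing over $i$ (single-buyer Myerson is a posted price per item) yields $\SRev(D') \leq \SRev(D)/q$.

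Combining the two bounds,
\[
\frac{\BRev(D')}{\SRev(D')} \;\geq\; \frac{p^*}{\SRev(D)/q} \;=\; \frac{p^* q}{\SRev(D)} \;=\; \frac{\BRev(D)}{\SRev(D)},
\]
which is what the lemma asks for. The main obstacle is identifying a construction that simultaneously (i) produces a point-mass in sum distribution, (ii) preserves the bundle price $p^*$, and (iii) does not inflate $\SRev$ by more than a factor of $1/q$; the combined conditioning/rescaling above is exactly what makes the two sides of the ratio line up. The one place where the verification requires any care is the single-item bound, since both conditioning and rescaling alter the marginal of $v_i$, and one must couple them via the pointwise inequality $v'_i \leq v_i$ before folding the conditional probability back into an unconditional one.
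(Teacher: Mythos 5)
Your proof is correct and follows essentially the same route as the paper's: condition on the grand bundle selling at the optimal price $p^*$, reduce values so the sum equals $p^*$ exactly, and show that $\SRev$ shrinks by at most a factor of $q$ so the two sides of the ratio line up. Your multiplicative rescaling is just a concrete instantiation of the paper's ``lower some values so that $\sum_i v_i = p$,'' and your per-item probability bound $\Pr_{D'}[v'_i \geq x] \leq \Pr_D[v_i \geq x]/q$ replaces the paper's exact identity $\SRev(D'') = q\,\SRev(D')$ for the intermediate mixture distribution.
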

\begin{proof}
Pick any instance $D$ whose optimal grand bundle price is $p$, and the grand bundle sells at price $p$ with probability $q$. We will transform $D$ into a point-mass in sum distribution $D'$ without decreasing the ratio $\BRev(D)/\SRev(D)$.

If $\vec{v}$ denotes a sample from $D$, then observe that we may modify $D$ to $D''$ such that $\BRev(D'') \geq \BRev(D)$ and $\SRev(D'') \leq \SRev(D)$ (and therefore $\BRev(D'')/\SRev(D'') \geq \BRev(D) / \SRev(D)$). Whenever $\sum_i v_i > p$, lower some values so that $\sum_i v_i = p$. Whenever $\sum_i v_i < p$, set all $v_i = 0$. It is clear that  $\BRev(D'') \geq \BRev(D)$, as the buyer is still willing to pay $p$ with probability $q$. It is also clear that $\SRev(D'') \leq \SRev(D)$ as we have only lowered the buyer's value for each item in a stochastically dominating way.

Next, define $D'$ to be the distribution that is exactly $D''$ conditioned on $\sum_i v_i = p$. Then $D''$ samples from $D'$ with probability $q$, and sets all values to $0$ otherwise. It is also clear that $\SRev(D'') = q\SRev(D')$, because whatever price is set for each item sells with probability exactly $q$ times the probability it sells when the buyer is drawn from $D'$ (because the buyer will never pay anything for the item if instead all values are $0$). Therefore, because $\BRev(D'') = qp$, and $\BRev(D') = p$, the two ratios are equal. That is: $\BRev(D'')/\SRev(D'') = \BRev(D')/\SRev(D')$. It is clear that $D'$ is a point-mass in sum distribution.
\end{proof}

\begin{lemma}\label{lem:symmetric} For any $n$-dimensional distribution $D$, there exists a symmetric $n$-dimensional distribution $D'$ such that $\BRev(D')/\SRev(D') \geq \BRev(D)/\SRev(D)$. If $D$ was point-mass in sum, then $D'$ is point-mass in sum as well.
\end{lemma}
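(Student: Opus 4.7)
The natural thing to try is symmetrization by averaging $D$ over coordinate permutations. Concretely, I would define $D'$ as the distribution that first draws $\vec{v} \sim D$ and then, independently, a uniformly random permutation $\sigma$ of $[n]$, and outputs $(v_{\sigma(1)}, \ldots, v_{\sigma(n)})$. By construction each marginal of $D'$ equals the uniform mixture $\bar{D} := \frac{1}{n} \sum_{j=1}^n D_j$, so $D'$ is symmetric. Moreover, $\sum_i v_i$ is invariant under permutation, so if $D$ is point-mass in sum (with value $p$), then so is $D'$.

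The plan is then to prove two claims: (i) $\BRev(D') = \BRev(D)$, and (ii) $\SRev(D') \leq \SRev(D)$. Claim (i) is immediate because the buyer's value for the grand bundle is $\sum_i v_i$, whose distribution is preserved by coordinate permutations, so the optimal take-it-or-leave-it price on the bundle, together with its acceptance probability, is identical under $D$ and $D'$.

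For (ii), recall that for a single buyer and a single item, Myerson's optimal mechanism is a posted price, so $\SRev(F) = \max_p p \cdot \Pr_{v \sim F}[v \geq p]$ for any one-dimensional distribution $F$. By symmetry of $D'$ the optimal item-pricing scheme sets the same price $p^*$ on every coordinate, hence
\begin{align*}
\SRev(D')
&= n \cdot \max_p \, p \cdot \Pr_{v \sim \bar{D}}[v \geq p] \\
&= \max_p \sum_{j=1}^{n} p \cdot \Pr_{v \sim D_j}[v \geq p] \\
&\leq \sum_{j=1}^{n} \max_{p_j} p_j \cdot \Pr_{v \sim D_j}[v \geq p_j] = \SRev(D),
\end{align*}
where the second equality uses $\bar{D} = \frac{1}{n}\sum_j D_j$ and the inequality is the standard sum-of-maxima bound.

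Combining (i) and (ii) gives $\BRev(D')/\SRev(D') \geq \BRev(D)/\SRev(D)$, and the point-mass-in-sum preservation was noted at the outset. I don't expect a real obstacle here: the only subtlety is being explicit that $\SRev$ for a single buyer reduces to posted pricing on each coordinate, which is what legitimizes interchanging the maximum with the sum in step (ii).
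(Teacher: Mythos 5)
Your proposal is correct and follows essentially the same route as the paper: symmetrize by a uniformly random coordinate permutation, note that each marginal of $D'$ is the mixture $\frac{1}{n}\sum_j D_j$ so that $\BRev$ is unchanged, and bound $\SRev(D')$ by $\SRev(D)$ via the max-of-sum versus sum-of-maxima interchange. The point-mass-in-sum preservation is handled identically.
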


\begin{proof} Define $D'$ in the following way: sample $\vec{v}$ from $D$, then randomly permute the components of $\vec{v}$ to form $\vec{v}'$. It's clear that $D'$ is symmetric. It's also clear that $\BRev(D) = \BRev(D')$. We just have to show that $\SRev(D') \leq \SRev(D)$. Let $D_i$ denote the $i^{th}$ marginal of $D$, $D'_j$ denote the $j^{th}$ marginal of $D'$, $v_i$ denote a sample from $D_i$, and $v'_j$ a sample from $D'_j$. Then $D'_j$ samples from each $D_i$ with probability $1/n$.

Now observe that $\SRev(D') = \sum_j \max_p \{p  Pr[v'_j \geq p]\}$. As each $D'_j$ samples each $D_i$ with probability $1/n$, we get that $Pr[v'_j \geq p] = \sum_i Pr[v_i > p]/n$. this means that we can rewrite $\SRev(D') = \sum_j \max_p \{p \sum_i Pr[v_i > p]/n\} = \max_p \{p \sum_i Pr[v_i > p]\}$. And observe also that $\SRev(D) = \sum_i \max_p \{ pPr[v_i \geq p]\}$. In other words, $\SRev(D')$ is exactly $\SRev(D)$ after swapping the order of the max and sum, which can only decrease $\SRev(D')$.
\end{proof}

\begin{lemma}\label{lem:maincorrelated} Let $D$ be any symmetric point-mass in sum distribution. Then $\BRev(D) \leq 5\ln (n) \SRev(D)$.
\end{lemma}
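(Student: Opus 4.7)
Because $D$ is point-mass in sum with constant $p$, the grand bundle sells for $p$ with probability $1$, so $\BRev(D) = p$. Because $D$ is symmetric with marginal $F$, and each $v_i \in [0,p]$, we have $\mathbb{E}_{v \sim F}[v] = p/n$ (by linearity applied to $\sum_i v_i = p$), and $F$ is supported on $[0,p]$. Moreover $\SRev(D) = n R$, where $R = \max_{q \ge 0} q\Pr_{v \sim F}[v \geq q]$ is the single-item Myerson revenue for marginal $F$. So it suffices to show $nR \geq p/(5\ln n)$ for all $n \geq 2$.

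The plan is to use the standard ``Markov'' inequality $\Pr_{v \sim F}[v \geq q] \leq R/q$ for every $q > 0$ (which holds since $q \Pr[v \geq q] \leq R$ by definition of $R$). Combining this with the support bound $v \leq p$, I integrate the tail representation of the mean:
\begin{align*}
\frac{p}{n} \;=\; \mathbb{E}_{v \sim F}[v] \;=\; \int_0^{p} \Pr[v \geq q]\,dq \;\leq\; \int_0^{R} 1\,dq + \int_{R}^{p} \frac{R}{q}\,dq \;=\; R\bigl(1 + \ln(p/R)\bigr).
\end{align*}
Setting $y = p/R$, this becomes $y \leq n\bigl(1+\ln y\bigr)$, equivalently $g(y) := y - n\ln y \leq n$.

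The final step is a purely analytic verification that $g(5n\ln n) \geq n$ for every $n \geq 2$, which (since $g$ is increasing on $[n,\infty)$ and $5n\ln n > n$) forces $y \leq 5n\ln n$, i.e., $R \geq p/(5n\ln n)$, giving $\SRev(D) = nR \geq p/(5\ln n) = \BRev(D)/(5\ln n)$. Expanding, $g(5n\ln n) = n\bigl(4\ln n - \ln 5 - \ln\ln n\bigr)$, so the required inequality reduces to $4\ln n \geq 1 + \ln 5 + \ln\ln n$, which I check by hand at $n=2$ (about $2.77 \geq 2.24$) and then note that $4\ln n - \ln\ln n$ is increasing for $n \geq 2$, so it holds throughout.

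I expect the only even mildly tricky part to be the final numerical check; everything else is a direct application of the Markov-style bound for Myerson's revenue together with the symmetry and point-mass-in-sum assumptions, which pin down the marginal mean and support exactly.
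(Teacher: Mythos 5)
Your proof is correct and takes essentially the same route as the paper's: both use the Markov-type bound $\Pr[v\geq q]\leq R/q$ on the common marginal, integrate the tail over the support $[0,p]$ to get $p \leq nR(1+\ln(p/R))$ (the paper normalizes $R=1$ first), and then solve this inequality to conclude $p \leq 5nR\ln n$. Your explicit verification of the final numerical step is slightly more careful than the paper's, which simply asserts that all $p > 5n\ln n$ violate the inequality.
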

\begin{proof}
Without loss of generality, scale $D$ down so that $\SRev(D) = n$. We are essentially asking how large $\Welfare(D)$ can possibly be subject to $\SRev(D) = n$ (as $\Welfare(D) = \BRev(D)$ for point-mass in sum distributions), plus the symmetric point-mass in sum constraint. Denote $p=\Welfare(D)$.

Note that each $D_i$ is supported on $[0,p]$, and has expected revenue $1$ (because $\SRev(D) = n$ and $D$ is symmetric). So:

$$\Welfare(D_i) = \int_0^p Pr[v_i > x]dx \leq \int_0^1 dx + \int_1^p (1/x) dx = 1+\ln p$$

We now observe that we have two estimates of $\Welfare(D)$. First, we know that $\Welfare(D) = p$. And second, we know that $\Welfare(D) = \sum_i \Welfare(D_i) \leq n + n\ln p$. Putting these together, we get that $p$ must satisfy:

$$p \leq n + n \ln p$$

For all $n \geq 2$, this implies that $p \leq 5n\ln n$ (as all $p > 5n \ln n$ violate the above inequality).
\end{proof}

\subsection{Proof of Proposition~\ref{prop:lb-cor}}
\begin{prevproof}{Proposition}{prop:lb-cor}
Partition $n$ into $\ln n$ sets $S_1, \dotsc, S_{\ln n}$, where $|S_k| = n/(\ln n)$ for each $k$. Construct $D$ as follows: for each set $S_k$ independently, with probability $1-n^{-2k}$ the value of every item is 0, and with probability $n^{-2k}$ the value of each item is $n^{2k}$ times an independent draw from $\er_{n^{1/8}}$. 
It is clear that $\SRev(D)=n$, since the optimal expected revenue of each item is $1$.
To prove the claim we show that $\BRev(D) \in O(n)$ and that $\PRev(D) \in \Omega(n\ln n)$.

To bound $\BRev(D)$ from above, consider the revenue generated by a grand-bundle price of $p$.  If $p \leq n$ then this revenue is at most $n$, so assume $p > n$.  Choose $\ell \geq 0$ such that $p \in (n^{2\ell}, n^{2\ell + 2}]$.
First, observe that clearly the total value for $S_1 \sqcup \ldots \sqcup S_{\ell-1}$ is at most $n^{2\ell}$, as there are only $n$ items, and the value for each item is no more than $n^{2\ell-15/8}$. Next, observe that with probability at least $1-2n^{-2\ell-2}$, no $S_k$ for $k > \ell$ contributes any value. This is because by the union bound, the probability that any $S_k$, $k > \ell$, contributes non-zero value is at most $\sum_{k > \ell} n^{-2\ell} \leq 2n^{-2\ell-2}$. Together, this means that for any price $p \in (n^{2\ell}, n^{2\ell+2}]$, the probability of sale is completely determined, up to $\pm 2n^{-2\ell-2}$, by the probability that the value for $S_\ell$ exceeds $p - n^{2\ell}$. So we now determine the probability that the value for $S_\ell$ exceeds $p - n^{2\ell}$. 

Here, we make use of Lemma~\ref{lem:er} to claim that the buyer's value for $S_\ell$ concentrates around $n^{2\ell+1}$ whenever it is non-zero. To see this, observe that the buyer's value, when non-zero, is $n^{2\ell}$ times sum of $n/\ln n$ i.i.d. draws from $\er_{n^{1/8}}$. By Lemma~\ref{lem:er}, the probability that the sum (without the $n^{2\ell}$ multiplier) exceeds $3n$ is therefore $e^{-\Omega(\sqrt{n})}$, and the probability that the value for the bundle exceeds $3n^{2\ell+1}$ is also at most $e^{-\Omega(\sqrt{n})}$. Therefore, the probability that the bundle sells at any price $p> 3n^{2\ell+1}+n^{2\ell}$ is at most $e^{-\Omega(\sqrt{n})}+2n^{-2\ell-2}$, and therefore any such $p \leq n^{2\ell+2}$ generates expected revenue $O(1)$. Moreover, for any $p \in (2n^{2\ell}, 3n^{2\ell+1}+n^{2\ell})$, the probability of sale is at most $2n^{-2\ell-2}+n^{-2\ell}$, as the buyer must have non-zero value for \emph{some} bundle $S_k$, $k \geq \ell$. So such a price would generate revenue $O(n)$. As this argument holds for any $\ell$, we have shown that $\BRev(D) \in O(n)$.

Now we wish to show that $\PRev(D) \in \Omega(n \ln n)$. Consider the mechanism that partitions the items into $S_k$, and sets price $n^{2k+1} \cdot \ln n/2$ on $S_k$. By Lemma~\ref{lem:er}, conditioned on having non-zero value for bundle $S_k$, the buyer's value exceeds $n^{2k} \cdot (n/\ln n) \cdot \ln(n) /2$ with probability $1-e^{-\Omega(\sqrt{n})}$.  Thus the total revenue of this mechanism, and hence $\PRev(D)$, is at least $\ln(n) \cdot (n^{-2k}) \cdot (n^{2k+1} \cdot \ln (n) /2) = \Omega(n \ln n)$.
%
%
%
\end{prevproof}

\section{Computational Considerations}
\label{app:polytime}

Our main result, Theorem \ref{thm:main}, shows that $6 \cdot \max\{\SRev(D), \BRev(D)\} \geq \Rev(D)$ for a single buyer.  This suggests a simple mechanism that obtains a constant approximation to the optimal revenue: estimate $\SRev(D)$ and $\BRev(D)$, then run whichever of the two mechanisms obtains higher revenue estimate. 
In this section we argue that a slight modification of this approach can be implemented in polynomial time, given appropriate access to the distribution $D$.

We will assume that we are given a sample access to the distributions $\{D_i\}_i$. When each $D_i$ is regular,\footnote{A one-dimensional distribution is regular if $x - \frac{1-F(x)}{f(x)}$ is monotone non-decreasing.} one sample in fact suffices to find a price $p$ that guarantees revenue $\Rev(D_i)/2$~\cite{DhangwatnotaiRY15}, and $\poly(1/\varepsilon)$ suffice to find a price that guarantees $(1-\varepsilon)\cdot \Rev(D_i)$~\cite{ColeR14}. Unfortunately, sample access alone can be insufficient even when there is just $n=1$ item for arbitrary distributions (consider for instance a distribution that is $x$ with probability $1/2^{2^n}$ and $0$ otherwise, for some unknown $x$ --- there is no hope of learning a good price from few samples). In such cases we thus make the additional weak assumption that for each $D_i$ we also have access to a price $p_i$ guaranteeing revenue at least $\alpha\cdot \Rev(D_i)$, and the probability $q_i$ that $v_i \geq p_i$. Calculating $\SRev(D)$ (up to a factor of $\alpha$) and implementing a mechanism that sells items separately (and guarantees a $\alpha$-approximation to $\SRev(D)$) is then trivial: simply set the price $p_i$ for item $i$.

Given that we can compute and implement $\SRev(D)$, what we would like is to also estimate $\BRev(D)$ and compute an approximately optimal price for the grand bundle.  One approach would be to take samples from the distribution $D$, then optimize revenue for the observed empirical distribution.  
However, this strategy again suffers from the sample-complexity issues described above: the number of samples required to estimate $\BRev(D)$ might, in principle, be quite large (again, certain items may have exponentially large value with exponentially small probability).

Instead, recall that the \emph{only} reason for $\SRev(D)$ to not itself yield a 6-approximation to $\Rev(D)$ is if $\Welfare(D^C_\emptyset)$ concentrates around its expectation, in which case there exists a price $p^* = 2\Welfare(D^C_\emptyset)/5$ for the grand bundle that sells with probability at least $47/72$. Importantly, we conclude that if $\SRev(D)$ is not a 6-approximation, then some price $p^*$ for the grand bundle sells with probability at least $47/72$ and guarantees a 6-approximation.

So from here, we don't actually need to evaluate $\BRev(D)$, we just need to estimate the revenue guaranteed by any bundle price that sells with probability at least $47/72$. To test this, we simply take $\ln(1/\delta)/\varepsilon^2$ samples from each distribution, and let $G(p)$ denote the fraction of samples that exceed price $p$. The Dvoretsky-Kiefer-Wolfowitz inequality then immediately guarantees that except with probability $2/\delta^2$, $G(p)$ is within $\pm \varepsilon$ of the probability that the buyer's value for the grand bundle exceeds $p$. In particular, for any price $p$ that (really) sells with probability $q \geq 47/72$, we will have $G(p) \geq q-\varepsilon \geq q(1-2\varepsilon)$. Therefore, if we simply enumerate over all prices $p$ with $G(p) \geq 47/72 - \varepsilon$, we will find (with probability at least $1-1/\delta^2$) the best price for the grand bundle among all prices that sell with probability at least $47/72$. Moreover, we know (up to a factor of $(1-2\varepsilon)$, with probability at least $1-1/\delta^2$) the revenue generated by this price, so we can simply compare it to $\sum_i p_i q_i$ and pick whichever is better.

To summarize, we have shown that the following algorithm, with probability at least $1-\delta$, obtains a $6\alpha$-approximation in time $\poly(\ln(1/\delta),1/\varepsilon,n)$, for any $\alpha \geq (1+2\varepsilon)$.
\begin{enumerate}
\item Given as input $p_i, q_i$ such that item $i$ sells with probability $q_i$ at price $p_i$ and that $p_i q_i \geq \alpha\Rev(D_i)$.
\item Take $\ln(1/\delta)/\varepsilon^2$ samples from each $D_i$, and index them by $v_{i\ell}$. For all $\ell$, observe that $\sum_{i} v_{i\ell}$ form indepndent samples from the buyer's value for the grand bundle (so now we have $\ln(1/\delta)/\varepsilon^2$ samples from the buyer's value for the grand bundle). 
\item Let $p^*= \arg\max_{p, G(p) \geq 47/72-\varepsilon} \{p \cdot G(p)\}$. Note that $p^*$ requires enumerating over only the $\ln(1/\delta)/\varepsilon^2$ samples from Step 2. 
\item If $p^* \cdot G(p^*) \geq \sum_i p_i q_i$, bundle the items together at price $p^*$. Otherwise, sell the items separately at prices $\vec{p}$.
\end{enumerate}

\section{List of Concentration Inequalities}\label{sec:concentration}
This section contains a list of concentration inequalities used throughout the paper.
\begin{theorem}[Chebyshev's Inequality] For any random variable $X$, $\Pr[|X - \mathbb{E}[X]| \geq k] \leq \frac{\Var(X)}{k^2}$.
\end{theorem}

\begin{theorem}[Multiplicative Chernoff Bound] If $X_1,\ldots, X_n$ are independent random variables supported on $[0,1]$, and $X:= \sum_i X_i$, then for any $\varepsilon \in (0,1]$, $\Pr[|X - \mathbb{E}[X]| \geq \varepsilon \cdot \mathbb{E}[X]] \leq e^{-\varepsilon^2 \mathbb{E}[X]/3}$.
\end{theorem}

\begin{theorem}[Multiplicative Chernoff Bound for large deviations] If $X_1,\ldots, X_n$ are independent random variables supported on $[0,1]$, and $X:=\sum_i X_i$, then for any $\varepsilon \geq 1$, $\Pr[X - \mathbb{E}[X]| \geq \varepsilon\cdot \mathbb{E}[X]] \leq e^{-\varepsilon \mathbb{E}[X]/3}$. 
\end{theorem}

\end{document}